\documentclass[journal, onecolumn]{IEEEtran}
\usepackage{epsfig}
\usepackage{times}
\usepackage{float}
\usepackage{afterpage}
\usepackage{amsmath}
\usepackage{amstext}
\usepackage{amssymb,bm}
\usepackage{latexsym}
\usepackage{color}
\usepackage{graphicx}
\usepackage{amsmath}
\usepackage{amsthm}
\usepackage{graphicx}

\usepackage{subfigure}
\usepackage{booktabs}
\usepackage{multicol}
\usepackage{lipsum}
\usepackage{dblfloatfix}
\usepackage{mathrsfs}
\usepackage[textsize=tiny]{todonotes}
\allowdisplaybreaks

\usepackage{cite}
\usepackage{tikz}
\usepackage{pgfplots}
\pgfplotsset{compat=1.12}

\pgfplotsset{compat=1.18}

\newcommand{\rmd}{\mathrm{d}}
\newcommand{\bbE}{\mathbb{E}}\newcommand{\rme}{\mathrm{e}}

\newcommand{\rmi}{\mathrm{i}}

\newcommand{\bbP}{\mathbb{P}}

\newcommand{\bbR}{\mathbb{R}}

\newcommand{\bfA}{\mathbf{A}}
\newcommand{\bfB}{\mathbf{B}}

\newcommand{\bfI}{\mathbf{I}}
\newcommand{\bfJ}{\mathbf{J}}
\newcommand{\bfK}{\mathbf{K}}

\newcommand{\bfm}{\mathbf{m}}

\newcommand{\bfP}{\mathbf{P}}

\newcommand{\bfs}{\mathbf{s}}
\newcommand{\bft}{\mathbf{t}}
\newcommand{\bfU}{\mathbf{U}}\newcommand{\bfu}{\mathbf{u}}
\newcommand{\bfv}{\mathbf{v}}
\newcommand{\bfW}{\mathbf{W}}\newcommand{\bfw}{\mathbf{w}}
\newcommand{\bfX}{\mathbf{X}}\newcommand{\bfx}{\mathbf{x}}\newcommand{\sfX}{\mathsf{X}}
\newcommand{\bfY}{\mathbf{Y}}\newcommand{\bfy}{\mathbf{y}}\newcommand{\sfY}{\mathsf{Y}}
\newcommand{\bfZ}{\mathbf{Z}}

\newcommand{\cI}{\mathcal{I}}

\newcommand{\cN}{\mathcal{N}}

\newcommand{\cX}{\mathcal{X}}
\newcommand{\cY}{\mathcal{Y}}

\newtheoremstyle{mystyle}
{}
{}
{\itshape}
{}
{\bfseries}
{}
{.5em}
{}

\newtheoremstyle{remark}
{}
{}
{}
{}
{\itshape}
{}
{.5em}
{}

\makeatletter
\def\thmhead@plain#1#2#3{%
	\thmname{#1}\thmnumber{\@ifnotempty{#1}{ }\@upn{#2.}}%
	\thmnote{ \small\textsf{\the\thm@notefont\textit{#3}.}}}
\let\thmhead\thmhead@plain
\makeatother

\theoremstyle{mystyle}
\newtheorem{theorem}{Theorem}
\theoremstyle{mystyle}
\newtheorem{lemma}{Lemma}
\theoremstyle{mystyle}
\newtheorem{prop}{Proposition}
\theoremstyle{mystyle}
\newtheorem{corollary}{Corollary}
\theoremstyle{mystyle}
\theoremstyle{remark}
\newtheorem{rem}{Remark}
\theoremstyle{mystyle}
\theoremstyle{mystyle}
\theoremstyle{mystyle}
\theoremstyle{discussion}
\theoremstyle{mystyle}
\theoremstyle{mystyle}

\DeclareMathOperator{\Cov}{Cov}
\DeclareMathOperator{\Tr}{Tr}
\DeclareMathOperator{\mmse}{mmse}
\DeclareMathOperator{\supp}{supp}

\usepackage{enumitem}

\begin{document}

\title{
Fisher Information and Support Complexity of Gaussian Mixtures}

\author{Luca~Barletta,~\IEEEmembership{Member,~IEEE,} and Alex~Dytso,~\IEEEmembership{Senior Member,~IEEE}
\thanks{Luca Barletta is with the Dipartimento di Elettronica, Informazione e Bioingegneria, Politecnico di Milano, 20133 Milano, Italy (e-mail: luca.barletta@polimi.it).}
\thanks{
Alex Dytso is with Qualcomm Flarion Technology, Inc., Bridgewater, NJ 08807, USA (e-mail: odytso2@gmail.com).}
}

\maketitle

\begin{abstract}
We study lower bounds on the Fisher information of Gaussian location mixtures and their applications to estimation problems. First, for a discrete mixing distribution, we derive a universal positive-semidefinite lower bound on the Fisher information matrix in terms of the Rényi entropy of order two. For a mixing distribution supported on $K$ points, this bound scales as $\log(K)/K^2$, improving by a logarithmic factor on the support-size dependence obtained from Stam's inequality. We then develop dimension-aware trace bounds using classical Sobolev inequalities. These bounds are expressed in terms of Sibson's $\alpha$-mutual information and yield the scaling $K^{-2/n}$ in dimension $n$. A complementary bound involving the $\alpha$-mutual information of order $1/2$ is obtained using the sharp Young convolution inequality.

The derived inequalities are further specialized to rotationally invariant mixing distributions supported on discrete spherical shells. In this setting, an intrinsic-dimensional bound on Sibson mutual information leads to lower bounds on the Fisher information in terms of the number of occupied shells. As applications, we study least favorable priors for bounded Gaussian mean estimation. In the scalar case, we show that the least favorable prior on $[-R,R]$ has at least $\Omega(R\sqrt{\log R})$ support points and that each of its probability masses is of order at most $1/R$. In the vector case, we show that the number of occupied shells grows at least linearly with $R$. Finally, Brown's identity and the I-MMSE relationship are used to translate the Fisher-information inequalities into upper bounds on the MMSE and mutual information.
\end{abstract}

\begin{IEEEkeywords}
Fisher information, Gaussian mixtures, R\'enyi entropy, MMSE, finite
constellations, least favorable prior, Sobolev inequality.
\end{IEEEkeywords}

\section{Introduction}\label{sec:intro}
Consider the additive Gaussian model
\begin{equation}
Y=X+Z,
\qquad
Z\sim\cN(0,1),
\label{eq:intro_scalar_model}
\end{equation}
where $\cN(0,1)$ denotes the standard normal law and $X$ has a distribution $P_X$ and is
independent of $Z$.  The observation $Y$ is then a Gaussian location mixture. The distribution $P_X$ is often known as \emph{mixing} distribution. 
 One of the goals of this paper  is to study a question of how small the Fisher information of such a mixture can be when the mixing distribution is concentrated on only a few points.

 One way to study this question is by invoking Stam's inequality \cite{stam1959some}. We write $J$
for Fisher information, $h$ for differential entropy, $H$ for the Shannon
entropy of a discrete random variable, and $I$ for mutual information.  Assume that $X$ is a discrete random variable, then Stam's inequality gives us
\begin{align}
J(Y)
&\ge
2\pi\rme\,\exp(-2h(Y))
\\
&=
\exp(-2I(X;Y))
\\
&\ge
\exp(-2H(X)).
\label{eq:intro_stam_entropy}
\end{align}
Hence, small Fisher information requires large Shannon entropy of the mixing
variable.  In particular, if $X$ has at most $K$ mass points, then
$H(X)\le\log K$ and
\begin{equation}
J(Y)\ge K^{-2}.
\label{eq:intro_K_stam}
\end{equation}
Thus a very smooth Gaussian mixture (i.e., a mixture with small Fisher information) cannot be generated by a very small number of components. 

The main theme of this paper is to investigate inequalities of the type in \eqref{eq:intro_stam_entropy} and \eqref{eq:intro_K_stam}. Indeed,  the main scalar result of this paper strengthens this principle.  Let $H_p$
denote Rényi entropy of order $p$.  We show that there exists a universal
constant $c>0$ such that the collision structure of a discrete prior forces
\begin{equation}
J(X+Z)
\ge
c\,H_2(X)\exp(-2H_2(X)).
\label{eq:intro_scalar_new}
\end{equation}  Consequently, a $K$-point mixture satisfies, for another universal
constant $c'>0$,
\begin{equation}
J(X+Z)
\ge
c'\frac{\log K}{K^2}.
\label{eq:intro_scalar_K_new}
\end{equation}
In the scalar case we will also show that 
\begin{equation}
    J(X+Z)
\ge c'' \exp(-2 H_\infty(X) )=c''  \left(\max_x P_{X}(\{x\}) \right)^2
\end{equation}
which, when the support size is fixed,  provides a quantitative indication that the distribution of $X$ is close to uniform.

\subsection{Literature Review}
The connection between entropy and Fisher information goes back at least to Stam's seminal work \cite{stam1959some}. Stam's inequality is known to be equivalent to the Gaussian logarithmic Sobolev inequality; the interested reader is referred to \cite{raginsky2013concentration} and the references therein. Extensions of Stam's inequality that replace differential entropy with R\'enyi entropy can be found in \cite{bobkov2023entropic,wu2025entropic}. In particular, we use the results of \cite{bobkov2023entropic} to derive some of our bounds.  The interested reader is also referred to \cite{lutwak2012extensions} for yet another extension of Stam's inequality.

Inequalities relating Fisher information to the characteristic function play a crucial role in our derivations. In particular, we will show the vector generalization of the scalar inequality established in \cite{zhang2007inequalities}, which lower-bounds Fisher information in terms of the characteristic function. Data processing inequality for the Fisher information  will also play an important role; the interested reader is referred to~\cite{zamir1998proof} for the derivation of this inequality and its application to entropy power inequality. While the present work focuses on nonparametric Fisher information,  inequalities connecting parametric Fisher information to characteristic functions can also be found in \cite{dulek2017characteristic}.

Gaussian mixtures play crucial direct or indirect role in problems studied in information theory and estimation theory. In the context of channel coding, Gaussian mixtures have found applications in the context of point-to-point channels \cite{wu2010impact,alvarado2014high,GaussianBoundsCard,perezcruz2010mimo,Wang_AMplitude}, interference channel \cite{dytso2016interference,CodebookPaper}, and channel modeling \cite{selim2015modeling}  just to name a few. Bounds on entropy of Gaussian mixtures have also received attention and the interested reader is referred to~\cite{eskenazis2024entropy,melbourne2022differential,gMixturesKhandani} and references therein.

Recently, there has also been interest in approximating a Gaussian distribution by Gaussian location mixtures with at most $K$ components, with the approximation error measured using $f$-divergences such as relative entropy and total variation~\cite{BestApporximationGaussianMixture}.  One consequence of our new inequality can be interpreted in the same spirit: it provides a lower bound on the best achievable approximation error when the discrepancy is measured by Fisher divergence.

Gaussian estimation identities will play an important role in our development. Brown-type identities connect the output score and Fisher information to the conditional mean and the minimum mean-square error (MMSE); a modern treatment of these identities for vector Gaussian noise can be found in~\cite{dytso2023conditional}. The I-MMSE identity \cite{guo2005mutual} provides the corresponding connection between mutual information and MMSE.  The interested reader is referred to \cite{perezcruz2010mimo,alvarado2014high,alvarado2018asymptotic} for the high-SNR approximation and bounds of the MMSE of discrete inputs. 

The bounded normal mean problem has a long decision-theoretic history; see \cite{marchand2004review} for a review. Exact solutions in the small-radius regime were characterized in \cite{casella1981estimating}, whereas the large-radius asymptotics and the associated $\cos^2$ profile were derived in \cite{bickel1981minimax}. The same profile also arises naturally in our proof. More recently, an alternative proof that the scalar least favorable prior is unique, symmetric, and finitely discrete was given in \cite{yagli2019estimation}. A multivariate structural theory was developed in \cite{dytso2018structure}; in particular, for a Euclidean ball and Gaussian noise, the least favorable prior was shown to be spherically symmetric and supported on finitely many concentric spheres. Finally, we refer the reader to \cite{olea2026approximate} for computational method for approximating the least favorable prior and minimax estimators.

Some of our development will rely on functional inequalities.  The sharp Euclidean Sobolev constant was obtained independently
by Aubin \cite{aubin1976problemes} and Talenti \cite{talenti1976best}; writing
the inequality for the square root of a density gives a lower bound on Fisher
information in terms of a critical $L^p$ norm, a connection that has also been
used explicitly in information-theoretic form, e.g., in
\cite{yamano2012bounds}.  The geometry-sensitive refinement additionally uses
the sharp Young convolution inequality of Beckner and of Brascamp-Lieb
\cite{beckner1975inequalities,brascamp1976best}.  We specialize the Sobolev
and Young inequalities to discrete Gaussian location mixtures to obtain trace
 bounds.

Sibson's $\alpha$-mutual information will also play an important role in our analysis.  For the introduction of properties of $\alpha$-mutual information, the interested reader is referred to \cite{verdu2015alpha,esposito2025sibson} and references therein.  For the properties of the $\alpha$-mutual information in Gaussian noise, which will be relevant in this work, the interested reader is referred to \cite{milanian2026alpha}.

\subsection{Contributions}
The paper outline and contributions are as follows:
\begin{enumerate}[leftmargin=*]
    \item Section~\ref{sec:preliminaries} collects the preliminary identities and
tools used in our analysis:
\begin{itemize}
\item Section~\ref{sec:estimation_identities} reviews several estimation-theoretic identities, including Brown's identity, which connects the MMSE and Fisher information, and the I-MMSE relationship,
which connects the MMSE and mutual information. 
\item  Section~\ref{sec:char_and_entropy} establishes inequalities relating the
characteristic function of a random variable to its R\'enyi entropy of
order two.
\item  Section~\ref{sec:functional_inequalities} reviews the functional inequalities used in our analysis, including the Sobolev and
Young inequalities.
\end{itemize}
\item Section~\ref{sec:main_results} presents our main results on inequalities for the Fisher information of Gaussian mixtures and shows: 
\begin{itemize}
    \item Section~\ref{sec:bounds_H(X) exp(-2H(X)} establishes a matrix-valued analogue of the scalar inequality~\eqref{eq:intro_scalar_new}. Specifically, we show that the Fisher information matrix is bounded below in the positive semidefinite order by a term of the form $ H_2(\bfX)\exp\bigl(-2H_2(\bfX)\bigr)\mathbf{I}$. 
    \item  Section~\ref{sec:trace_bound} demonstrates that a direct application of the matrix inequality from Section~\ref{sec:bounds_H(X) exp(-2H(X)} to the trace of the Fisher information matrix yields a lower bound with suboptimal dimensional scaling in the exponential term. To overcome this limitation, Section~\ref{sec:trace_bound} establishes a new inequality that captures the correct dependence on the ambient dimension. Specifically, we prove a lower bound on the Fisher information of a Gaussian mixture of the form $ a_n \epsilon \exp\!\left(-\frac{2}{n} I_{\frac{n-2}{n}}(\mathbf{X}; \frac{n-2}{n} \epsilon)\right)$ for $n\ge 3$,  where $I_{\alpha}(\mathbf{X};\gamma)$ denotes the $\alpha$-mutual information of order $\alpha$ with the noise variance $\gamma$. We further derive an alternative bound involving a fixed-order $\alpha$-mutual information, namely, $ a_n \epsilon \exp\!\left(-2 I_{\frac{1}{2}}(\mathbf{X} ; \frac{8\epsilon}{n})\right)$.  This complementary inequality avoids the dimension-dependent order parameter and provides a lower bound in terms of the $\alpha$-mutual information of order $1/2$.
    
\end{itemize}

\item Section~\ref{sec:applications} is dedicated to demonstrating a few applications of the inequalities and shows:
\begin{itemize}
\item  Section~\ref{subsec:bounds_for_shells} applies the new inequalities to mixing distributions supported on spherically symmetric shells, which play an important role in statistics and information theory.
\item  Section~\ref{sec:bounded_mean_problem} applies our bounds to the study of least favorable priors for bounded Gaussian mean estimation. In particular, when the unknown mean is constrained to $[-R,R]$, we show that the least favorable prior must have at least
$ \Omega\left(R\sqrt{\log R}\right) $ mass points. Additionally, it is shown that probability masses are bounded by a term of order $\frac{1}{R}$. 
\item Section~\ref{sec:shell_lower_bound} applies our Fisher-information bounds to the vector bounded Gaussian mean problem, in which the unknown mean is constrained to the Euclidean ball $B_R\subset\bbR^n$ of radius $R$. We show that the least favorable prior must be supported on at least $\Omega(R)$ concentric spheres.
\item Section~\ref{sec:mmse_upper_bounds} shows how the bounds on the Fisher information together with Brown's identity can be used to produce upper bounds on the MMSE. 
\end{itemize}

\end{enumerate}

\subsection{Notation}
Vectors are denoted by boldface symbols.  We use $\bbE$ and $\bbP$ for
expectation and probability, $\Tr$ for the matrix trace, and
$\rmi:=\sqrt{-1}$.  The Euclidean norm is $\|\cdot\|$, $\bfI_n$ is the
$n\times n$ identity matrix, and $\bfA\succeq\bfB$
(resp. $\bfA\preceq\bfB$) means that $\bfA-\bfB$ (resp. $\bfB-\bfA$)
is positive semidefinite.  We write $\cN(\bfm,\bfK)$ for a Gaussian law
with mean $\bfm$ and covariance $\bfK$.  The notation $\delta_{\bfx}$
denotes the unit point mass at $\bfx$, and $\mathbf 1_A$ denotes the
indicator of an event or set $A$.  Vertical bars denote cardinality for finite
sets and Lebesgue or surface measure for geometric sets, as dictated by the
context.  Finally,
\begin{equation}
B_R:=\{\bfx\in\bbR^n:\|\bfx\|\le R\}.
\end{equation}
For $m\ge0$, $\mathbb S^m$ denotes the unit sphere in $\bbR^{m+1}$.
Writing $\Gamma$ for Euler's gamma function,
\begin{equation}
\omega_n:=|B_1|=\frac{\pi^{n/2}}{\Gamma(n/2+1)}.
\end{equation}
For a discrete random vector $\bfX$ with masses $\{p_j\}$, its R\'enyi entropy
of order $p\in(0,\infty)\setminus\{1\}$ is
\begin{equation}
H_p(\bfX)
:=\frac{1}{1-p}\log\sum_jp_j^p,
\end{equation}
with the standard continuous extensions at $p=0,1,\infty$, that is
\begin{align}
H_0(\bfX)
&=
\log\left|\supp(P_{\bfX})\right| \\
H_1(\bfX) &= - \sum_j p_j \log p_j\\
H_\infty(\bfX)
&= -\log\max_j p_j.
\end{align}  We write $H(\bfX):=H_1(\bfX)$ for
Shannon entropy.  For an absolutely continuous random vector $\bfW$ with
density $f_{\bfW}$, $h(\bfW)$ denotes differential entropy and, for
$p\in(0,\infty)\setminus\{1\}$,
\begin{equation}
h_p(\bfW)
:=
\begin{cases}
\displaystyle
\frac{1}{1-p}
\log\int_{\bbR^n}f_{\bfW}(\bfw)^p\,\rmd\bfw,
& p\in(0,\infty)\setminus\{1\},
\\[2ex]
\displaystyle
-\log\left\|f_{\bfW}\right\|_\infty,
& p=\infty.
\end{cases}
\end{equation}
denotes differential R\'enyi entropy.  We use $I(\bfX;\bfY)$ for mutual
information.

If a random vector $\bfW$ has a positive differentiable density $f_{\bfW}$,
its score and Fisher information matrix are
\begin{equation}
\rho_{\bfW}(\bfw)=\nabla\log f_{\bfW}(\bfw),
\quad
\bfJ(\bfW)=\bbE[\rho_{\bfW}(\bfW)\rho_{\bfW}(\bfW)^T],
\end{equation}
and the scalar Fisher information is
\begin{equation}
J(\bfW)=\Tr(\bfJ(\bfW)).
\end{equation}
When no ambiguity is possible, we also write $J(f_{\bfW})$ and
$\bfJ(f_{\bfW})$.  The characteristic function is
\begin{equation}
\varphi_{\bfW}(\bft)
:=
\bbE\left[\rme^{\rmi\bft^T\bfW}\right].
\end{equation}
For $t>0$, we denote the centered Gaussian density with covariance
$t^2\bfI_n$ by
\begin{equation}
\phi_t(\bfx)
:=
\frac{1}{(2\pi t^2)^{n/2}}
\exp\left(-\frac{\|\bfx\|^2}{2t^2}\right).
\end{equation}
For a measurable function $g$, $\|g\|_p$ denotes its $L^p(\bbR^n)$ norm,
and $\langle f,g\rangle:=\int_{\bbR^n}f(\bfx)g(\bfx)\rmd\bfx$
whenever the integral is well defined.  The Gaussian $Q$-function is
\begin{equation}
Q(x):=\frac{1}{\sqrt{2\pi}}\int_x^\infty \rme^{-u^2/2}\rmd u.
\end{equation}
All logarithms are natural.

\section{Preliminary Results and Inequalities}\label{sec:preliminaries}
In this section, we collect preliminary inequalities and identities needed in our analysis.  We begin by reviewing some estimation theoretic identities that connect the MMSE and the Fisher information.

\subsection{Gaussian Estimation Identities}
\label{sec:estimation_identities}
Consider a random vector $\bfX$ with finite second moment and, for $\gamma>0$,
\begin{equation}
\bfY_\gamma=\sqrt{\gamma}\bfX+\bfZ
\label{eq:prelim_gaussian_channel}
\end{equation}
with $\bfZ\sim\cN(0,\bfI_n)$ independent of $\bfX$. The MMSE matrix is defined as 
\begin{align}
{\bf MMSE}(\bfX|\bfY_\gamma)
:=\bbE\big[&(\bfX-\bbE[\bfX|\bfY_\gamma])
(\bfX-\bbE[\bfX|\bfY_\gamma])^T\big]
\end{align}
and the MMSE is defined as 
\begin{equation}
\mmse(\bfX|\bfY_\gamma)
:=\Tr({\bf MMSE}(\bfX|\bfY_\gamma)).
\end{equation}
The MMSE and the Fisher information matrix of Gaussian mixture $\bfY_\gamma$ are connected through
Brown's identity \cite{dytso2023conditional}: for $\gamma>0$
\begin{equation}
\gamma {\bf MMSE}(\bfX|\bfY_\gamma)
=
\bfI_n-\bfJ(\bfY_\gamma),
\label{eq:brown_identity}
\end{equation}
  We shall also use the I-MMSE
identity \cite{guo2005mutual}: for $\gamma>0$
\begin{equation}
\frac{\rmd}{\rmd\gamma}I(\bfX;\bfY_\gamma)
=
\frac12\mmse(\bfX|\bfY_\gamma).
\label{eq:I_MMSE_prelim}
\end{equation}

Finally, we will need the following inequality for the Fisher information of projections, which follows from the standard data-processing inequality for Fisher information \cite{zamir1998proof}. 

\begin{lemma}[Fisher information under orthogonal projections]
\label{lem:projection_fisher_DPI}
Let $\bfW\in\bbR^n$ have a density with
finite Fisher information, and let $\bfP\in\bbR^{m\times n}$ satisfy
\begin{equation}
\bfP\bfP^T=\bfI_m.
\end{equation}
Then
\begin{equation}
\bfJ(\bfP\bfW)
\preceq
\bfP\bfJ(\bfW)\bfP^T.
\label{eq:projection_matrix_fisher_DPI}
\end{equation}
Consequently,
\begin{equation}
J(\bfP\bfW)
\le
\Tr\big(\bfP\bfJ(\bfW)\bfP^T\big).
\label{eq:projection_fisher_DPI}
\end{equation}
\end{lemma}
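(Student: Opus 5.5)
The plan is to prove the matrix inequality directly from the score representation, using the conditional-expectation characterization of the score of a linear image. Set $\bfV:=\bfP\bfW\in\bbR^m$. First I will show that $\bfV$ has a density, and that its score is given by
\begin{equation}
\rho_{\bfV}(\bfv)=\bbE\left[\bfP\rho_{\bfW}(\bfW)\,\middle|\,\bfV=\bfv\right].
\end{equation}
To get this, complete $\bfP$ to an orthogonal matrix $\bfU=\begin{bmatrix}\bfP\\ \bfP_\perp\end{bmatrix}$, which is possible since $\bfP\bfP^T=\bfI_m$. Write $\bfU\bfW=(\bfV,\bfV_\perp)$. Its density is $f_{\bfU\bfW}(\bfv,\bfv_\perp)=f_{\bfW}(\bfU^T(\bfv,\bfv_\perp))$, and its score is $\bfU\rho_{\bfW}$. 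The marginal density is $f_{\bfV}(\bfv)=\int f_{\bfU\bfW}(\bfv,\bfv_\perp)\rmd\bfv_\perp$. Differentiating under the integral sign in $\bfv$ and dividing by $f_{\bfV}(\bfv)$ gives
\begin{equation}
\nabla f_{\bfV}/f_{\bfV}=\bbE[\bfP\rho_{\bfW}(\bfW)\mid\bfV].
\end{equation}
This is exactly the mechanism in \cite{zamir1998proof}, and I would simply cite it for the regularity.

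Next comes the matrix version of conditional Jensen. For any $\bfa\in\bbR^m$, we have
\begin{equation}
\bfa^T\bfJ(\bfV)\bfa=\bbE\big[(\bbE[\bfa^T\bfP\rho_{\bfW}(\bfW)\mid\bfV])^2\big]\le \bbE[(\bfa^T\bfP\rho_{\bfW}(\bfW))^2]=\bfa^T\bfP\bfJ(\bfW)\bfP^T\bfa.
\end{equation}
Equivalently, the difference $\bfP\bfJ(\bfW)\bfP^T-\bfJ(\bfV)$ equals the covariance of $\bfP\rho_{\bfW}(\bfW)-\bbE[\bfP\rho_{\bfW}(\bfW)|\bfV]$, which is positive semidefinite. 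Since $\bfa$ is arbitrary, this gives \eqref{eq:projection_matrix_fisher_DPI}. Finiteness of $\bfJ(\bfV)$ follows from the same bound, because $\bfJ(\bfW)$ is finite.

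Finally, \eqref{eq:projection_fisher_DPI} follows by taking traces, since the trace is monotone in the positive semidefinite order.

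The only delicate step is justifying the interchange of differentiation and integration, i.e., the score formula for the marginal, under the sole assumption $\bfJ(\bfW)<\infty$. I would handle it as in \cite{zamir1998proof}: work with $\sqrt{f}$ and absolute continuity along lines, or mollify $\bfW$ by adding small independent Gaussian noise and pass to the limit using lower semicontinuity of Fisher information. In the paper's applications $\bfW$ is itself a Gaussian mixture, with a smooth positive density and integrable derivatives, so the interchange is immediate there.
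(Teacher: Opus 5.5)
Your proof is correct and follows the paper's route. The paper gives no argument beyond invoking the Fisher-information data-processing inequality of \cite{zamir1998proof}, for instance by viewing $\bfW+\bfP^T\boldsymbol{\theta}\mapsto\bfP\bfW+\boldsymbol{\theta}$ as processing of a location family. Your identity $\rho_{\bfP\bfW}(\bfP\bfW)=\bbE[\bfP\rho_{\bfW}(\bfW)\mid\bfP\bfW]$ followed by conditional Jensen is exactly the mechanism behind that inequality, and deferring the regularity of differentiating under the integral sign to the same reference matches the paper's level of rigor.
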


We use \eqref{eq:projection_fisher_DPI} repeatedly to turn lower-dimensional
trace inequalities into spectral statements.

\subsection{Fisher Information and Characteristic Functions}
\label{sec:char_and_entropy}

We now provide the following matrix generalization of the inequality shown in \cite{zhang2007inequalities}; see also \cite{bobkov2014fisher,tepedelenlioglu2010inequalities} for  alternative proofs. This inequality will be a key bridge between  characteristic function and Fisher information.

\begin{prop}\label{prop:matrix_cf_fisher}
Let $\bfW\in\bbR^n$ have an absolutely continuous, differentiable density
for which the integrations by parts below are valid, and assume that
$\bfJ(\bfW)$ is finite.  Then, for all $\bft\in\bbR^n$,
\begin{equation}
|\varphi_{\bfW}(\bft)|^2\bft\bft^T
\preceq
\big(1-|\varphi_{\bfW}(\bft)|^2\big)\bfJ(\bfW).
\label{eq:matrix_cf_fisher}
\end{equation}
\end{prop}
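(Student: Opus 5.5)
The inequality is a matrix statement, so I will prove the equivalent scalar one along every direction: for all $\bfu\in\bbR^n$,
\begin{equation}
|\varphi_{\bfW}(\bft)|^2(\bft^T\bfu)^2\le\big(1-|\varphi_{\bfW}(\bft)|^2\big)\,\bfu^T\bfJ(\bfW)\bfu .
\end{equation}
The strategy is a Cauchy--Schwarz argument on the score, in the spirit of the scalar proof in \cite{zhang2007inequalities}. Write $f=f_{\bfW}$ and $\rho=\rho_{\bfW}$, and fix $\bft$ and $\bfu$.

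The first step is the integration-by-parts identity. For any constant $c\in\mathbb{C}$,
\begin{equation}
\bbE\big[(\bfu^T\rho(\bfW))\,(\rme^{\rmi\bft^T\bfW}-c)\big]=\int \bfu^T\nabla f(\bfw)\,(\rme^{\rmi\bft^T\bfw}-c)\,\rmd\bfw=-\rmi(\bft^T\bfu)\,\varphi_{\bfW}(\bft).
\end{equation}
To see this, integrate by parts. The boundary terms vanish by the standing assumption. The constant $c$ contributes nothing because $\int\nabla f=0$, which is the same as $\bbE[\rho(\bfW)]=0$.

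The second step is Cauchy--Schwarz. Applying it in $L^2(P_{\bfW})$ (complex version) to the identity above gives
\begin{equation}
(\bft^T\bfu)^2|\varphi_{\bfW}(\bft)|^2\le \bbE[(\bfu^T\rho)^2]\;\bbE\big|\rme^{\rmi\bft^T\bfW}-c\big|^2 .
\end{equation}
I then choose $c=\varphi_{\bfW}(\bft)$. This minimizes the variance term, which becomes
\begin{equation}
\bbE|\rme^{\rmi\bft^T\bfW}|^2-|\varphi_{\bfW}(\bft)|^2=1-|\varphi_{\bfW}(\bft)|^2 .
\end{equation}
Since $\bbE[(\bfu^T\rho)^2]=\bfu^T\bfJ(\bfW)\bfu$, this yields exactly the scalar inequality. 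The matrix inequality follows because the inequality holds for every $\bfu$.

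The argument is short, and the only delicate point is justifying the integration by parts. This requires $f$ to be sufficiently regular with $f\to0$ at infinity. In addition, $\nabla f=\rho f$ must be integrable, which follows from $\int|\rho|f\le(\bfu^T\bfJ\bfu)^{1/2}$ by Cauchy--Schwarz, so finite Fisher information suffices. The statement explicitly assumes this validity, so I would note it and not belabor it; in the application to Gaussian mixtures $f$ is smooth with Gaussian tails.
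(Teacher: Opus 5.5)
Your proposal is correct and follows the paper's proof essentially step for step. Both use the integration-by-parts identity $\bbE[\rho_{\bfW}(\bfW)\rme^{\rmi\bft^T\bfW}]=-\rmi\bft\,\varphi_{\bfW}(\bft)$, center it via $\bbE[\rho_{\bfW}(\bfW)]=0$, and apply Cauchy--Schwarz along an arbitrary direction $\bfu$; the only cosmetic difference is that you carry a free constant $c$ and then observe that $c=\varphi_{\bfW}(\bft)$ minimizes the second factor, whereas the paper subtracts $\varphi_{\bfW}(\bft)$ directly.
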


\begin{IEEEproof}
Integration by parts gives
\begin{align}
\bbE\big[\rho_{\bfW}(\bfW) \rme^{\rmi\bft^T\bfW}\big]
&=
\int_{\bbR^n}\nabla f_{\bfW}(\bfw)\rme^{\rmi\bft^T\bfw}\rmd\bfw
\\
&=-\rmi\bft\,\varphi_{\bfW}(\bft).
\label{eq:score_cf_identity}
\end{align}
Moreover $\bbE[\rho_{\bfW}(\bfW)]=0$.  Hence
\begin{equation}
\bbE\left[\rho_{\bfW}(\bfW)
\big(\rme^{\rmi\bft^T\bfW}-\varphi_{\bfW}(\bft)\big)\right]
=-\rmi\bft\,\varphi_{\bfW}(\bft).
\end{equation}
For arbitrary $\bfu\in\bbR^n$, Cauchy-Schwarz yields
\begin{align}
|\bfu^T\bft|^2|\varphi_{\bfW}(\bft)|^2
&\le
\bbE \left[|\bfu^T\rho_{\bfW}(\bfW)|^2 \right]
\bbE \left[|\rme^{\rmi\bft^T\bfW}-\varphi_{\bfW}(\bft)|^2 \right]
\\
&=
\bfu^T\bfJ(\bfW)\bfu
\big(1-|\varphi_{\bfW}(\bft)|^2\big).
\end{align}
Since this holds for every $\bfu$, \eqref{eq:matrix_cf_fisher} follows.
\end{IEEEproof}

For later use, consider the one-dimensional model $Y=X+Z$ with
$Z\sim\cN(0,1)$ independent of $X$. The scalar
specialization of Proposition~\ref{prop:matrix_cf_fisher} gives
$|\varphi_Y(t)|^2t^2\le J(Y)(1-|\varphi_Y(t)|^2)$, and hence
\begin{equation}
|\varphi_X(t)|^2
\le
\frac{J(Y)\rme^{t^2}}{J(Y)+t^2},
\qquad t\in\bbR,
\label{eq:scalar_cf_fisher_prelim}
\end{equation}
where we used $\varphi_Y(t)=\rme^{-t^2/2}\varphi_X(t)$.

\subsection{R\'enyi Entropy and Characteristic Function}
In this subsection, we present a result that will connect R\'enyi entropy of $\bfX$ to the characteristic function of $\bfX$.  We start by defining the following function:  for $T>0$, let
\begin{equation}
F_T(\bft)
:=
\frac{|B_{\frac{T}{2}}\cap(B_{\frac{T}{2}}+\bft)|}{|B_{\frac{T}{2}}|}, \, \bft \in \bbR^n.
\label{eq:def_fejer_ball}
\end{equation}
The function $F_T$ is supported on $B_T$, takes values in $[0,1]$, and has a
nonnegative Fourier transform. Note that in the scalar case $F_T$ becomes the triangular function 
\begin{equation}
    F_T(t) = \left(1 - \frac{|t|}{T} \right)_{+}, \, t \in \bbR,
\end{equation}
where $(x)_+  =\max(0,x)$.   Additionally, note that $F_T(\bft)$ admits the following representation as a
normalized autocorrelation:
\begin{equation}
F_T(\bft)
=
\frac{1}{|B_{T/2}|}
\int_{\bbR^n}
\mathbf 1_{B_{T/2}}(\bfs)
\mathbf 1_{B_{T/2}}(\bfs-\bft)\,\rmd\bfs.
\label{eq:fejer_autocorrelation}
\end{equation}
Consequently, by the convolution theorem, its Fourier transform is given by: for $ \bfu\in\bbR^n$,
\begin{equation}
\int_{\bbR^n}
F_T(\bft) \rme^{\rmi\bft^T\bfu}\,\rmd\bft
=
\frac{1}{|B_{T/2}|}
\left|
\int_{B_{T/2}}
\rme^{\rmi\bfs^T\bfu}\,\rmd\bfs
\right|^2
\ge 0,
\label{eq:fejer_positive_transform}
\end{equation}
Thus, $F_T$ has a nonnegative Fourier transform.

\begin{prop}
\label{prop:fejer_discrete}
Let $\bfX \in \bbR^n$ be a discrete random vector.  Then, for every $T>0$,
\begin{equation}
\int_{B_T}F_T(\bft)|\varphi_{\bfX}(\bft)|^2\rmd\bft
\ge
\omega_n\left(\frac{T}{2}\right)^n \rme^{-H_2(\bfX)}.
\label{eq:fejer_discrete}
\end{equation}
\end{prop}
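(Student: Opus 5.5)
Expand $|\varphi_{\bfX}(\bft)|^2$ as a double sum over the atoms of $\bfX$, integrate against $F_T$ term by term, keep only the diagonal terms, and drop the off-diagonal ones using the nonnegativity of the Fourier transform of $F_T$ from \eqref{eq:fejer_positive_transform}.

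Write $\bfX$ as taking values $\bfx_j$ with masses $p_j$, so that
\begin{equation}
|\varphi_{\bfX}(\bft)|^2=\sum_{j,k}p_jp_k\,\rme^{\rmi\bft^T(\bfx_j-\bfx_k)}.
\end{equation}
Since $F_T$ is bounded and supported on the compact set $B_T$, and $\sum_{j,k}p_jp_k=1$, the double series converges absolutely and uniformly on $B_T$. Fubini therefore lets us exchange sum and integral:
\begin{equation}
\int_{B_T}F_T(\bft)|\varphi_{\bfX}(\bft)|^2\rmd\bft
=\sum_{j,k}p_jp_k\,\widehat{F_T}(\bfx_j-\bfx_k),
\qquad
\widehat{F_T}(\bfu):=\int_{\bbR^n}F_T(\bft)\rme^{\rmi\bft^T\bfu}\rmd\bft .
\end{equation}

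By \eqref{eq:fejer_positive_transform}, every term $\widehat{F_T}(\bfx_j-\bfx_k)$ is real and nonnegative. Hence the sum is at least its diagonal part $j=k$:
\begin{equation}
\sum_{j,k}p_jp_k\,\widehat{F_T}(\bfx_j-\bfx_k)\ \ge\ \widehat{F_T}(\mathbf 0)\sum_j p_j^2 .
\end{equation}

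Two facts then finish the proof.
\begin{itemize}
\item By the definition of Rényi entropy of order two, $\sum_j p_j^2=\rme^{-H_2(\bfX)}$.
\item Setting $\bfu=\mathbf 0$ in \eqref{eq:fejer_positive_transform} gives
\begin{equation}
\widehat{F_T}(\mathbf 0)=\frac{|B_{T/2}|^2}{|B_{T/2}|}=|B_{T/2}|=\omega_n\left(\frac T2\right)^n.
\end{equation}
Alternatively, integrate the autocorrelation representation \eqref{eq:fejer_autocorrelation} over $\bft$ first.
\end{itemize}
Multiplying the two gives exactly \eqref{eq:fejer_discrete}.

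The main point needing care is the interchange of sum and integral when the support of $\bfX$ is countably infinite. The uniform bound on the series and the compact support of $F_T$ settle it. One should also note that the off-diagonal terms are real and nonnegative, not merely real, so discarding them is legitimate. The conjugate pairs $(j,k)$ and $(k,j)$ combine to real values anyway, but the nonnegativity comes from \eqref{eq:fejer_positive_transform}.
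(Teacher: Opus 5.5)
Your proposal is correct and follows the paper's proof: you expand $|\varphi_{\bfX}|^2$ as a double sum, exchange sum and integral by Fubini, discard the off-diagonal terms using the nonnegativity of the Fourier transform in \eqref{eq:fejer_positive_transform}, and evaluate $\int F_T=|B_{T/2}|=\omega_n(T/2)^n$. Your added remarks on absolute convergence for countable support and on the terms being nonnegative rather than merely real make explicit what the paper only sketches.
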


\begin{IEEEproof} Let 
\begin{equation}
P_{\bfX}=\sum_{j\ge1}p_j\delta_{\bfx_j}
\end{equation}
denote the probability mass function of $\bfX$.
Now, expanding the squared characteristic function, we get 
\begin{align}
&\int_{B_T}F_T(\bft)|\varphi_{\bfX}(\bft)|^2\rmd\bft
=
\sum_{i,j}p_ip_j
\int_{B_T}F_T(\bft) \rme^{\rmi\bft^T(\bfx_i-\bfx_j)}\rmd\bft.
\label{eq:fejer_expand}
\end{align}
The interchange is justified by Fubini since $F_T$ is integrable.  Moreover, from \eqref{eq:fejer_positive_transform} 
for every $\bfu\in\bbR^n$,
\begin{align}
\int_{\bbR^n}F_T(\bft) \rme^{\rmi\bft^T\bfu}\rmd\bft \ge0,
\end{align}
which implies that every summand in \eqref{eq:fejer_expand} is nonnegative.  Consequently, retaining
only $i=j$ gives
\begin{align}
\int_{B_T}F_T(\bft)|\varphi_{\bfX}(\bft)|^2\rmd\bft
&\ge
\sum_jp_j^2\int_{B_T}F_T(\bft)\rmd\bft
\\
&=
|B_{\frac{T}{2}}|\sum_jp_j^2,
\end{align}
which is \eqref{eq:fejer_discrete}. This concludes the proof. 
\end{IEEEproof}

\subsection{$\alpha$-mutual information}
The Sibson $\alpha$-mutual information \cite{sibson1969information,verdu2015alpha} between two random variables $(\sfX,\sfY) \in \cX \times \cY$ with  joint distribution $P_{\sfX,\sfY}$ is defined as follows: for $\alpha  \in [0,\infty]$
\begin{equation}
I_{\alpha}(\sfX;\sfY) 
 = \inf_{Q_{\sfY}} D_{\alpha}\left( P_{\sfX,\sfY} \| P_{\sfX} \times Q_{\sfY} \right ), \label{eq:Def_I}
\end{equation}
where  $D_{\alpha}(\cdot \| \cdot)$ is R\'enyi  divergence of order $\alpha$, which for $P \ll Q$ reads
\begin{equation}
D_{\alpha}\left(P \| Q \right ) = \frac{1}{\alpha-1}  \log \int p^\alpha q^{1-\alpha} \ {\rmd}\mu, 
\end{equation}
where $p$ and $q$ are the Radon-Nikodym derivatives of the probability measures $P$ and $Q$, respectively, with respect to a common dominating $\sigma$-finite measure $\mu$.  

For the case of Gaussian noise we use the following notation: for $\epsilon >0$
\begin{equation}
    I_\alpha( \bfX ; \epsilon) = I_\alpha \left(\bfX; \bfX +\sqrt{  \frac{1}{\epsilon} } \bfZ \right) 
\end{equation}
for $\alpha=1$,  we use $I(\bfX;\epsilon)=I_1(\bfX;\epsilon)$ to denote the classical mutual information. 

We will also need $\alpha$-mutual information of order zero. However, it is not difficult to check that
\begin{equation}
    \lim_{\alpha \to 0} I_{\alpha}(\bfX; \epsilon)=0.
\end{equation}
A more useful way to define $\alpha$-mutual information of order zero is as follows
\begin{equation}
    I_{0}(\bfX, \epsilon) = \lim_{\alpha \to 0} I_{\alpha} \left(\bfX; \frac{\epsilon}{\alpha} \right).
\end{equation} 

The next proposition summarizes some of the important properties of $I_\alpha( \bfX ; \epsilon)$.

\begin{prop}[Properties of Gaussian Sibson mutual information]
\label{prop:sibson_properties}
Let $\bfX\in\bbR^n$, let $\bfZ\sim\cN(0,\bfI_n)$ be independent of
$\bfX$, and let $\epsilon>0$. Then the following properties hold.
\begin{enumerate}
 \item $\alpha\mapsto I_\alpha(\bfX;\epsilon)$ is nondecreasing on $(0,\infty)$.

    \item For every $\alpha>0$,
    \begin{equation}
    I_\alpha(\bfX;\epsilon)
    =
    h_{\frac{1}{\alpha}}\left(
    \bfX+\frac{1}{\sqrt{\alpha\epsilon}}\bfZ
    \right)
    -
    h_{\frac{1}{\alpha}}\left(
    \frac{1}{\sqrt{\alpha\epsilon}}\bfZ
    \right),
    \label{eq:sibson_renyi_entropy_representation}
    \end{equation}

    \item If $\bfX$ is discrete, then, for every $\alpha>0$,
    \begin{equation}
    I_\alpha(\bfX;\epsilon)
    \le
    H_{\frac{1}{\alpha}}(\bfX),
    \label{eq:sibson_discrete_entropy_bound}
    \end{equation}
    and
    \begin{equation}
    \lim_{\epsilon\to\infty}I_\alpha(\bfX;\epsilon)
    =
    H_{\frac{1}{\alpha}}(\bfX).
    \label{eq:sibson_high_snr_limit}
    \end{equation}
\item  For $\alpha=0$, we have that 
  \begin{equation}
I_0(\bfX;\epsilon)
=
h_\infty\left(
\bfX+\frac{1}{\sqrt{\epsilon}}\bfZ
\right)
-
h_\infty\left(
\frac{1}{\sqrt{\epsilon}}\bfZ
\right).
\label{eq:order_zero_mutual_information_renyi}
\end{equation}

\item If $\bfX'$ is an independent copy of $\bfX$, then
    \begin{equation}
    I_{1/2}(\bfX;\epsilon)
    =
    -\log
    \bbE\left[
    \exp\left(
    -\frac{\epsilon}{8}\|\bfX-\bfX'\|^2
    \right)
    \right].
    \label{eq:sibson_half_gaussian_kernel}
    \end{equation}

\end{enumerate}
\end{prop}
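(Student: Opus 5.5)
Everything will rest on Sibson's closed form. For a channel with transition density $f(\bfy|\bfx)$, the infimum over $Q_{\sfY}$ in \eqref{eq:Def_I} is attained by the $\alpha$-tilted output. This gives
\begin{equation}
I_\alpha(\bfX;\epsilon)=\frac{\alpha}{\alpha-1}\log\int_{\bbR^n}\Big(\bbE\big[f(\bfy|\bfX)^\alpha\big]\Big)^{1/\alpha}\rmd\bfy ,
\end{equation}
with $f(\bfy|\bfx)=\phi_{1/\sqrt{\epsilon}}(\bfy-\bfx)$. I will first establish this formula (Hölder/Jensen with equality at the tilted law), and then derive each property from it.

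For item 2, the key observation is that $\phi_{1/\sqrt\epsilon}^\alpha$ is a scaled Gaussian density:
\begin{equation}
\phi_{1/\sqrt\epsilon}(\bfu)^\alpha=c_{\alpha,\epsilon}\,\phi_{1/\sqrt{\alpha\epsilon}}(\bfu),\qquad c_{\alpha,\epsilon}=\alpha^{-n/2}\Big(\frac{\epsilon}{2\pi}\Big)^{n(\alpha-1)/2}.
\end{equation}
Hence $\bbE[f(\bfy|\bfX)^\alpha]=c_{\alpha,\epsilon}f_{\bfW}(\bfy)$, where $\bfW=\bfX+(\alpha\epsilon)^{-1/2}\bfZ$. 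Substituting, the integral becomes $c^{1/\alpha}\int f_{\bfW}^{1/\alpha}$. Since $\frac{\alpha}{\alpha-1}=\frac{1}{1-1/\alpha}$, this is exactly $h_{1/\alpha}(\bfW)$ plus a constant. The constant is identified by taking $\bfX=0$, where $I_\alpha=0$, so it equals $-h_{1/\alpha}((\alpha\epsilon)^{-1/2}\bfZ)$.

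Item 4 then follows from item 2 by replacing $\epsilon$ with $\epsilon/\alpha$ and letting $\alpha\to0$, using $h_p\to h_\infty$ as $p\to\infty$. The limit exchange needs justification: $f_{\bfW}$ is bounded and continuous, so $\|f\|_p\to\|f\|_\infty$, and $\|f\|_p^p\le\|f\|_\infty^{p-1}$ keeps everything finite. For item 5, set $\alpha=1/2$ in the closed form: the integral is $\int(\bbE\sqrt{f(\bfy|\bfX)})^2\rmd\bfy=\bbE\int\sqrt{f(\bfy|\bfX)f(\bfy|\bfX')}\,\rmd\bfy$. The inner integral is a Gaussian Bhattacharyya coefficient, equal to $\exp(-\epsilon\|\bfX-\bfX'\|^2/8)$, and the prefactor $\frac{\alpha}{\alpha-1}=-1$ gives \eqref{eq:sibson_half_gaussian_kernel}.

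Item 1 is the standard monotonicity of Rényi divergence in its order: each $D_\alpha(P\|P\times Q)$ is nondecreasing in $\alpha$, and an infimum of nondecreasing functions is nondecreasing.

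Item 3 is the main obstacle. I will use the closed form with $p_j$ the masses and $g_j(\bfy)=f(\bfy|\bfx_j)$, writing $N_\alpha=\big\|\sum_j p_jg_j^\alpha\big\|_{L^{1/\alpha}}$, so that the integral is $N_\alpha^{1/\alpha}$ and $I_\alpha=\frac{1}{\alpha-1}\log N_\alpha$.
\begin{itemize}
\item \emph{Case $\alpha<1$.} The $L^{1/\alpha}$ norm is superadditive on nonnegative functions (reverse Minkowski), so $N_\alpha\ge\sum_j p_j\|g_j^\alpha\|_{1/\alpha}=\sum_j p_j=\sum_j p_j^{\alpha}p_j^{1-\alpha}$. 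Since the prefactor $\frac{1}{\alpha-1}$ is negative, I need the bound to run toward $\sum_j p_j^{1/\alpha}$. I plan to get it by instead writing $\sum_j p_jg_j^\alpha=\sum_j(p_j^{1/\alpha}g_j)^\alpha$ and using $(\sum a_j)^\alpha\le\sum a_j^\alpha$ pointwise in the reverse direction, which yields $N_\alpha^{1/\alpha}\ge\sum_j p_j^{1/\alpha}$.
\item \emph{Case $\alpha>1$.} Here I expect the pointwise inequality $\big(\sum_j p_jg_j^\alpha\big)^{1/\alpha}\ge\big(\sum_j p_j^{1/\alpha}g_j\big)\cdot(\cdots)$ to be the delicate step. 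I will check both cases carefully against the known bound $I_\alpha\le H_{1/\alpha}$.
\end{itemize}
The high-SNR limit \eqref{eq:sibson_high_snr_limit} should follow by dominated convergence: as $\epsilon\to\infty$ the $g_j$ become asymptotically disjointly supported, so $\int(\sum_j p_jg_j^\alpha)^{1/\alpha}\to\sum_j p_j^{1/\alpha}\int g_j=\sum_j p_j^{1/\alpha}$, which gives $H_{1/\alpha}$. For infinitely many atoms I will truncate.
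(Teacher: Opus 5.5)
Deriving all five items from Sibson's closed form is a sound plan, and it is more self-contained than the paper's proof. The paper cites Verd\'u for item~1 and for the bound in item~3, and Milanian et al.\ for item~2 and the high-SNR limit. It obtains item~4 from item~2 by the substitution $\eta=\epsilon/\alpha$, without the $\|f\|_p\to\|f\|_\infty$ justification you add. It proves item~5 through item~2 with $h_2$ and the Gaussian collision integral. Your Bhattacharyya computation is the same calculation, since $\sqrt{\phi_{1/\sqrt{\epsilon}}}$ is a constant multiple of $\phi_{\sqrt{2/\epsilon}}$. Items 1, 2, 4 and 5 are fine as written. The gap is in item~3, where there are two problems.

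First, the opening claim of your $\alpha<1$ bullet is false. There $1/\alpha>1$, so $L^{1/\alpha}$ obeys the ordinary Minkowski inequality: it is subadditive. That gives only $N_\alpha\le 1$, i.e.\ $I_\alpha\ge 0$. Superadditivity would give $N_\alpha\ge 1$ and hence $I_\alpha\le 0$ for every input. Your fallback pointwise argument for $\alpha<1$ is correct, so drop that sentence.

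Second, your $\alpha>1$ bullet is not a proof, and the inequality you announce runs the wrong way. Since $\frac{\alpha}{\alpha-1}>0$ there, an upper bound on $I_\alpha$ requires an \emph{upper} bound on $\int(\sum_jp_jg_j^\alpha)^{1/\alpha}$. A lower bound on the integrand is useless.

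This case is not delicate; it is the mirror image of $\alpha<1$. Put $c_j(\bfy)=p_j^{1/\alpha}g_j(\bfy)$, so that $(\sum_jp_jg_j^\alpha)^{1/\alpha}=\|c(\bfy)\|_{\ell^\alpha}$, and compare with $\|c(\bfy)\|_{\ell^1}=\sum_jp_j^{1/\alpha}g_j(\bfy)$. One has $\|c\|_{\ell^\alpha}\ge\|c\|_{\ell^1}$ for $\alpha<1$ (subadditivity of $t\mapsto t^\alpha$). One has $\|c\|_{\ell^\alpha}\le\|c\|_{\ell^1}$ for $\alpha>1$ (subadditivity of $t\mapsto t^{1/\alpha}$). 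Integrating gives $\int(\sum_jp_jg_j^\alpha)^{1/\alpha}\ge\sum_jp_j^{1/\alpha}$ in the first case and $\le$ in the second. The sign of $\frac{\alpha}{\alpha-1}$ flips in step, so both cases yield $I_\alpha\le H_{1/\alpha}(\bfX)$; $\alpha=1$ is just $I\le H$.

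The same direction issue affects your ``truncate'' step for the high-SNR limit. For $\alpha>1$, discarding atoms lowers the integrand, which is exactly the $\liminf$ bound you need. For $\alpha<1$ you need $\limsup_{\epsilon\to\infty}N_\alpha\le(\sum_jp_j^{1/\alpha})^\alpha$, and discarding atoms does not give that. Control the tail by Minkowski in $L^{1/\alpha}$ instead: $N_\alpha\le N_\alpha^{(m)}+\sum_{j>m}p_j$, where $N_\alpha^{(m)}$ uses only the atoms $j\le m$. Then let $\epsilon\to\infty$ and $m\to\infty$.
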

\begin{IEEEproof}
The proof of the monotonicity in $\alpha$ and \eqref{eq:sibson_discrete_entropy_bound} can be found in \cite{verdu2015alpha}. The proof of \eqref{eq:sibson_renyi_entropy_representation} and \eqref{eq:sibson_high_snr_limit} can be found in \cite{milanian2026alpha}.

For $\alpha>0$, from \eqref{eq:sibson_renyi_entropy_representation} we have that 
\begin{equation}
I_\alpha(\bfX;\eta)
=
h_{\frac{1}{\alpha}}\left(
\bfX+\frac{1}{\sqrt{\alpha\eta}}\bfZ
\right)
-
h_{\frac{1}{\alpha}}\left(
\frac{1}{\sqrt{\alpha\eta}}\bfZ
\right).
\end{equation}
Setting $\eta=\epsilon/\alpha$, so that $\alpha\eta=\epsilon$, yields
\begin{equation}
I_\alpha\left(\bfX;\frac{\epsilon}{\alpha}\right)
=
h_{\frac{1}{\alpha}}\left(
\bfX+\frac{1}{\sqrt{\epsilon}}\bfZ
\right)
-
h_{\frac{1}{\alpha}}\left(
\frac{1}{\sqrt{\epsilon}}\bfZ
\right).
\end{equation}
Therefore, letting
$\alpha\downarrow0$, so that $1/\alpha\to\infty$, and using the definition
of $I_0(\bfX;\epsilon)$, we obtain
\begin{equation}
I_0(\bfX;\epsilon)
=
h_\infty\left(
\bfX+\frac{1}{\sqrt{\epsilon}}\bfZ
\right)
-
h_\infty\left(
\frac{1}{\sqrt{\epsilon}}\bfZ
\right).
\end{equation}

The proof of \eqref{eq:sibson_half_gaussian_kernel} can be found in  Appendix~\ref{app:proof_prop}.
\end{IEEEproof}

\begin{lemma}[Chain-rule-type  inequality]
\label{lem:sibson_mixture}
Let $V$ be a discrete random variable, and let $X$ and $Y$ be
arbitrary random variables. Then, for every $\alpha>0$,
\begin{equation}
I_\alpha(V,X;Y)
\le
H_{\frac{1}{\alpha}}(V)
+
\sup_{v\in\supp(P_V)}
I_\alpha(X;Y\mid V=v),
\label{eq:sibson_mixture}
\end{equation}
where 
\begin{equation}
I_\alpha(X;Y\mid V=v)
=
\inf_{Q_Y}
D_\alpha\left(
P_{X,Y\mid V=v}
\,\middle\|\,P_{X\mid V=v}\times Q_Y\right).
\end{equation}

\end{lemma}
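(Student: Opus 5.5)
The plan is to use Sibson's closed-form expression for $I_\alpha$. For $\alpha\in(0,1)\cup(1,\infty)$ the infimum over $Q_Y$ in the definition is attained, and
\begin{equation}
I_\alpha(X;Y)=\frac{\alpha}{\alpha-1}\log\int\Big(\bbE\big[p_{Y|X}(y|X)^\alpha\big]\Big)^{1/\alpha}\rmd\mu(y),
\end{equation}
where the expectation is over $X\sim P_X$. The cases $\alpha=1$ and $\alpha=\infty$ should then follow by taking limits, since both sides are continuous (or at least monotone) in $\alpha$; this is consistent with item 1 of Proposition~\ref{prop:sibson_properties}.

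Write $G(v):=\bbE[p_{Y|X,V}(y|X,v)^\alpha\mid V=v]$, with $y$ held fixed. Then
\begin{equation}
\Big(\bbE\big[p_{Y|V,X}(y|V,X)^\alpha\big]\Big)^{1/\alpha}=\Big(\sum_v P_V(v)\,G(v)\Big)^{1/\alpha}.
\end{equation}
Let $s:=1/\alpha$. Set $A_v(y):=G(v)^{1/\alpha}$, so that $\int A_v\,\rmd\mu=\exp\big(\tfrac{\alpha-1}{\alpha}I_\alpha(X;Y\mid V=v)\big)$ by the same formula applied conditionally on $V=v$. The goal is then to compare $\int\big(\sum_v P_V(v)A_v^\alpha\big)^{1/\alpha}\rmd\mu$ with $\sum_v P_V(v)^{1/\alpha}\int A_v\,\rmd\mu$. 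The argument splits into two cases according to the sign of $\alpha-1$.

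\emph{Case $\alpha>1$.} Here $s<1$ and the prefactor $\frac{\alpha}{\alpha-1}$ is positive, so an upper bound on the integral is needed. The map $x\mapsto x^{1/\alpha}$ is subadditive, which gives $\big(\sum_v P_V(v)A_v^\alpha\big)^{1/\alpha}\le\sum_v P_V(v)^{1/\alpha}A_v$. Integrating, and then bounding each $\int A_v\,\rmd\mu$ by its supremum over $v$, yields
\begin{equation}
I_\alpha(V,X;Y)\le\frac{\alpha}{\alpha-1}\log\sum_v P_V(v)^{1/\alpha}+\sup_v I_\alpha(X;Y\mid V=v).
\end{equation}
The first term equals $H_{1/\alpha}(V)$, because $\frac{1}{1-1/\alpha}=\frac{\alpha}{\alpha-1}$.

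\emph{Case $\alpha<1$.} Now the prefactor is negative, so a lower bound on the integral is needed. Since $1/\alpha>1$, the function $x\mapsto x^{1/\alpha}$ is superadditive on nonnegative reals, which gives $\big(\sum_v a_v\big)^{1/\alpha}\ge\sum_v a_v^{1/\alpha}$. With $a_v=P_V(v)A_v^\alpha$ this reads $\big(\sum_v P_V(v)A_v^\alpha\big)^{1/\alpha}\ge\sum_v P_V(v)^{1/\alpha}A_v$. Integrating and then using $\int A_v\,\rmd\mu\ge\inf_v\exp\big(\tfrac{\alpha-1}{\alpha}I_\alpha(X;Y\mid V=v)\big)$, and multiplying by the negative prefactor, gives the same final bound as in the first case.

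The main obstacle is the edge cases rather than the inequality itself. First, I would justify Sibson's closed form through Gibbs' variational argument, namely that the optimal $Q_Y$ is proportional to $\big(\bbE[p^\alpha]\big)^{1/\alpha}$, so that the formula holds both for the joint pair $(V,X)$ and conditionally on each $V=v$. Second, when $V$ has infinitely many atoms, the sums should be handled by monotone convergence. Third, the cases $\alpha=1$ and $\alpha=\infty$ must be obtained by limits, which requires continuity of both sides in $\alpha$. For $\alpha=1$ a direct chain-rule check also works as a backup: $I(V,X;Y)=I(V;Y)+I(X;Y\mid V)\le H(V)+\sup_v I(X;Y\mid V=v)$.
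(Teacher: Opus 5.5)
Your proposal is correct and follows essentially the same route as the paper: Sibson's closed form for the pair $(V,X)$, the power-sum inequality (sub- or superadditivity of $x\mapsto x^{1/\alpha}$) to reach $\frac{\alpha}{\alpha-1}\log\sum_v P_V(v)^{1/\alpha}\exp\bigl(\frac{\alpha-1}{\alpha}I_\alpha(X;Y\mid V=v)\bigr)$, then bounding each conditional term by the supremum according to the sign of $\alpha-1$, and the Shannon chain rule (or continuity) for $\alpha=1$. Your explicit treatment of the two sign cases and of countably many atoms is a more detailed version of the paper's terse argument.
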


\begin{IEEEproof}
For $\alpha\ne1$, the closed-form expression for Sibson mutual
information and the power-sum inequality give
\begin{align}
&I_\alpha(V,X;Y)
\le \nonumber\\
&\frac{\alpha}{\alpha-1}
\log\sum_{v}
P_V(v)^{\frac{1}{\alpha}}
\exp\left(
\frac{\alpha-1}{\alpha}
I_\alpha(X;Y\mid V=v)
\right).
\label{eq:sibson_mixture_refined}
\end{align}
Let
\begin{equation}
M=\sup_{v\in\supp(P_V)}
I_\alpha(X;Y\mid V=v).
\end{equation}
Bounding the conditional information terms in
\eqref{eq:sibson_mixture_refined} by $M$, while accounting for the
sign of $\alpha-1$, yields
\begin{align}
I_\alpha(V,X;Y)
&\le
M+
\frac{\alpha}{\alpha-1}
\log\sum_vP_V(v)^{\frac{1}{\alpha}}\\
&=
M+H_{\frac{1}{\alpha}}(V),
\end{align}
which proves \eqref{eq:sibson_mixture}. The case $\alpha=1$ follows
by continuity, or directly from the usual Shannon chain rule.
\end{IEEEproof}

\subsection{Isoperimetric Inequalities and Classical Functional Inequalities}
\label{sec:functional_inequalities}

We next recall an information-theoretic isoperimetric inequality and
several classical functional inequalities. For an absolutely continuous
random vector $\bfX\in\bbR^n$ with finite differential entropy and
Fisher information, Stam's inequality states that
\begin{equation}
J(\bfX)
\ge
2\pi\rme n
\exp\left(-\frac{2}{n}h(\bfX)\right).
\label{eq:stam_classical}
\end{equation}
Applying \eqref{eq:stam_classical} to
$\bfX+\epsilon^{-1/2}\bfZ$ and using
\begin{equation}
I(\bfX;\epsilon)
=
h\left(\bfX+\frac{1}{\sqrt{\epsilon}}\bfZ\right)
-\frac{n}{2}\log\frac{2\pi\rme}{\epsilon},
\end{equation}
we obtain the equivalent form
\begin{equation}
J\left(\bfX+\frac{1}{\sqrt{\epsilon}}\bfZ\right)
\ge
\epsilon n
\exp\left(-\frac{2}{n}I(\bfX;\epsilon)\right).
\label{eq:stam_vector_application}
\end{equation}

We next record the classical functional inequalities used in our analysis.
For completeness, we recall that
\begin{equation}
	H^1(\bbR^n)
	=
	\left\{
		g\in L^2(\bbR^n):
		\nabla g\in L^2(\bbR^n;\bbR^n)
	\right\},
\end{equation}
where the gradient is understood in the weak sense. All the inequalities
below are stated for functions in this space.
\begin{itemize}
    \item  \emph{Elementary Sobolev inequality:} Let $n=1$. For every $g\in  H^1(\bbR)$, 
    \begin{equation}
    \|g\|_\infty^2
    \le \|g\|_2\|g'\|_2.
    \label{eq:sobolev_1d}
\end{equation}
Indeed, since $g^2$ is absolutely continuous
and vanishes at both infinities,
\begin{align}
    2|g(x)|^2
    &\le \int_{\bbR}|(g^2)'(t)|\,\rmd t\\
    &\le 2\|g\|_2\|g'\|_2.
\end{align}

	\item \emph{Sharp Sobolev inequality \cite{aubin1976problemes,talenti1976best}:}
	Let $n\ge 3$. For every $g\in  H^1(\bbR^n)$,
	\begin{equation}
		\|g\|_{2n/(n-2)}^2
		\le
		\frac{4}{n(n-2)|\mathbb S^n|^{2/n}}
		\|\nabla g\|_2^2.
		\label{eq:sharp_sobolev_prelim}
	\end{equation}

	\item \emph{Ladyzhenskaya-Gagliardo-Nirenberg inequality \cite{ladyzhenskaya1969mathematical}:}
	For every $g\in H^1(\bbR^2)$,
	\begin{equation}
		\|g\|_4^4
		\le
		2\|g\|_2^2\|\nabla g\|_2^2.
		\label{eq:ladyzhenskaya_prelim}
	\end{equation}

	\item \emph{Sharp Young convolution inequality \cite{beckner1975inequalities,brascamp1976best}:}
	Let $1\le p,q,r\le\infty$ satisfy
	\begin{equation}
		\frac{1}{p}+\frac{1}{q}
		=
		1+\frac{1}{r}.
	\end{equation}
	Then, for every $f\in L^p(\bbR^n)$ and
	$g\in L^q(\bbR^n)$,
	\begin{equation}
		\|f*g\|_r
		\le
		\left(\frac{C_pC_q}{C_r}\right)^n
		\|f\|_p\|g\|_q,
		\label{eq:sharp_young_prelim}
	\end{equation}
	where
	\begin{equation}
		C_s^2
		=
		\frac{s^{1/s}}{(s')^{1/s'}},
		\qquad
		\frac{1}{s}+\frac{1}{s'}=1,
	\end{equation}
	and $s'$ denotes the H\"older conjugate of $s$. At the
	endpoints, we use the conventions $1'=\infty$, $\infty'=1$,
	and $C_1=C_\infty=1$. 
\end{itemize}

Applying the Sobolev inequalities and the
Ladyzhenskaya-Gagliardo-Nirenberg inequality to
$g=\sqrt{f_{\bfX}}$, where $f_{\bfX}$ is the density of $\bfX$, yields the following
R\'enyi-entropy generalization of Stam's
inequality. Specifically, we use
\eqref{eq:sobolev_1d} for $n=1$,
\eqref{eq:ladyzhenskaya_prelim} for $n=2$,
and \eqref{eq:sharp_sobolev_prelim} for
$n\ge3$. We refer the interested reader
to \cite{bobkov2023entropic} for further
details and related derivations.

\begin{prop}
\label{prop:fisher_renyi_entropy_power}
Let $n\ge1$, and let $\bfX\in\bbR^n$ have an absolutely continuous
density and finite Fisher information. Define
\begin{equation}
r_n
:=
\begin{cases}
\infty, & n=1,\\
2, & n=2,\\
\displaystyle\frac{n}{n-2}, & n\ge3.
\end{cases} \label{eq:rn_definition}
\end{equation}
For $\alpha\in(1,\infty]$, define the R\'enyi entropy power by
\begin{equation}
N_\alpha(\bfX)
:=
\begin{cases}
\displaystyle
\frac{1}{2\pi}
\alpha^{-1/(\alpha-1)}
\exp\left(
\frac{2}{n}h_\alpha(\bfX)
\right),
& \alpha\in(1,\infty),
\\[2ex]
\displaystyle
\frac{1}{2\pi}
\exp\left(
\frac{2}{n}h_\infty(\bfX)
\right),
& \alpha=\infty.
\end{cases}
\label{eq:def_normalized_renyi_entropy_power}
\end{equation}
Then
\begin{equation}
J(\bfX)
\ge
\frac{a_n}{N_{r_n}(\bfX)},
\label{eq:fisher_renyi_entropy_power}
\end{equation}
where
\begin{equation}
a_n
:=
\begin{cases}
\displaystyle\frac{2}{\pi},
& n=1,
\\[1.5ex]
\displaystyle\frac{1}{2\pi},
& n=2,
\\[2ex]
\displaystyle
\frac{n(n-2)}{2\pi}
|\mathbb S^n|^{2/n}
\left(\frac{n-2}{n}\right)^{(n-2)/2},
& n\ge3.
\end{cases}
\label{eq:fisher_renyi_entropy_power_constant}
\end{equation}
\end{prop}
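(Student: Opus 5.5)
The plan is to apply each functional inequality of Section~\ref{sec:functional_inequalities} to $g=\sqrt{f_{\bfX}}$ and to rewrite both sides in terms of Fisher information and R\'enyi entropy. The basic identities are
\begin{equation*}
\|g\|_2^2=\int f_{\bfX}=1,
\qquad
\|\nabla g\|_2^2=\int\frac{\|\nabla f_{\bfX}\|^2}{4f_{\bfX}}=\frac{J(\bfX)}{4},
\end{equation*}
valid on $\{f_{\bfX}>0\}$. The first preliminary step is to check that $g\in H^1(\bbR^n)$. Finite Fisher information gives $\nabla g\in L^2$ via the chain rule $\nabla\sqrt{f}=\nabla f/(2\sqrt f)$. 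To make this rigorous, I would approximate $\sqrt{f_{\bfX}+\delta}-\sqrt{\delta}$ and let $\delta\downarrow0$. I expect this regularity point to be the only technical obstacle; everything after it is bookkeeping with exponents.

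\emph{Case $n=1$.} Here \eqref{eq:sobolev_1d} gives $\|f_{\bfX}\|_\infty=\|g\|_\infty^2\le\|g'\|_2=\sqrt{J(\bfX)}/2$. Hence
\begin{equation*}
J(\bfX)\ge4\rme^{-2h_\infty(\bfX)}.
\end{equation*}
Since $N_\infty(\bfX)=\rme^{2h_\infty(\bfX)}/(2\pi)$, this equals $a_1/N_\infty(\bfX)$ with $a_1=2/\pi$.

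\emph{Case $n=2$.} Inequality \eqref{eq:ladyzhenskaya_prelim} gives
\begin{equation*}
\int f_{\bfX}^2=\|g\|_4^4\le2\cdot\frac{J(\bfX)}{4},
\end{equation*}
so $J(\bfX)\ge2\rme^{-h_2(\bfX)}$. With $n=2$ and $\alpha=2$ we have $N_2(\bfX)=\frac{1}{2\pi}\cdot\frac12\,\rme^{h_2(\bfX)}$, so $a_2/N_2(\bfX)=2\rme^{-h_2(\bfX)}$ for $a_2=1/(2\pi)$.

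\emph{Case $n\ge3$.} Set $r=n/(n-2)$, so that $1-r=-2/(n-2)$. Then
\begin{equation*}
\|g\|_{2n/(n-2)}^2=\Big(\int f_{\bfX}^{r}\Big)^{(n-2)/n}=\exp\Big(-\frac{2}{n}h_r(\bfX)\Big).
\end{equation*}
Combining this with \eqref{eq:sharp_sobolev_prelim} and $\|\nabla g\|_2^2=J(\bfX)/4$ yields
\begin{equation*}
J(\bfX)\ge n(n-2)|\mathbb S^n|^{2/n}\exp\Big(-\frac{2}{n}h_r(\bfX)\Big).
\end{equation*}
Since $r-1=2/(n-2)$, we get $r^{-1/(r-1)}=\big(\frac{n-2}{n}\big)^{(n-2)/2}$. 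Therefore
\begin{equation*}
N_r(\bfX)=\frac{1}{2\pi}\Big(\frac{n-2}{n}\Big)^{(n-2)/2}\exp\Big(\frac{2}{n}h_r(\bfX)\Big).
\end{equation*}
The factor $\big(\frac{n-2}{n}\big)^{(n-2)/2}$ in $a_n$ exactly cancels the one in $N_r(\bfX)$, which gives \eqref{eq:fisher_renyi_entropy_power} with the stated constant.
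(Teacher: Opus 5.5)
Your proposal is correct and follows the paper's route. The paper only asserts, in one sentence deferring to Bobkov--Roberto, that one applies \eqref{eq:sobolev_1d}, \eqref{eq:ladyzhenskaya_prelim}, and \eqref{eq:sharp_sobolev_prelim} to $g=\sqrt{f_{\bfX}}$; your computations using $\|g\|_2=1$ and $\|\nabla g\|_2^2=J(\bfX)/4$, with the $L^p$ norms converted into R\'enyi entropies, reproduce $a_n$ exactly in all three cases. Your regularity step goes beyond the paper; only the inequality $\|\nabla\sqrt{f_{\bfX}}\|_2^2\le J(\bfX)/4$ is needed, and your $\sqrt{f_{\bfX}+\delta}-\sqrt{\delta}$ approximation gives it by weak lower semicontinuity.
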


\section{Inequalities for Fisher Information of Gaussian Mixtures}\label{sec:main_results}
In this section, we present the main lower bounds on the Fisher information of Gaussian mixtures. We begin with a universal matrix bound for discrete mixing distributions. 

\subsection{Lower Bound on the Fisher Information Matrix}
\label{sec:bounds_H(X) exp(-2H(X)}

We begin with the scalar directional bound.
The proof combines Proposition~\ref{prop:fejer_discrete} with the pointwise
characteristic-function constraint \eqref{eq:scalar_cf_fisher_prelim}.  

Define the universal constant
\begin{equation}
c_{\rm F}:=\frac{1}{\pi^2}.
\label{eq:def_cF}
\end{equation}

\begin{theorem}[Universal discrete-mixture Fisher bound]
\label{thm:universal_fisher_bound}
Let $\bfX\in\bbR^n$ be discrete and let
$\bfZ\sim\cN(0,\bfI_n)$ be independent of $\bfX$.  Then, for every $\epsilon>0$,
\begin{align}
\bfJ\left(\bfX+\frac{1}{\sqrt{\epsilon}}\bfZ\right)
&\succeq
\epsilon
\exp(-2H_2(\bfX)) 
\max\left\{\frac{2}{\pi},c_{\rm F}H_2(\bfX)\right\}\bfI_n.
\label{eq:universal_vector_fisher_t}
\end{align}
\end{theorem}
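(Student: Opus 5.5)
The plan is to reduce the matrix statement to a scalar inequality at unit noise level. The scalar inequality would then come from Proposition~\ref{prop:fejer_discrete} together with the characteristic-function constraint \eqref{eq:scalar_cf_fisher_prelim}.

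\emph{Reductions.} First, by scaling, $\bfJ(a\bfW)=a^{-2}\bfJ(\bfW)$. Hence $\bfJ(\bfX+\epsilon^{-1/2}\bfZ)=\epsilon\,\bfJ(\sqrt{\epsilon}\bfX+\bfZ)$, and since $H_2(\sqrt\epsilon\bfX)=H_2(\bfX)$ it suffices to take $\epsilon=1$. Next, fix a unit vector $\bfu$ and apply Lemma~\ref{lem:projection_fisher_DPI} with $\bfP=\bfu^T$. This gives $J(\bfu^T\bfX+\bfu^TZ)\le\bfu^T\bfJ(\bfX+\bfZ)\bfu$, and $\bfu^T\bfZ\sim\cN(0,1)$. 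The variable $\bfu^T\bfX$ is discrete, and projecting can only merge atoms, so the collision probability $\sum p_j^2$ can only increase. Thus $H_2(\bfu^T\bfX)\le H_2(\bfX)$. The function $g(h):=e^{-2h}\max\{2/\pi,h/\pi^2\}$ is nonincreasing: it equals $(2/\pi)e^{-2h}$ for $h\le 2\pi$, and equals $he^{-2h}/\pi^2$, which decreases for $h\ge 1/2$, when $h>2\pi$. Therefore the scalar bound $J(X+Z)\ge g(H_2(X))$, applied to $X=\bfu^T\bfX$, yields $\bfu^T\bfJ\bfu\ge g(H_2(\bfX))$ for every unit $\bfu$, which is the claim.

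\emph{Scalar bound.} Let $J=J(X+Z)$ and $H=H_2(X)$. Proposition~\ref{prop:fejer_discrete} with $n=1$ gives
\begin{equation}
T\,\rme^{-H}\le\int_{-T}^{T}\Big(1-\frac{|t|}{T}\Big)|\varphi_X(t)|^2\,\rmd t .
\end{equation}
On the right-hand side I would insert the bound $|\varphi_X(t)|^2\le\min\{1,\,J\rme^{t^2}/(J+t^2)\}$ from \eqref{eq:scalar_cf_fisher_prelim}, splitting the integral at $|t|\approx\sqrt J$.

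For the $2/\pi$ term, I would take $T$ of order one, where $\rme^{t^2}$ is bounded. Then the right-hand side is $O(\sqrt J)$, giving $J\gtrsim T^2\rme^{-T^2}\rme^{-2H}$. The constant would be tuned by optimizing $T$ and, if needed, the split point. A fallback is Proposition~\ref{prop:fisher_renyi_entropy_power} with $n=1$, which gives $J\ge4\|f_Y\|_\infty^2\ge 4(\int f_Y^2)^2\ge\rme^{-2H}/\pi$. That is off by a factor of $2$ from $2/\pi$, so some sharpening would be required.

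For the logarithmic term, I would take $T$ large, roughly $\rme^{T^2}\approx J^{-1/2}$, i.e. $T^2\approx\tfrac12\log(1/J)$. For $|t|\ge\sqrt J$ the integrand is at most $(1-t/T)J\rme^{t^2}/t^2$. Because of the triangular weight, its integral up to $T$ is of order $J\rme^{T^2}/T^4\lesssim\sqrt J$. Hence $T\rme^{-H}\lesssim\sqrt J$, so $J\gtrsim T^2\rme^{-2H}\asymp\log(1/J)\,\rme^{-2H}$. Finally, the $2/\pi$ bound already gives $J\lesssim$ nothing useful, but $\log(1/J)\gtrsim H$ holds: either $J\ge\rme^{-H}$, in which case the claim is trivial since $\rme^{-H}\ge He^{-2H}/\pi^2$, or $J<\rme^{-H}$, which gives $\log(1/J)>H$. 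Plugging this in yields $J\ge cH\rme^{-2H}$.

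\emph{Main obstacle.} The hard step is the careful evaluation of the weighted integral $\int_0^T(1-t/T)\min\{1,J\rme^{t^2}/(J+t^2)\}\,\rmd t$ near $t=T$, where the weight vanishes and tames the $\rme^{t^2}$ growth. This has to be done with explicit constants so that the bound closes with $c_{\rm F}=1/\pi^2$. I would first try $T^2=\log(1/J)/2$ exactly, bound $\rme^{t^2}\le\rme^{T^2}\rme^{-2T(T-t)}\rme^{(T-t)^2}$, and integrate against $(T-t)/T$.
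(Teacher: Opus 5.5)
Your reductions are fine. Handling every unit vector $\bfu$ at once through $H_2(\bfu^T\bfX)\le H_2(\bfX)$ (merging atoms raises $\sum_j p_j^2$) together with the monotonicity of $g$ is a valid, slightly simpler substitute for the paper's argument. The paper instead works with directions that are injective on the support and extends to all $\bfu$ by density and continuity of the quadratic form.

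The gap is in the scalar bound. The theorem asserts the explicit constants $2/\pi$ and $c_{\rm F}=1/\pi^2$, and neither of your arguments reaches them. For the constant branch, a Fej\'er argument with $T$ of order one cannot give $2/\pi$. The main term is at best $\int_{\bbR}J/(J+t^2)\,\rmd t=\pi\sqrt J$, so it yields at most $J\gtrsim (T^2/\pi^2)\rme^{-2H}$. Beating $2/\pi$ would need $T^2>2\pi$, where the factor $\rme^{t^2}$ is no longer harmless unless $J$ is already tiny. Your fallback loses exactly a factor $2$, because $\|f_Y\|_\infty\ge\|f_Y\|_2^2$ replaces $\phi_1(0)^2=1/(2\pi)$ by $\|\phi_1\|_2^4=1/(4\pi)$. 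The missing step is to evaluate $f_Y$ at the heaviest atom: $\|f_Y\|_\infty\ge f_Y(x_j)\ge p_j/\sqrt{2\pi}$. Hence $J\ge4\|f_Y\|_\infty^2\ge(2/\pi)\max_jp_j^2\ge(2/\pi)\bigl(\sum_jp_j^2\bigr)^2=(2/\pi)\rme^{-2H}$.

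For the logarithmic branch, the choice $\rme^{T^2}=J^{-1/2}$ sits at the balance point and throws away what the triangular weight buys. Your dichotomy only gives $\log(1/J)>H$, i.e.\ $T^2>H/2$, so the output is roughly $\frac{H}{2\pi^2}\rme^{-2H}$, a factor two short. Even under the sharper hypothesis $J<H\rme^{-2H}/\pi^2$ you only get $T^2>H-\frac12\log H+\log\pi$. That gives $J\ge\bigl(H-\frac12\log H+\log\pi\bigr)(1-o(1))\,\rme^{-2H}/\pi^2$, which is below $H\rme^{-2H}/\pi^2$ for large $H$. Separately, cutting at $|t|=\sqrt J$ with $\min\{1,J/t^2\}$ costs $4\sqrt J$ rather than $\pi\sqrt J$. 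On $|t|\le1$ you must keep $J/(J+t^2)$ and use $\rme^{t^2}\le1+(\rme-1)t^2$.

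The fix is to overshoot. On $[T/2,T]$, the weight $(T-t)/T$ together with $\rme^{t^2}\le\rme^{T^2}\rme^{-T(T-t)}$ makes that piece at most $8J\rme^{T^2}/T^5$, so $\rme^{T^2}$ may exceed $J^{-1/2}$ by a power of $T$. The paper assumes $J\le H\rme^{-2H}/\pi^2$ and takes $T^2=H+\frac12\log H$. After dividing by $\rme^{-H}T$:
\begin{itemize}
\item the main term contributes at most $\sqrt{H}/T\le1-\frac{\log H}{4T^2}$;
\item the near-$T$ piece contributes only $O(H^{-3/2})$;
\item the $O(J)$ and $O(J\rme^{T^2/4})$ pieces are exponentially small.
\end{itemize}
This is a contradiction for $H\ge6$. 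For $H\le6$ the logarithmic branch follows from the $2/\pi$ branch, because $H/\pi^2<2/\pi$ there.
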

\begin{IEEEproof}
By scaling property of Fisher information, it suffices to consider $\epsilon=1$. We first prove the
scalar assertion. Let $Y=X+Z$ and write
\begin{equation}
H=H_2(X),\qquad \eta=\rme^{-H}=\sum_j p_j^2,
\qquad J=J(Y).
\end{equation}

We begin with a bound valid for every $H\ge0$. The one-dimensional
Sobolev inequality~\eqref{eq:sobolev_1d}, applied to the square root
of the density of $Y$, gives
\begin{equation}
J\ge 4\|f_Y\|_\infty^2.
\end{equation}
Since $f_Y(x_j)\ge p_j/\sqrt{2\pi}$ at every atom $x_j$ of $X$,
\begin{align}
J&\ge\frac2\pi\left(\max_jp_j\right)^2\\
&\ge\frac2\pi\left(\sum_jp_j^2\right)^2
=\frac2\pi\rme^{-2H},
\label{eq:improved_constant_branch}
\end{align}
where the second inequality uses $\sum_jp_j^2\le\max_jp_j$.
In particular, since $6<2\pi$, for $0\le H\le6$ we also have
\begin{equation}
J\ge \frac{H}{\pi^2}\rme^{-2H}.
\label{eq:improved_entropy_smallH}
\end{equation}

It remains to prove the entropy-enhanced bound for $H\ge6$. By
\eqref{eq:scalar_cf_fisher_prelim} and the scalar case of
Proposition~\ref{prop:fejer_discrete}, for every $T>0$,
\begin{equation}
\eta T\le
\int_{-T}^{T}\left(1-\frac{|u|}{T}\right)
\frac{J\rme^{u^2}}{J+u^2}\,\rmd u.
\label{eq:improved_fejer_step}
\end{equation}
Take $T\ge2$ and split the integral over
$|u|\le1$, $1<|u|\le T/2$, and $T/2<|u|\le T$.
For the first part, convexity of the exponential gives
$\rme^{u^2}\le1+(\rme-1)u^2$ when $|u|\le1$. Thus,
\begin{align}
\int_{-1}^{1}\left(1-\frac{|u|}{T}\right)
\frac{J\rme^{u^2}}{J+u^2}\,\rmd u &\le \int_{\bbR}\frac{J}{J+u^2}\,\rmd u
+2(\rme-1)J\nonumber\\
&=\pi\sqrt J+2(\rme-1)J.
\label{eq:improved_integral_first}
\end{align}
On the second part, using $(J+u^2)^{-1}\le u^{-2}$,
\begin{align}
2\int_1^{T/2}\left(1-\frac uT\right)
\frac{J\rme^{u^2}}{J+u^2}\,\rmd u &\le 2J\rme^{T^2/4}
\int_1^{T/2}\frac{\rmd u}{u^2} \nonumber \\
&\le 2J\rme^{T^2/4}.
\label{eq:improved_integral_second}
\end{align}
On the last part, $u^2\ge T^2/4$ and
$T^2-u^2\ge T(T-u)$. With the substitution $v=T-u$,
\begin{align}
2\int_{T/2}^{T}\left(1-\frac uT\right)
\frac{J\rme^{u^2}}{J+u^2}\,\rmd u&\le\frac{8J\rme^{T^2}}{T^3}
\int_0^{T/2}v\rme^{-Tv}\,\rmd v\nonumber\\
&\le\frac{8J\rme^{T^2}}{T^5}.
\label{eq:improved_integral_third}
\end{align}
Combining~\eqref{eq:improved_fejer_step}--
\eqref{eq:improved_integral_third} yields
\begin{equation}
\eta T\le \pi\sqrt J
+2J\left(\rme-1+\rme^{T^2/4}\right)
+\frac{8J\rme^{T^2}}{T^5}.
\label{eq:improved_integral_master}
\end{equation}

For $H\ge6$ choose
\begin{equation}
T^2=H+\frac12\log H.
\label{eq:improved_T_choice}
\end{equation}
In particular, $T\ge2$ and
\begin{equation}
H\le T^2\le \frac{23}{20}H.
\label{eq:improved_T_bounds}
\end{equation}
Suppose, for a contradiction, that
$J\le H\rme^{-2H}/\pi^2$. Substituting this bound into
\eqref{eq:improved_integral_master}, using
$\rme^{T^2}=\rme^H\sqrt H$, and dividing by $\eta T$ gives
\begin{align}
1\le{}&\sqrt{\frac{H}{T^2}}
+\frac{2H\rme^{-H}}{\pi^2T}
\left(\rme-1+\rme^{T^2/4}\right)+\frac{8H^{3/2}}{\pi^2T^6}.
\label{eq:improved_contradiction_start}
\end{align}
To bound the first term, the inequality $\sqrt{1-x}\le1-x/2$
and~\eqref{eq:improved_T_bounds} give
\begin{equation}
\sqrt{\frac H{T^2}}
\le1-\frac{\log H}{4T^2}
\le1-\frac{5\log6}{23H}.
\end{equation}
For the last term,
\begin{equation}
\frac{8H^{3/2}}{\pi^2T^6}
\le\frac8{\pi^2H^{3/2}}
\le\frac8{\pi^2\sqrt6}\frac1H.
\end{equation}
Finally, the middle term in~\eqref{eq:improved_contradiction_start}
is bounded by
\begin{align}
&\frac{2H\rme^{-H}}{\pi^2T}
\left(\rme-1+\rme^{T^2/4}\right)\nonumber\\
&\quad\le\frac{2\sqrt H}{\pi^2}
\left[(\rme-1)\rme^{-H}
+H^{1/8}\rme^{-3H/4}\right]\nonumber\\
&\quad\le\frac{C_6}{H},
\end{align}
where
\begin{equation}
C_6:=\frac{2\cdot6^{3/2}}{\pi^2}
\left[(\rme-1)\rme^{-6}+6^{1/8}\rme^{-9/2}\right]
<0.055.
\end{equation}
Here the last step follows because $H^{3/2}\rme^{-H}$ and
$H^{13/8}\rme^{-3H/4}$ are decreasing for $H\ge6$.
Consequently, the right-hand side of
\eqref{eq:improved_contradiction_start} is at most
\begin{equation}
1-\frac1H\left(\frac{5\log6}{23}
-\frac8{\pi^2\sqrt6}-C_6\right)<1,
\end{equation}
since the expression in parentheses is positive. This is a
contradiction. Hence $J\ge H\rme^{-2H}/\pi^2$ for $H\ge6$.
Together with~\eqref{eq:improved_constant_branch} and
\eqref{eq:improved_entropy_smallH}, this proves the scalar bound.

The vector extension follows by the same projection argument as in
Appendix~\ref{app:vector_projection}. For general $\epsilon>0$, apply the result at unit
noise variance to $\sqrt\epsilon\bfX+\bfZ$ and use the scaling of
Fisher information and the invariance of discrete R\'enyi entropy
under injective scaling. This concludes the proof.
\end{IEEEproof}

Theorem~\ref{thm:universal_fisher_bound} is stated in terms of the R\'enyi entropy of order two. In some applications, however, it is more convenient to work with R\'enyi entropies of other orders. Accordingly, for \(p\in[0,2]\), define
\begin{equation}
\beta_p(\bfX) := \exp(-2H_p(\bfX))
\max\left\{\frac{2}{\pi},c_{\rm F}H_p(\bfX)\right\}.
\label{eq:def_beta_p_new}
\end{equation}
with the convention that $\beta_p(\bfX)=0$ whenever $H_p(\bfX)=\infty$. The monotonicity of R\'enyi entropy then yields the following extension of Theorem~\ref{thm:universal_fisher_bound}.

\begin{corollary}\label{cor:renyi_family}
Under the assumptions of Theorem~\ref{thm:universal_fisher_bound}, for every
$p\in[0,2]$ and $\epsilon>0$,
\begin{equation}
\bfJ\left(
\bfX+\frac{1}{\sqrt{\epsilon}}\bfZ
\right)
\succeq
\epsilon\beta_p(\bfX)\bfI_n.
\label{eq:renyi_family_bound}
\end{equation}
\end{corollary}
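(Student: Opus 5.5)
The plan is to reduce the corollary to Theorem~\ref{thm:universal_fisher_bound} using two facts. The first is the monotonicity of Rényi entropy in its order. The second is that the scalar function multiplying $\epsilon\bfI_n$ is nonincreasing in the entropy.

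Define, for $H\in[0,\infty)$,
\begin{equation}
g(H):=\rme^{-2H}\max\left\{\frac{2}{\pi},c_{\rm F}H\right\},
\end{equation}
extended by $g(\infty)=0$. With this notation Theorem~\ref{thm:universal_fisher_bound} reads $\bfJ(\bfX+\epsilon^{-1/2}\bfZ)\succeq\epsilon\, g(H_2(\bfX))\bfI_n$, and by definition $\beta_p(\bfX)=g(H_p(\bfX))$. It therefore suffices to show $g(H_p(\bfX))\le g(H_2(\bfX))$ for every $p\in[0,2]$.

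\textbf{Step 1 (monotonicity in the order).} For a discrete $\bfX$, $p\mapsto H_p(\bfX)$ is nonincreasing on $[0,\infty]$, including the endpoint extensions $H_0$ (log of the support size) and $H_1$ (Shannon entropy). Hence $H_p(\bfX)\ge H_2(\bfX)$ for every $p\in[0,2]$. This is standard: it follows from Jensen/Lyapunov applied to $\log\sum_j p_j^{p}$, or equivalently from the monotonicity of $p\mapsto\|P\|_{p}$-type power means. If $H_p(\bfX)=\infty$, then $\beta_p(\bfX)=0$ by convention, and the claim is trivial because a Fisher information matrix is positive semidefinite. So assume $H_p(\bfX)<\infty$.

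\textbf{Step 2 (monotonicity of $g$).} I will check that $g$ is nonincreasing on $[0,\infty)$.
\begin{itemize}
\item The branch $\frac{2}{\pi}\rme^{-2H}$ is decreasing everywhere.
\item The branch $c_{\rm F}H\rme^{-2H}$ has derivative $c_{\rm F}(1-2H)\rme^{-2H}$, so it is decreasing for $H\ge 1/2$.
\item The maximum equals $c_{\rm F}H$ exactly when $H\ge 2/(\pi c_{\rm F})=2\pi>1/2$.
\end{itemize}
Consequently, on $[0,2\pi]$ the function $g$ coincides with the first branch, and on $[2\pi,\infty)$ with the second. Both branches are decreasing on their respective ranges, and $g$ is continuous at $H=2\pi$, so $g$ is nonincreasing. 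This is the only point requiring any care, namely that the crossover point $2\pi$ lies beyond the maximizer $1/2$ of $H\rme^{-2H}$, and it is immediate.

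\textbf{Step 3 (conclusion).} Combining the previous steps gives
\begin{equation}
\epsilon\beta_p(\bfX)\bfI_n=\epsilon g(H_p(\bfX))\bfI_n\preceq\epsilon g(H_2(\bfX))\bfI_n\preceq\bfJ\left(\bfX+\frac{1}{\sqrt{\epsilon}}\bfZ\right).
\end{equation}
Here the first inequality uses Steps 1 and 2, and the second is Theorem~\ref{thm:universal_fisher_bound}. This proves \eqref{eq:renyi_family_bound}.
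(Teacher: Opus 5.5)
Your proposal is correct and follows essentially the same route as the paper. Both combine $H_p(\bfX)\ge H_2(\bfX)$ for $p\le 2$ with the fact that $h\mapsto \rme^{-2h}\max\{2/\pi,h/\pi^2\}$ is nonincreasing, since the branch switch at $h=2\pi$ lies beyond the peak of $h\rme^{-2h}$ at $h=1/2$. Your treatment of the infinite-entropy case, using positive semidefiniteness of the Fisher information matrix, is a slightly more explicit version of the paper's appeal to continuity.
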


\begin{IEEEproof}
R\'enyi entropy is nonincreasing in its order, so
$H_p(\bfX)\ge H_2(\bfX)$ for $p\le2$. Define
\begin{equation}
b(h):=\rme^{-2h}\max\left\{\frac2\pi,\frac{h}{\pi^2}\right\},
\qquad h\ge0.
\end{equation}
For $0\le h\le2\pi$, the maximum equals $2/\pi$ and $b(h)$
is decreasing. For $h\ge2\pi$, the maximum equals $h/\pi^2$
and $b(h)$ is also decreasing, since $h\rme^{-2h}$ is decreasing
for $h\ge1/2$. Thus $b$ is nonincreasing on $[0,\infty)$.
Theorem~\ref{thm:universal_fisher_bound} now implies
\begin{equation}
\bfJ\left(\bfX+\frac1{\sqrt\epsilon}\bfZ\right)
\succeq \epsilon b(H_2(\bfX))\bfI_n
\succeq \epsilon b(H_p(\bfX))\bfI_n,
\end{equation}
which proves the result. The convention at infinite entropy follows
by continuity.
\end{IEEEproof}

The finite-support consequence exposes the role of the logarithm most clearly.
If
\begin{equation}
K=|\supp(P_{\bfX})|,
\end{equation}
then $H_0(\bfX)=\log K$, and therefore
\begin{equation}
\bfJ\left(
\bfX+\frac{1}{\sqrt{\epsilon}}\bfZ
\right)
\succeq
\frac{\epsilon}{K^2}
\max\left\{
\frac{2}{\pi},
\frac{\log K}{\pi^2}
\right\}\bfI_n.
\label{eq:finite_support_loewner}
\end{equation}
Thus the least Fisher-information direction cannot decay faster than
$(\log K)/K^2$ up to constants.  This directional interpretation is important:
the exponent $K^{-2}$ is not a defect of the theorem, but the price of asking
for a lower bound in \emph{every} direction. For $n=1$, the bound in~\eqref{eq:finite_support_loewner}
improves on Stam's $K^{-2}$ bound if and only if
\begin{equation}
\frac{\log K}{\pi^2}>1,
\end{equation}
i.e., for $K>\rme^{\pi^2}$.

\subsection{Dimension-Aware Trace Bounds}
\label{sec:trace_bound}

Recall that the trace operator is monotone in positive-semidefinite order, which by using Theorem~\ref{thm:universal_fisher_bound} implies that
\begin{equation}
J\left(\bfX+\frac{1}{\sqrt{\epsilon}}\bfZ\right)
\ge \frac{\epsilon n}{\pi^2}
H_2(\bfX)\rme^{-2H_2(\bfX)}.
\label{eq:our_inequlaity_after_trace}
\end{equation}
On the other hand, by applying Stam's inequality we get that 
\begin{equation}
J\left(\bfX+\frac{1}{\sqrt{\epsilon}}\bfZ\right)\ge \epsilon n \exp\left(-\frac{2}{n}H(\bfX)\right).  \label{eq:Stam_entropy_ineq}
\end{equation}
Comparing the two exponents, we see that the exponent in \eqref{eq:our_inequlaity_after_trace}, which is a consequence of Theorem~\ref{thm:universal_fisher_bound}, has a worse dimensionality dependence than Stam's inequality.  

We now present an inequality that retains the Rényi entropy in the exponent while also exhibiting the appropriate dependence on the ambient dimension. We first present a general inequality in terms of $\alpha$-mutual information and then specialize it to discrete case with R\'enyi entropy.

\begin{theorem}[Dimension-aware trace bound]
\label{thm:dimension_aware}
Let $n\ge1$, let $\bfX\in\bbR^n$, and let
$\bfZ\sim\cN(0,\bfI_n)$ be independent of $\bfX$. 

Then
\begin{equation}
J\left(
\bfX+\frac{1}{\sqrt{\epsilon}}\bfZ
\right)
\ge
\begin{cases}
\displaystyle
a_1\epsilon
\exp\left(
-2I_0(\bfX;\epsilon)
\right),
& n=1,
\\[2ex]
\displaystyle
a_n\epsilon
\exp\left(
-\frac{2}{n}
I_{\frac{1}{r_n}}\left(\bfX;r_n\epsilon\right)
\right),
& n\ge2,
\end{cases}
\label{eq:dimension_aware_main}
\end{equation}
where $r_n$ and $a_n$ are given in~\eqref{eq:rn_definition} and~\eqref{eq:fisher_renyi_entropy_power_constant}, respectively.
\end{theorem}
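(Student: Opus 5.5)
Apply Proposition~\ref{prop:fisher_renyi_entropy_power} to $\bfW=\bfX+\epsilon^{-1/2}\bfZ$, which has a smooth positive density with finite Fisher information. This gives
\begin{equation}
J(\bfW)\ge a_n/N_{r_n}(\bfW).
\end{equation}
Then rewrite $h_{r_n}(\bfW)$ via the representation \eqref{eq:sibson_renyi_entropy_representation} (or \eqref{eq:order_zero_mutual_information_renyi} when $n=1$). What remains is a Gaussian Rényi-entropy computation, which should exactly cancel the normalization in $N_{r_n}$ and produce the factor $\epsilon$.

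\emph{Case $n=1$.} Here $r_1=\infty$ and $N_\infty(\bfW)=\frac{1}{2\pi}\exp(2h_\infty(\bfW))$. By \eqref{eq:order_zero_mutual_information_renyi},
\begin{equation}
h_\infty(\bfW)=I_0(\bfX;\epsilon)+h_\infty(\epsilon^{-1/2}Z).
\end{equation}
Since $\|\phi_{\epsilon^{-1/2}}\|_\infty=\sqrt{\epsilon/(2\pi)}$, we get $h_\infty(\epsilon^{-1/2}Z)=\tfrac12\log(2\pi/\epsilon)$. Hence $N_\infty(\bfW)=\epsilon^{-1}\exp(2I_0(\bfX;\epsilon))$, which gives $J\ge a_1\epsilon \rme^{-2I_0}$.

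\emph{Case $n\ge2$.} Take $\alpha=1/r_n$ and SNR parameter $r_n\epsilon$ in \eqref{eq:sibson_renyi_entropy_representation}. Then $\alpha\cdot r_n\epsilon=\epsilon$, so the noise is exactly $\epsilon^{-1/2}\bfZ$ and
\begin{equation}
h_{r_n}(\bfW)=I_{1/r_n}(\bfX;r_n\epsilon)+h_{r_n}(\epsilon^{-1/2}\bfZ).
\end{equation}
The standard Gaussian Rényi entropy formula gives
\begin{equation}
h_p(\cN(0,\sigma^2\bfI_n))=\frac n2\log(2\pi\sigma^2)+\frac{n}{2}\frac{\log p}{p-1}.
\end{equation}
With $\sigma^2=1/\epsilon$ this yields
\begin{equation}
\exp\!\Big(\tfrac2n h_{r_n}(\epsilon^{-1/2}\bfZ)\Big)=\frac{2\pi}{\epsilon}\,r_n^{1/(r_n-1)}.
\end{equation}
This cancels exactly against the prefactor $\frac{1}{2\pi}r_n^{-1/(r_n-1)}$ in the definition of $N_{r_n}$, leaving
\begin{equation}
N_{r_n}(\bfW)=\epsilon^{-1}\exp\!\Big(\tfrac2n I_{1/r_n}(\bfX;r_n\epsilon)\Big).
\end{equation}
Substituting into Proposition~\ref{prop:fisher_renyi_entropy_power} gives \eqref{eq:dimension_aware_main}.

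\emph{Main obstacle.} The key step is checking that the normalization $\alpha^{-1/(\alpha-1)}$ in \eqref{eq:def_normalized_renyi_entropy_power} matches the Gaussian constant, with no stray factor. Otherwise $a_n$ would absorb an extra constant and the stated bound would be off. I will also confirm the regularity hypotheses of Proposition~\ref{prop:fisher_renyi_entropy_power}. Gaussian smoothing gives a smooth positive density, and $\sqrt{f_\bfW}\in H^1$ holds whenever $J(\bfW)<\infty$; if $J(\bfW)=\infty$ the bound is trivial. Finally, for arbitrary (non-discrete) $\bfX$ the representation \eqref{eq:sibson_renyi_entropy_representation} is quoted as valid for all $\alpha>0$. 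The case $n=2$, where $r_2=2$ and $\alpha=1/2$, falls under the same formula.
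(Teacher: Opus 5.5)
Your proposal is correct and follows the paper's proof essentially step for step. You apply Proposition~\ref{prop:fisher_renyi_entropy_power} to $\bfX+\epsilon^{-1/2}\bfZ$, rewrite $h_{r_n}$ (or $h_\infty$ when $n=1$) via the Gaussian representation of Sibson mutual information with order $1/r_n$ and parameter $r_n\epsilon$, and check that the Gaussian constant $\frac{2\pi}{\epsilon}r_n^{1/(r_n-1)}$ (respectively $h_\infty(\epsilon^{-1/2}Z)=\frac12\log\frac{2\pi}{\epsilon}$) cancels the normalization in $N_{r_n}$, exactly as the paper does.
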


\begin{IEEEproof}
Set
\begin{equation}
\bfY=\bfX+\frac{1}{\sqrt{\epsilon}}\bfZ.
\end{equation}
Since $\bfY$ is a Gaussian convolution,
Proposition~\ref{prop:fisher_renyi_entropy_power} gives
\begin{equation}
J(\bfY)
\ge
\frac{a_n}{N_{r_n}(\bfY)}.
\label{eq:dimension_aware_from_entropy_power}
\end{equation}

Suppose first that $n\ge2$. By the Gaussian representation of Sibson
mutual information in Proposition~\ref{prop:sibson_properties},
\begin{equation}
I_{\frac{1}{r_n}}(\bfX;r_n\epsilon)
=
h_{r_n}(\bfY)
-
h_{r_n}\left(
\frac{1}{\sqrt{\epsilon}}\bfZ
\right).
\label{eq:sibson_entropy_difference}
\end{equation}
Moreover,
\begin{equation}
h_{r_n}\left(
\frac{1}{\sqrt{\epsilon}}\bfZ
\right)
=
\frac{n}{2}\log\frac{2\pi}{\epsilon}
+
\frac{n}{2}\frac{\log r_n}{r_n-1}.
\end{equation}
Consequently,
\begin{equation}
\frac{1}{N_{r_n}(\bfY)}
=
\epsilon
\exp\left(
-\frac{2}{n}
I_{\frac{1}{r_n}}(\bfX;r_n\epsilon)
\right).
\label{eq:entropy_power_sibson_identity}
\end{equation}

For $n=1$, Proposition~\ref{prop:sibson_properties} gives
\begin{equation}
h_\infty\left(
\frac{1}{\sqrt{\epsilon}}Z
\right)
=
\frac{1}{2}\log\frac{2\pi}{\epsilon},
\end{equation}
the definition of $N_\infty$ yields
\begin{equation}
N_\infty(\bfY)
=
\frac{1}{2\pi}
\exp\left(2h_\infty(\bfY)\right),
\end{equation}
thus we obtain
\begin{equation}
\frac{1}{N_\infty(\bfY)}
=
\epsilon
\exp\left(
-2I_0(\bfX;\epsilon)
\right). \label{eq:entropy_power_scaled_order_zero_identity}
\end{equation}
Substituting \eqref{eq:entropy_power_sibson_identity} and
\eqref{eq:entropy_power_scaled_order_zero_identity} into
\eqref{eq:dimension_aware_from_entropy_power} proves the result.
\end{IEEEproof}

Combining Theorem~\ref{thm:dimension_aware} with the discrete-input bound
on Sibson mutual information in
Proposition~\ref{prop:sibson_properties} gives the following
R\'enyi-entropy consequence.

\begin{corollary}
\label{cor:dimension_aware_discrete}
Let $\bfX\in\bbR^n$ be a discrete random vector, and let
$\bfZ\sim\cN(0,\bfI_n)$ be independent of $\bfX$. Then, for every
$\epsilon>0$,
\begin{equation}
J\left(
\bfX+\frac{1}{\sqrt{\epsilon}}\bfZ
\right)
\ge
a_n\epsilon
\exp\left(
-\frac{2}{n}H_{r_n}(\bfX)
\right).
\label{eq:dimension_aware_discrete}
\end{equation}
\end{corollary}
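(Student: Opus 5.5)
The plan is to combine Theorem~\ref{thm:dimension_aware} with the discrete-input bound \eqref{eq:sibson_discrete_entropy_bound} from Proposition~\ref{prop:sibson_properties}. In every case the right-hand side of \eqref{eq:dimension_aware_main} is decreasing in the relevant Sibson mutual information. So any upper bound on that quantity by a R\'enyi entropy of $\bfX$ passes directly to a lower bound on $J$.

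For $n\ge2$ we apply \eqref{eq:sibson_discrete_entropy_bound} with $\alpha=1/r_n>0$ and noise parameter $r_n\epsilon$. This gives
\begin{equation}
I_{\frac{1}{r_n}}(\bfX;r_n\epsilon)\le H_{r_n}(\bfX),
\end{equation}
since $1/\alpha=r_n$. Substituting this into \eqref{eq:dimension_aware_main} and using that $x\mapsto\exp(-2x/n)$ is decreasing yields \eqref{eq:dimension_aware_discrete} directly.

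The only step that needs care is $n=1$, where $r_1=\infty$ and the bound involves $I_0(\bfX;\epsilon)$. This quantity is defined as the limit of $I_\alpha(\bfX;\epsilon/\alpha)$ as $\alpha\to0$, so \eqref{eq:sibson_discrete_entropy_bound} does not apply verbatim. The first route is to note that for each $\alpha>0$, \eqref{eq:sibson_discrete_entropy_bound} at noise parameter $\epsilon/\alpha$ gives $I_\alpha(\bfX;\epsilon/\alpha)\le H_{1/\alpha}(\bfX)\le H_\infty(\bfX)$, because R\'enyi entropy is nonincreasing in its order. Letting $\alpha\downarrow0$ then gives $I_0(\bfX;\epsilon)\le H_\infty(\bfX)=H_{r_1}(\bfX)$.

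As a cross-check that avoids the limit, one can use \eqref{eq:order_zero_mutual_information_renyi} directly. The density of $\bfX+\epsilon^{-1/2}Z$ satisfies $f\le \max_j p_j\cdot\sqrt{\epsilon/(2\pi)}+\sum_{k\neq j}(\cdot)$. A cleaner version of this is
\begin{equation}
\|f\|_\infty\ge \max_j p_j\sqrt{\epsilon/(2\pi)},
\end{equation}
so $h_\infty(\bfY)\le -\log\max_jp_j+\tfrac12\log(2\pi/\epsilon)$. This again gives $I_0\le H_\infty$.

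Plugging the $n=1$ bound into the first line of \eqref{eq:dimension_aware_main} yields $J\ge a_1\epsilon\exp(-2H_\infty(\bfX))$, which is \eqref{eq:dimension_aware_discrete} for $n=1$. No further obstacles are expected.
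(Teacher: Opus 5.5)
Your proof is correct overall and follows the paper's approach, which derives the corollary in one line by inserting $I_\alpha(\bfX;\epsilon)\le H_{1/\alpha}(\bfX)$ from Proposition~\ref{prop:sibson_properties} into Theorem~\ref{thm:dimension_aware}. Your $n\ge2$ step is exactly this argument.

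Your first route for $n=1$ contains a reversed inequality. You write $H_{1/\alpha}(\bfX)\le H_\infty(\bfX)$ because R\'enyi entropy is nonincreasing in its order, but monotonicity actually gives $H_{1/\alpha}(\bfX)\ge H_\infty(\bfX)$. That route needs the limit $H_{1/\alpha}(\bfX)\to H_\infty(\bfX)$ as $\alpha\downarrow0$ instead. This limit follows from $H_\infty\le H_p\le\frac{p}{p-1}H_\infty$ for $p>1$, which in turn comes from $(\max_j p_j)^p\le\sum_j p_j^p$.

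Your second route is correct and suffices on its own. Evaluating the density at the heaviest atom gives $\|f\|_\infty\ge\max_j p_j\sqrt{\epsilon/(2\pi)}$, the same observation the paper uses in the proof of Theorem~\ref{thm:universal_fisher_bound}. You should delete the garbled upper-bound sentence that precedes it.
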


We make the following remarks about Theorem~\ref{thm:dimension_aware} and Corollary~\ref{cor:dimension_aware_discrete}:
\begin{itemize}
    \item  The coefficient $a_n$  has the following properties 
(see Appendix~\ref{app:properties_an}):
\begin{subequations}
\begin{align}
0<a_n&<n,
\label{eq:an_less_n}
\\
\lim_{n\to\infty}\frac{a_n}{n}&=1,
\label{eq:an_ratio_limit}
\\
\lim_{n\to\infty}(n-a_n)&=1-\log 2.
\label{eq:an_difference_limit}
\end{align}
\label{eq:assumptotics_of_an}
\end{subequations}
\item One way to see that the bound in \eqref{eq:dimension_aware_main} can be strictly stronger than Stam's inequality is to compare their finiteness conditions. It was shown in \cite{milanian2026alpha} that
\begin{equation}
I_\alpha(\bfX;\epsilon)<\infty,
\qquad \epsilon>0,
\end{equation}
for every \(\alpha\in(0,1)\). On the other hand,
\begin{equation}
I(\bfX;\epsilon)<\infty
\quad\Longleftrightarrow\quad
H(\lfloor\bfX\rfloor)<\infty;
\end{equation}
see \cite{wu2011functional}, where the floor function is applied componentwise. Consequently, when
\(H(\lfloor\bfX\rfloor)=\infty\), Stam's inequality reduces to the trivial lower bound $J\ge0$, whereas the bound in \eqref{eq:dimension_aware_main} remains strictly positive.
\item The dimension-aware bound in \eqref{eq:dimension_aware_discrete} is
stronger than the entropy-based consequence of Stam's inequality in
\eqref{eq:Stam_entropy_ineq} whenever
\begin{equation}
H(\bfX)-H_{r_n}(\bfX)
>
\frac{n}{2}\log\frac{n}{a_n}.
\label{eq:dimension_aware_vs_stam_condition}
\end{equation}

\item In higher dimensions, the analogue of \eqref{eq:intro_K_stam} obtained by using
\begin{equation}
H_p(\bfX)\le H_0(\bfX)=\log K
\end{equation}
follows from both Stam's inequality in \eqref{eq:Stam_entropy_ineq} and the
dimension-aware bound in \eqref{eq:dimension_aware_discrete}.  However,
since $a_n<n$, Stam's inequality yields a slightly better
dimension-dependent constant when only the support cardinality $K$ is
used.  Indeed, the two bounds reduce respectively to
\begin{equation}
J\left(
\bfX+\frac{1}{\sqrt{\epsilon}}\bfZ
\right)
\ge
 \epsilon n K^{-2/n}
\end{equation}
and
\begin{equation}
J\left(
\bfX+\frac{1}{\sqrt{\epsilon}}\bfZ
\right)
\ge
\epsilon a_n K^{-2/n}.
\end{equation}
\item Finally, we remark on the scalar case of \eqref{eq:dimension_aware_discrete}. For $n=1$, the bound reduces to
\begin{equation}
J\left(
X+\frac{1}{\sqrt{\epsilon}}Z
\right)
\ge
\frac{2\epsilon}{\pi}
\rme^{-2H_\infty(X)}
=\frac{2\epsilon}{\pi}
\max_j p_j^2.
\label{eq:max_prob_fisher}
\end{equation}
Since $p\mapsto H_p(X)$ is nonincreasing, $H_\infty(X)$ yields the largest exponential factor among all R\'enyi entropy orders. 
Thus, a small normalized Fisher information forces the probability mass of $X$ to be dispersed among its support points. In particular, when the support size is fixed, this bound provides a quantitative indication that the distribution of $X$ is close to uniform.
 \item The results in this work can also be viewed through the lens of best approximation theory in terms of Fisher divergence. In particular, recall that the Fisher divergence is defined as 
 \begin{equation}
     \cI(P\|Q) = \int | \rho_P(x) -\rho_Q(x) |^2 f_X(x) \rmd x,
 \end{equation}
 where $f_X$ is the density of distribution $P$ and where $\rho_P$ and $\rho_Q$ are score functions of $P$ and $Q$, respectively. Then, our results can be restated into the following form: for $v>0$
 \begin{align}
    & \min_{P_X: \: X  \text{ discrete},\, \bbE[X^2]=v,\,\bbE[X]=0} \cI(P_{X+Z}\| P_{X_G+Z})  \ge \left( \beta_p(X) - (v+1) \right )_{+},  \label{eq:lower_bound_fisher_div}
 \end{align}
 where $X_G \sim \mathcal{N}(0,v)$. The lower bound in \eqref{eq:lower_bound_fisher_div} follows by using that
 \begin{equation}
     \cI(P_{X+Z}\| P_{X_G+Z}) = J(X+Z) - J(X_G+Z).
 \end{equation}
 \end{itemize}

The dimension-aware bound in Theorem~\ref{thm:dimension_aware} depends on
the mixing distribution through Sibson mutual information of order
$1/r_n$. The following complementary bound uses the order $1/2$ Sibson
mutual information. 

\begin{theorem}
\label{thm:gaussian_kernel_fisher}
Let $n\ge4$, let $\bfX$ be an arbitrary random vector in $\bbR^n$, and let
$\bfZ\sim\cN(0,\bfI_n)$ be independent of $\bfX$. Then, for every
$\epsilon>0$,
\begin{equation}
J\left(\bfX+\frac{1}{\sqrt{\epsilon}}\bfZ\right)
\ge\epsilon a_n\exp\left(-\frac{1}{2}I_{\frac12}\left(\bfX;\frac{8\epsilon}{n}\right)\right).
\label{eq:Gaussian_kernel_fisher_sibson}
\end{equation}
\end{theorem}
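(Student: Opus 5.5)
The plan is to reduce the statement to Proposition~\ref{prop:fisher_renyi_entropy_power} and then lower-bound the critical $L^{r_n}$ norm of the output density by an $L^2$ norm of a \emph{more} smoothed density, using the sharp Young convolution inequality~\eqref{eq:sharp_young_prelim}. The $L^2$ norm of a Gaussian-smoothed density is exactly a Gaussian-kernel collision quantity, i.e., $I_{1/2}$ via~\eqref{eq:sibson_half_gaussian_kernel}.

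\emph{Step 1: reduce to a lower bound on $\|f_{\bfY}\|_r$.} Let $\bfY=\bfX+\epsilon^{-1/2}\bfZ$ and $r=r_n=n/(n-2)$. Since $h_r(\bfY)=\frac{r}{1-r}\log\|f_{\bfY}\|_r$ and $\frac{2}{n}\cdot\frac{r}{1-r}=-1$, the definition~\eqref{eq:def_normalized_renyi_entropy_power} gives
\begin{equation}
N_r(\bfY)=\frac{r^{-1/(r-1)}}{2\pi}\,\|f_{\bfY}\|_r^{-1}.
\end{equation}
Hence~\eqref{eq:fisher_renyi_entropy_power} reads
\begin{equation}
J(\bfY)\ge a_n\, 2\pi r^{1/(r-1)}\|f_{\bfY}\|_r .
\end{equation}
It therefore suffices to show
\begin{equation}
2\pi r^{1/(r-1)}\|f_{\bfY}\|_r\ge \epsilon\exp\!\left(-\tfrac12 I_{1/2}(\bfX;8\epsilon/n)\right).
\end{equation}

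\emph{Step 2: apply sharp Young to reversed smoothing.} Set $\sigma^2=n/(4\epsilon)$ and $b^2=\sigma^2-1/\epsilon=(n-4)/(4\epsilon)\ge0$. This is exactly where $n\ge4$ is needed; for $n=4$ we have $b=0$, and the step is just the trivial interpolation or Young bound with $q=2$. Let $\bfW=\bfX+\sigma\bfZ$, so that $f_{\bfW}=f_{\bfY}*\phi_b$. Choose $q$ from $1/r+1/q=1+1/2$, i.e. $1/q=1/2+1/n$. Sharp Young then gives
\begin{equation}
\|f_{\bfW}\|_2\le \left(\frac{C_rC_q}{C_2}\right)^n\|f_{\bfY}\|_r\,\|\phi_b\|_q .
\end{equation}
Here $\|\phi_b\|_q$ is explicit: it equals $(2\pi b^2)^{-\frac n2(1-1/q)}q^{-n/(2q)}$.

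\emph{Step 3: identify $\|f_{\bfW}\|_2$ with $I_{1/2}$.} Fubini gives
\begin{equation}
\|f_{\bfW}\|_2^2=\bbE[\phi_{\sqrt2\sigma}(\bfX-\bfX')]=(4\pi\sigma^2)^{-n/2}\,\bbE\!\left[\rme^{-\|\bfX-\bfX'\|^2/(4\sigma^2)}\right].
\end{equation}
Since $1/(4\sigma^2)=\epsilon/n=\frac18\cdot\frac{8\epsilon}{n}$, property~\eqref{eq:sibson_half_gaussian_kernel} yields
\begin{equation}
\|f_{\bfW}\|_2=(4\pi\sigma^2)^{-n/4}\exp\!\left(-\tfrac12 I_{1/2}(\bfX;8\epsilon/n)\right).
\end{equation}
Combining Steps 1--3 gives $J(\bfY)\ge a_n\kappa_n\,\epsilon\exp(-\frac12 I_{1/2})$ for an explicit $\kappa_n$ built from $r^{1/(r-1)}$, the Young constants $C_r,C_q,C_2$, $\|\phi_b\|_q$, and $(4\pi\sigma^2)^{n/4}$. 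All $\epsilon$-dependence cancels to a single factor $\epsilon$, because every length scale is proportional to $\epsilon^{-1/2}$.

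\emph{Main obstacle.} The main difficulty is the constant check $\kappa_n\ge1$. I expect equality, because Gaussians are the extremizers of sharp Young and the variances were chosen so that the three Gaussians $\phi_{1/\sqrt\epsilon}$, $\phi_b$, $\phi_\sigma$ satisfy the Beckner--Brascamp--Lieb compatibility condition. This condition is presumably what forces the specific choice $8\epsilon/n$. I would first verify it on $\bfX\equiv0$, where Step 3 and the Young step are equalities, and then reduce $\kappa_n$ to an identity in $n$ via $r=n/(n-2)$ and $1/q=1/2+1/n$. If exact equality fails, I would instead adjust $b$ (and thus the argument $8\epsilon/n$) to optimize the Young step, or check that the discrepancy is $\ge1$.
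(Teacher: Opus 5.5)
Your proposal is correct and follows the paper's proof step for step. Both use sharp Sobolev (here via Proposition~\ref{prop:fisher_renyi_entropy_power}) to reduce to $\|f_{\bfY}\|_{n/(n-2)}$, then sharp Young with extra Gaussian smoothing of variance $(n-4)/(4\epsilon)$ up to total variance $n/(4\epsilon)$, then the Gaussian collision identity for $\|f_{\bfW}\|_2$ and \eqref{eq:sibson_half_gaussian_kernel}, with $n=4$ treated separately.

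There is one arithmetic slip. From $1/r+1/q=3/2$ and $1/r=1-2/n$ you get $1/q=\tfrac12+\tfrac2n$, i.e.\ $q=2n/(n+4)$ and $q'=2n/(n-4)$ as in the paper, not $1/q=\tfrac12+\tfrac1n$. Also, for $n=4$ you have $r_4=2$ and $\bfW=\bfY$, so no Young step is needed at all; the formal exponent would be $q=1$, not $q=2$.

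With the corrected $q$, your constant check closes with equality. The sharp-Young Gaussian equality condition holds: $(1/\epsilon)/\sigma^2=4/n=2/r'$ and $b^2/\sigma^2=(n-4)/n=2/q'$, with $r'=n/2$. So Young is an equality at $\bfX\equiv0$, and together with $N_r(\epsilon^{-1/2}\bfZ)=1/\epsilon$ this gives $\kappa_n=1$ and no adjustment of $b$. This is a cleaner route to the paper's \eqref{eq:young_simplified} than expanding the Young constants.
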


\begin{IEEEproof}
Let $f_u$ denote the density of $\bfX+u\bfZ$. We first note the identity
\begin{equation}
\|f_u\|_2^2
=
(4\pi u^2)^{-n/2}
\bbE\left[
\exp\left(
-\frac{\|\bfX-\bfX'\|^2}{4u^2}
\right)
\right],
\label{eq:L2_gaussian_collision}
\end{equation}
where $\bfX'$ is an independent copy of $\bfX$. Indeed, this follows by
conditioning on $(\bfX,\bfX')$ and using
\begin{align}
&\int_{\bbR^n}
\phi_u(\bfy-\bfX)\phi_u(\bfy-\bfX')\,\rmd\bfy
\nonumber\\
&\quad=\frac{1}{
(4\pi u^2)^{\frac{n}{2}}}
\exp\left(
-\frac{\|\bfX-\bfX'\|^2}{4u^2}
\right).
\end{align}

Assume first that $n>4$ and set
\begin{equation}
p=\frac{n}{n-2},
\quad
q=\frac{2n}{n+4},
\quad
s=\sqrt{\frac{n}{4\epsilon}},
\quad
\sigma=\sqrt{\frac{n-4}{4\epsilon}}.
\label{eq:young_parameter_choice}
\end{equation}
Then
\begin{equation}
s^2=\frac{1}{\epsilon}+\sigma^2,
\qquad
f_s=f_{1/\sqrt{\epsilon}}*\phi_\sigma,
\end{equation}
and \(1/p+1/q=1+1/2\). The sharp Young inequality
\eqref{eq:sharp_young_prelim} therefore gives
\begin{equation}
\|f_s\|_2
\le
(C_pC_q)^n
\|f_{1/\sqrt{\epsilon}}\|_p
\|\phi_\sigma\|_q.
\label{eq:young_kernel_step}
\end{equation}
Using
\begin{equation}
\|\phi_\sigma\|_q
=
(2\pi\sigma^2)^{-\frac{n}{2}(1-1/q)}
q^{-n/(2q)}
\end{equation}
together with \eqref{eq:L2_gaussian_collision}, and simplifying the
sharp Young constants, yields
\begin{align}
\|f_{1/\sqrt{\epsilon}}\|_p
\ge{}&
\frac{\epsilon}{2\pi}
\left(\frac{n-2}{n}\right)^{\frac{n-2}{2}}
\left(
\bbE \left[\exp\left(
-\frac{\epsilon}{n}\|\bfX-\bfX'\|^2
\right) \right]
\right)^{\frac{1}{2}}.
\label{eq:young_simplified}
\end{align}
By Proposition~\ref{prop:sibson_properties},
\begin{equation}
\bbE \left[\exp\left(
-\frac{\epsilon}{n}\|\bfX-\bfX'\|^2
\right) \right]
=
\exp\left(
-I_{\frac12}\left(\bfX;\frac{8\epsilon}{n}\right)
\right).
\label{eq:kernel_sibson_conversion}
\end{equation}
Finally, the sharp Sobolev inequality gives
\begin{equation}
J\left(
\bfX+\frac{1}{\sqrt{\epsilon}}\bfZ
\right)
\ge
n(n-2)|\mathbb S^n|^{2/n}
\|f_{1/\sqrt{\epsilon}}\|_{n/(n-2)}.
\end{equation}
Combining this inequality with
\eqref{eq:young_simplified}--\eqref{eq:kernel_sibson_conversion} and the
definition of $a_n$ proves \eqref{eq:Gaussian_kernel_fisher_sibson}.

When $n=4$, the additional smoothing disappears, since $s=1/\sqrt{\epsilon}$
and $p=2$. Thus, \eqref{eq:L2_gaussian_collision} followed directly by the
sharp Sobolev inequality gives the same bound.
\end{IEEEproof}

Theorem~\ref{thm:gaussian_kernel_fisher} matches Theorem~\ref{thm:dimension_aware} for $n=4$ and is looser for $n>4$.  We keep this bound for the only reason that mutual information of order $\alpha=1/2$ can be easier to compute because 
\begin{equation}
    I_{\frac{1}{2}}\left(\bfX; \frac{8\epsilon}{n} \right) = -\log \bbE[\rme^{-\epsilon \|\bfX-\bfX' \|^2/n}].
\end{equation}

\section{Applications}\label{sec:applications}
We now illustrate how new Fisher information inequalities can be used.

\subsection{Rotationally Invariant Discrete-Shell Mixtures}
\label{subsec:bounds_for_shells}
We now specialize our bounds to rotationally invariant distributions supported
on countably many concentric spheres. Such distributions appear as extremizers of mutual information under ball constraints on the support~\cite{vector_Gausss}, rate-distortion problems~\cite{dytso2024uniform}, and estimation problems~\cite{berry1990minimax,marchand2002minimax}.  The goal is to find an upper bound on the $\alpha$-mutual information and then combine it with Theorem~\ref{thm:dimension_aware}.  To present a bound on the $\alpha$-mutual information we need the following covering bound.

\begin{lemma}[Partition of the sphere]
\label{lem:sphere_partition}
Let $n\ge2$ and $\delta\in(0,1]$. Then there exists a measurable
partition
\begin{equation}
\mathbb S^{n-1}
=
\mathcal C_1\cup\cdots\cup\mathcal C_N
\end{equation}
such that
\begin{equation}
N\le M_n\delta^{-(n-1)}
\label{eq:sphere_partition_number}
\end{equation}
and
\begin{equation}
\sup_{\bfu,\bfv\in\mathcal C_k}
\|\bfu-\bfv\|
\le\delta,
\qquad k=1,\ldots,N,
\label{eq:sphere_partition_diameter}
\end{equation}
where
\begin{equation}
M_n
:=
4^{n-1}(n-1)
\frac{|\mathbb S^{n-1}|}{|\mathbb S^{n-2}|}
\left(\frac{\pi}{2}\right)^{n-2}.
\label{eq:def_sphere_covering_constant}
\end{equation}
\end{lemma}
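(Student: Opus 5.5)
The plan is the standard packing-and-Voronoi argument, with the packing count done via a lower bound on the area of geodesic caps; this lower bound should reproduce the constant $M_n$ exactly.

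\emph{Construction of the partition.} Fix $\delta\in(0,1]$. Take a maximal set $\{\bfu_1,\ldots,\bfu_N\}\subset\mathbb S^{n-1}$ whose points are pairwise at Euclidean distance $\ge\delta/2$. Such a set is finite by compactness, and the counting step below bounds its size anyway. Define Voronoi cells with an index tie-break:
\begin{equation}
\mathcal C_k:=\left\{\bfu\in\mathbb S^{n-1}:\|\bfu-\bfu_k\|<\|\bfu-\bfu_j\|\ \forall j<k,\ \|\bfu-\bfu_k\|\le\|\bfu-\bfu_j\|\ \forall j>k\right\}.
\end{equation}
These sets are Borel, pairwise disjoint, and cover the sphere. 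By maximality, every $\bfu$ lies within distance $<\delta/2$ of some $\bfu_j$. Hence its nearest center $\bfu_k$ also satisfies $\|\bfu-\bfu_k\|<\delta/2$. The triangle inequality then gives $\|\bfu-\bfv\|<\delta$ for all $\bfu,\bfv\in\mathcal C_k$, which is \eqref{eq:sphere_partition_diameter}.

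\emph{Counting.} Geodesic distance dominates chordal distance. So the geodesic caps $\mathrm{Cap}(\bfu_k,\theta)$ of angular radius $\theta=\delta/4$ are pairwise disjoint, since two such caps meeting would force geodesic distance $<\delta/2$ between their centers. Therefore
\begin{equation}
N\,|\mathrm{Cap}(\theta)|\le|\mathbb S^{n-1}|.
\end{equation}
In polar coordinates about the cap center,
\begin{equation}
|\mathrm{Cap}(\theta)|=|\mathbb S^{n-2}|\int_0^\theta\sin^{n-2}t\,\rmd t .
\end{equation}
Since $\theta=\delta/4\le1/4<\pi/2$, Jordan's inequality $\sin t\ge 2t/\pi$ applies and gives
\begin{equation}
|\mathrm{Cap}(\theta)|\ge|\mathbb S^{n-2}|\left(\frac{2}{\pi}\right)^{n-2}\frac{\theta^{n-1}}{n-1}.
\end{equation}
Substituting $\theta=\delta/4$ yields
\begin{equation}
N\le (n-1)\frac{|\mathbb S^{n-1}|}{|\mathbb S^{n-2}|}\left(\frac{\pi}{2}\right)^{n-2}4^{n-1}\delta^{-(n-1)}=M_n\delta^{-(n-1)},
\end{equation}
which is exactly \eqref{eq:sphere_partition_number}.

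For $n=2$ the formula reads with $|\mathbb S^0|=2$ and $\sin^0\equiv1$, so the same computation applies verbatim. The only step requiring care is the cap-area lower bound, in particular checking that $\theta\le\pi/2$ so that Jordan's inequality is valid. Everything else is routine; measurability follows from the explicit tie-broken Voronoi description.
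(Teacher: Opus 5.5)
Your proposal is correct and follows essentially the same argument as the paper: a maximal $\delta/2$-separated set, disjoint caps bounded below in area via $\sin t\ge 2t/\pi$, and tie-broken Voronoi cells of diameter at most $\delta$. The only cosmetic difference is that you take geodesic caps of angular radius $\delta/4$ directly, using that geodesic distance dominates chordal distance, whereas the paper takes chordal caps of radius $\delta/4$ and notes that their geodesic radius $2\arcsin(\delta/8)$ is at least $\delta/4$.
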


\begin{IEEEproof}
The existence of diameter-bounded partitions of the sphere follows from
the general partition results in \cite{gigante2017diameter}. For
completeness, and to obtain an explicit constant needed below, we provide
a direct packing argument in Appendix~\ref{app:sphere_partition}.
\end{IEEEproof}

We now use Lemma~\ref{lem:sphere_partition}  to provide the following bound on the $\alpha$-mutual information. 

\begin{lemma}
\label{lem:sibson_shell_intrinsic}
Let $n\ge2$, let $S$ be a discrete random variable supported on $[0,A]$,
and let $\bfU$ be uniformly distributed on $\mathbb S^{n-1}$
independently of $S$.  Then, for every $\epsilon>0$
and $\alpha\in(0,1]$,
\begin{equation}
I_\alpha(S\bfU;\epsilon) \le H_{1/\alpha}(S)+(n-1)\log\left(1+A\sqrt{\epsilon}\right)+ \xi_n,
\label{eq:sibson_shell_intrinsic}
\end{equation}
where 
\begin{equation}
\xi_n:=\log M_n+\frac12,
\label{eq:def_xi_n}
\end{equation}
and where $M_n$ is given in \eqref{eq:def_sphere_covering_constant}.
\end{lemma}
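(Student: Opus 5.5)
The plan is to discretize the sphere with the partition of Lemma~\ref{lem:sphere_partition}, fold the cell index into a discrete auxiliary variable, and apply the chain-rule-type inequality of Lemma~\ref{lem:sibson_mixture}. That lemma leaves two terms: a R\'enyi entropy term and a worst-case conditional term. The first is controlled by the number of cells, the second by the small diameter of each cell.

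\emph{Setup and auxiliary variable.} Put $\bfX=S\bfU$ and $\bfY=\bfX+\epsilon^{-1/2}\bfZ$, and fix
\begin{equation}
\delta=\min\left\{1,\frac{1}{A\sqrt\epsilon}\right\}.
\end{equation}
Take the partition $\mathcal C_1,\dots,\mathcal C_N$ from Lemma~\ref{lem:sphere_partition}, and let $K$ be the index of the cell containing $\bfU$. Set $V=(S,K)$.

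Since $\bfY$ depends on $(V,\bfX)$ only through $\bfX$, a Markov/data-processing argument for Sibson information gives
\begin{equation}
I_\alpha(\bfX;\bfY)\le I_\alpha(V,\bfX;\bfY).
\end{equation}
In fact equality holds, because $V$ is a function of $\bfX$ except on $\{S=0\}$, where $K$ carries no information. Lemma~\ref{lem:sibson_mixture} then yields
\begin{equation}
I_\alpha(S\bfU;\epsilon)\le H_{1/\alpha}(S,K)+\sup_{v}I_\alpha(\bfX;\bfY\mid V=v).
\end{equation}

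\emph{Entropy term.} Because $\bfU$ is independent of $S$, so is $K$. Hence the joint masses factor as $P_S(s)P_K(k)$, and R\'enyi entropy is additive:
\begin{equation}
H_{1/\alpha}(S,K)=H_{1/\alpha}(S)+H_{1/\alpha}(K).
\end{equation}
Using $H_{1/\alpha}(K)\le H_0(K)\le\log N$ and the count $N\le M_n\delta^{-(n-1)}$, we get
\begin{equation}
H_{1/\alpha}(K)\le \log M_n+(n-1)\log\max\{1,A\sqrt\epsilon\}\le \log M_n+(n-1)\log(1+A\sqrt\epsilon).
\end{equation}

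\emph{Conditional term.} Given $V=(s,k)$, $\bfX$ lies in $s\,\mathcal C_k$, a set of diameter at most $s\delta\le A\delta\le 1/\sqrt\epsilon$. Since $\alpha\le 1$, monotonicity in $\alpha$ (Proposition~\ref{prop:sibson_properties}) reduces this to Shannon mutual information. The latter is bounded by the relative entropy to a fixed reference output $Q=\cN(\bfx_0,\epsilon^{-1}\bfI_n)$ for any $\bfx_0\in s\,\mathcal C_k$:
\begin{equation}
I_\alpha(\bfX;\bfY\mid V=v)\le \bbE\left[D\bigl(\cN(\bfX,\epsilon^{-1}\bfI_n)\,\|\,\cN(\bfx_0,\epsilon^{-1}\bfI_n)\bigr)\mid V=v\right]=\frac{\epsilon}{2}\,\bbE\left[\|\bfX-\bfx_0\|^2\mid V=v\right]\le\frac{\epsilon}{2}\cdot\frac{1}{\epsilon}=\frac12.
\end{equation}
Adding this to the entropy bound gives exactly \eqref{eq:sibson_shell_intrinsic} with $\xi_n=\log M_n+\tfrac12$.

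\emph{Main obstacle.} The delicate step is the justification of $I_\alpha(\bfX;\bfY)\le I_\alpha(V,\bfX;\bfY)$ together with the use of Lemma~\ref{lem:sibson_mixture}. That lemma is stated for a discrete $V$ paired with an arbitrary companion variable. Here $V=(S,K)$ is discrete because $S$ is discrete and $N$ is finite. I would verify the data-processing step directly from the infimum definition \eqref{eq:Def_I}: with the same $Q_{\bfY}$, the R\'enyi divergence for the pair $(V,\bfX)$ reduces to that for $\bfX$ once the channel ignores $V$. The remaining checks — the factorization of the R\'enyi entropy and the choice of $\delta$ when $A\sqrt\epsilon<1$ — are routine.
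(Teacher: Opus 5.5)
Your proof is correct, but it is organized differently from the paper's. The paper argues in two nested stages. It first bounds a single shell, $I_\alpha(s\bfU;\epsilon)\le(n-1)\log(1+s\sqrt{\epsilon})+\xi_n$. When $s\sqrt{\epsilon}>1$ it partitions the sphere with a shell-dependent cell size $\delta=1/(s\sqrt{\epsilon})$; otherwise it uses the direct Gaussian bound $\frac{n}{2}\log(1+\epsilon s^2/n)\le\frac12$ together with $\xi_n\ge\frac12$. It then applies Lemma~\ref{lem:sibson_mixture} a second time, with $S$ as the mixing variable, and takes the supremum over $s\le A$.

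You instead apply Lemma~\ref{lem:sibson_mixture} once. You use the joint index $V=(S,K)$ and a single cell size $\delta=\min\{1,1/(A\sqrt{\epsilon})\}$ calibrated to the outermost shell. The entropy term then splits via the additivity $H_{1/\alpha}(S,K)=H_{1/\alpha}(S)+H_{1/\alpha}(K)$, which follows from the independence of $S$ and $\bfU$.

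Your route is shorter and avoids the case split. The paper's route is more modular: it yields a per-shell bound that depends on the actual radius. That bound could be fed into the refined inequality \eqref{eq:sibson_mixture_refined} to get a bound depending on the radial distribution, rather than only on the worst case $s=A$. Your conditional-term estimate (relative entropy to a fixed Gaussian centered in the cell) is interchangeable with the paper's covariance/log-det estimate.

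A minor remark: $I_\alpha(\bfX;\bfY)=I_\alpha(V,\bfX;\bfY)$ holds for any $V$ such that $V-\bfX-\bfY$ is Markov, because the integrand of the R\'enyi divergence depends only on $\bfx$. So you do not need the observation that $V$ is a function of $\bfX$ off $\{S=0\}$.
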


\begin{IEEEproof}
We first consider a single shell of radius $s$. If
$s\sqrt{\epsilon}\le1$, monotonicity in the order and the Gaussian
maximum-entropy bound give
\begin{align}
I_\alpha(s\bfU;\epsilon)
&\le I(s\bfU;s\bfU+\epsilon^{-1/2}\bfZ)\\
&\le \frac{n}{2}
\log\left(1+\frac{\epsilon s^2}{n}\right)
\le \frac12.
\label{eq:bound_for_small_eps}
\end{align}

Suppose next that $s\sqrt{\epsilon}>1$ and set
\begin{equation}
\delta=\frac{1}{s\sqrt{\epsilon}}.
\end{equation}
By Lemma~\ref{lem:sphere_partition}, there exists a measurable
partition
\begin{equation}
\mathbb S^{n-1}
=
\mathcal C_1\cup\cdots\cup\mathcal C_N
\end{equation}
such that
\begin{equation}
N\le M_n\delta^{-(n-1)}
\end{equation}
and
\begin{equation}
\sup_{\bfu,\bfv\in\mathcal C_k}
\|\bfu-\bfv\|
\le\delta,
\qquad k=1,\ldots,N.
\end{equation}
Define the discrete random variable $V$ by
\begin{equation}
V=k
\quad\Longleftrightarrow\quad
\bfU\in\mathcal C_k.
\end{equation}
Thus, $V$ specifies the cell containing $\bfU$. In particular,
conditional on $V=k$, any two possible values of $s\bfU$ are separated
by at most
\begin{equation}
s\delta=\frac{1}{\sqrt{\epsilon}}.
\end{equation}
Moreover, since $V$ takes at most $N$ values,
\begin{equation}
H_{1/\alpha}(V)
\le\log N
\le
\log M_n+(n-1)\log\frac{1}{\delta}.
\label{eq:entropy_partition_index}
\end{equation}

Since $V$ is a deterministic function of $s\bfU$, we have
\begin{equation}
I_\alpha(s\bfU;\epsilon)
=
I_\alpha(V,s\bfU;\epsilon).
\end{equation}
Applying Lemma~\ref{lem:sibson_mixture} with the cell index $V$ as the
mixing variable gives
\begin{equation}
I_\alpha(s\bfU;\epsilon)
\le
H_{1/\alpha}(V)
+
\sup_{v\in\supp(P_V)}
I_\alpha(s\bfU;\epsilon\mid V=v).
\label{eq:sibson_sphere_partition}
\end{equation}
Conditioned on $V=v$, the support of $s\bfU$ has diameter at most
$\epsilon^{-1/2}$. Consequently,
\begin{equation}
\Tr\left(\Cov(s\bfU\mid V=v)\right)
\le\frac{1}{\epsilon}.
\end{equation}
Hence, using a bound similar to \eqref{eq:bound_for_small_eps},
\begin{align}
I_\alpha(s\bfU;\epsilon\mid V=v)
&\le
I(s\bfU;s\bfU+\epsilon^{-1/2}\bfZ\mid V=v)
\\
&\le
\frac{1}{2}
\log\det\left(
\bfI_n+\epsilon\Cov(s\bfU\mid V=v)
\right)
\\
&\le
\frac{\epsilon}{2}
\Tr \left(\Cov(s\bfU\mid V=v) \right)
\le\frac12.
\end{align}
Moreover, since $\delta=(s\sqrt{\epsilon})^{-1}$,
\eqref{eq:entropy_partition_index} becomes
\begin{equation}
H_{1/\alpha}(V)
\le
\log M_n+(n-1)\log(s\sqrt{\epsilon}).
\end{equation}
It follows that, when $s\sqrt{\epsilon}>1$,
\begin{align}
I_\alpha(s\bfU;\epsilon)
&\le
(n-1)\log(s\sqrt{\epsilon})
+\log M_n+\frac12\\
&\le
(n-1)\log\left(1+s\sqrt{\epsilon}\right)+\xi_n.
\end{align}
When $s\sqrt{\epsilon}\le1$, the same bound follows from
\eqref{eq:bound_for_small_eps} and $\xi_n\ge1/2$. Therefore, for every
$s\ge0$,
\begin{equation}
I_\alpha(s\bfU;\epsilon)
\le
(n-1)\log\left(1+s\sqrt{\epsilon}\right)+\xi_n.
\label{eq:single_shell_intrinsic_bound}
\end{equation}

Finally, applying Lemma~\ref{lem:sibson_mixture} with the shell radius
$S$ as the mixing variable gives
\begin{align}
I_\alpha(S\bfU;\epsilon)
&\le
H_{1/\alpha}(S)
+
\sup_{s\in\supp(P_S)}
I_\alpha(s\bfU;\epsilon)\\
&\le
H_{1/\alpha}(S)
+
(n-1)\log\left(1+A\sqrt{\epsilon}\right)
+
\xi_n,
\end{align}
where the last inequality follows from $S\le A$ and
\eqref{eq:single_shell_intrinsic_bound}. This proves
\eqref{eq:sibson_shell_intrinsic}.
\end{IEEEproof}

\begin{rem}
Using the standard gamma-ratio asymptotics, one can show that 
\begin{equation}
\frac{\xi_n}{n}
=
\log(2\pi)
+
O\left(\frac{\log n}{n}\right),
\end{equation}
and hence
\begin{equation}
\lim_{n\to\infty}\frac{\xi_n}{n}
=
\log(2\pi).
\label{eq:xi_n_asymptotic}
\end{equation}

\end{rem}

Combining Lemma~\ref{lem:sibson_shell_intrinsic} with
Theorem~\ref{thm:dimension_aware} gives the following Fisher-information
bound for mixtures of spherical shells.

\begin{theorem}[Dimension-aware shell bound]
\label{thm:shell_sobolev}
Let $n\ge2$, let $R$ be a discrete nonnegative random variable supported
on $[0,A]$, and let $\bfU$ be uniformly distributed on
$\mathbb S^{n-1}$ independently of $R$. Then, for every $\epsilon>0$,
\begin{align}
J\left(
R\bfU+\frac{1}{\sqrt{\epsilon}}\bfZ
\right)
\ge{}&
a_n\epsilon
\exp\left(-\frac{2\xi_n}{n}\right)
\left(1+A\sqrt{r_n\epsilon}\right)^{-\frac{2(n-1)}{n}}
\exp\left(
-\frac{2}{n}H_{r_n}(R)
\right).
\label{eq:fisher_shell_entropy_bound}
\end{align}
\end{theorem}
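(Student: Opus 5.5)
The plan is to apply Theorem~\ref{thm:dimension_aware} for $n\ge2$ with $\bfX=R\bfU$, and then bound the Sibson mutual information appearing there using Lemma~\ref{lem:sibson_shell_intrinsic}. Theorem~\ref{thm:dimension_aware} gives
\begin{equation}
J\left(R\bfU+\frac{1}{\sqrt{\epsilon}}\bfZ\right)
\ge
a_n\epsilon\exp\left(-\frac{2}{n}I_{\frac{1}{r_n}}\left(R\bfU;r_n\epsilon\right)\right).
\end{equation}
This is a lower bound that is decreasing in the mutual information. It therefore suffices to upper bound $I_{1/r_n}(R\bfU;r_n\epsilon)$.

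To do this, I will invoke Lemma~\ref{lem:sibson_shell_intrinsic} with the following choices:
\begin{itemize}
\item $S=R$, which is discrete and supported on $[0,A]$;
\item order $\alpha=1/r_n$;
\item SNR parameter $r_n\epsilon$ in place of $\epsilon$.
\end{itemize}
The order lies in $(0,1]$ as required, since $r_n=2$ for $n=2$ and $r_n=n/(n-2)\in(1,3]$ for $n\ge3$. Moreover $1/\alpha=r_n$, so the entropy term is exactly $H_{r_n}(R)$. The lemma then yields
\begin{equation}
I_{\frac{1}{r_n}}(R\bfU;r_n\epsilon)\le H_{r_n}(R)+(n-1)\log\left(1+A\sqrt{r_n\epsilon}\right)+\xi_n.
\end{equation}

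Substituting this into the exponent and splitting the exponential of the sum into a product of three factors produces the following terms:
\begin{itemize}
\item $\exp(-2\xi_n/n)$;
\item $(1+A\sqrt{r_n\epsilon})^{-2(n-1)/n}$;
\item $\exp(-\frac{2}{n}H_{r_n}(R))$.
\end{itemize}
This is exactly \eqref{eq:fisher_shell_entropy_bound}.

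The only points requiring care are bookkeeping checks. First, the SNR argument in Theorem~\ref{thm:dimension_aware} is $r_n\epsilon$ rather than $\epsilon$, which is why $\sqrt{r_n\epsilon}$ appears. Second, the order constraint $\alpha\le1$ of the lemma must hold, which it does. Third, the case $n=2$ must be handled by the general $n\ge2$ branch of Theorem~\ref{thm:dimension_aware}, with $r_2=2$ and order $1/2$, and not by the $I_0$ branch, which applies only to $n=1$. Finally, if $H_{r_n}(R)=\infty$ the bound is trivially true.
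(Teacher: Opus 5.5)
Your proposal is correct and is essentially the paper's own proof: apply Lemma~\ref{lem:sibson_shell_intrinsic} with $\alpha=1/r_n$ and SNR parameter $r_n\epsilon$, then substitute the result into the $n\ge2$ branch of Theorem~\ref{thm:dimension_aware}. Your final caveat is harmless but vacuous: since $r_n>1$, the entropy $H_{r_n}(R)$ is always finite.
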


\begin{IEEEproof}
Applying Lemma~\ref{lem:sibson_shell_intrinsic} with $\alpha=\frac{1}{r_n}$ and $\epsilon'=r_n\epsilon$ gives
\begin{align}
I_{\frac{1}{r_n}}(R\bfU;r_n\epsilon)
\le{}&
H_{r_n}(R)
+
(n-1)\log\left(1+A\sqrt{r_n\epsilon}\right)
+
\xi_n.
\end{align}
Substituting this inequality into
Theorem~\ref{thm:dimension_aware} yields
 \eqref{eq:fisher_shell_entropy_bound}.
\end{IEEEproof}

Finally, we specialize the above bound to the number of spherical shells. 
\begin{corollary}[Finite-shell bound]
\label{cor:fisher_finite_shells}
Under the assumptions of Theorem~\ref{thm:shell_sobolev}, suppose that $K=\left|\supp(P_R)\right|<\infty.$ Then, for every $\epsilon>0$,
\begin{equation}
J\left(
R\bfU+\frac{1}{\sqrt{\epsilon}}\bfZ
\right)
\ge
a_n\epsilon
 \rme^{-\frac{2\xi_n}{n}}
\left(1+A\sqrt{r_n\epsilon}\right)^{-\frac{2(n-1)}{n}}
K^{-  \frac{2}{n}  }.
\label{eq:fisher_finite_shell_count}
\end{equation}
\end{corollary}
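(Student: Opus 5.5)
The corollary is a direct specialization of Theorem~\ref{thm:shell_sobolev}: the only quantity to be controlled is $H_{r_n}(R)$, and it will be bounded by the logarithm of the support size of $R$.

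Under the assumptions of Theorem~\ref{thm:shell_sobolev}, with $K=|\supp(P_R)|<\infty$, we have $H_0(R)=\log K$. Rényi entropy is nonincreasing in its order, and $r_n\in[0,\infty]$ for every $n\ge2$ (indeed $r_2=2$ and $r_n=n/(n-2)$ for $n\ge3$). Hence
\begin{equation}
H_{r_n}(R)\le H_0(R)=\log K .
\end{equation}
Alternatively, one can argue directly: for $r>1$ the power mean inequality gives $\sum_j p_j^{r}\ge K^{1-r}$, which yields the same bound without invoking monotonicity.

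Since $x\mapsto \exp(-\tfrac{2}{n}x)$ is decreasing, this gives
\begin{equation}
\exp\left(-\tfrac{2}{n}H_{r_n}(R)\right)\ge K^{-2/n}.
\end{equation}
All other factors on the right-hand side of \eqref{eq:fisher_shell_entropy_bound} are positive and do not depend on $P_R$. Substituting this inequality into \eqref{eq:fisher_shell_entropy_bound} therefore yields \eqref{eq:fisher_finite_shell_count} directly.

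No step presents a real obstacle. The only point to check is that the relevant order $r_n$ lies in the range where the comparison $H_{r_n}\le H_0$ applies, and it does, because $r_n\ge1$ for all $n\ge2$.
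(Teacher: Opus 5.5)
Your proposal is correct and follows the paper's own argument: bound $H_{r_n}(R)\le\log K$ using the finite support of $R$, then substitute into \eqref{eq:fisher_shell_entropy_bound}. Your power-mean alternative, $\sum_j p_j^{r}\ge K^{1-r}$ for $r>1$, is also valid here because $r_n>1$ for every $n\ge2$.
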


\begin{IEEEproof}
Since $R$ is supported on $K$ values,
\begin{equation}
H_{r_n}(R)\le\log K.
\end{equation}
The result now follows directly from
\eqref{eq:fisher_shell_entropy_bound}.
\end{IEEEproof}

We will use \eqref{eq:fisher_finite_shell_count} in
Section~\ref{sec:shell_lower_bound} to obtain a lower bound on the number
of shells in the least favorable distribution for the bounded normal
mean problem.

\subsection{Bounded Normal Mean: Scalar Case}
\label{sec:bounded_mean_problem}
We now consider the following estimation problem.  Let
\begin{equation}
Y=\theta+Z,
\qquad Z\sim\cN(0,1),
\qquad |\theta|\le R,
\end{equation}
with squared-error loss. This problem is known as estimation of a bounded Gaussian and the interested reader is referred to \cite{casella1981estimating,berry1990minimax,marchand2004estimation} and the references therein for an introduction to this problem. The corresponding minimax risk is
\begin{equation}
\inf_{\widehat\theta}
\sup_{|\theta|\le R}
\bbE_\theta\left[ \left(\widehat\theta(Y)-\theta \right)^2\right].
\label{eq:scalar_minimax_risk}
\end{equation}
By the standard minimax-Bayes duality for this problem
(see, e.g., \cite{marchand2004review}), \eqref{eq:scalar_minimax_risk} equals
\begin{equation}
\sup_{P_X:\,|X|\le R\ \text{almost surely}}\mmse(X|X+Z).
\label{eq:least_favorable_scalar_problem}
\end{equation}
A maximizing distribution in \eqref{eq:least_favorable_scalar_problem} denoted as $P_{X^\star}$ is called least favorable prior.  The distribution $P_{X^\star}$ is known to be 
unique, symmetric, and finitely discrete \cite{yagli2019estimation}. 

Our goal is to quantify properties of $P_{X^\star}$. In particular, next, by using the newly derived inequalities we establish lower bounds on the support size and upper bounds on the probability values. 

We begin by considering the following distribution with density given by
\begin{equation}
f_{\widetilde X}(x)
=
\frac1R\cos^2\left(\frac{\pi x}{2R}\right)
\mathbf 1_{\{|x|\le R\}}.
\label{eq:cosine_trial_prior}
\end{equation}
The distribution in \eqref{eq:cosine_trial_prior} is an analog of the uniform distribution for the continuous entropy in a sense that it minimizes Fisher information under a support constraint (i.e.,   $X \in [-R,R]$) \cite{bickel1981minimax,bercher2009minimum} and results in Fisher information equal to
\begin{equation}
J(\widetilde X)=\frac{\pi^2}{R^2}.
\label{eq:trial_prior_fisher}
\end{equation}
Additionally, in \cite{bickel1981minimax}, it was shown that the least favorable distribution $P_{X^\star}$ converges to $P_{\widetilde X}$ in the sense that $ \frac{1}{R}X^\star  \to \frac{1}{R} \widetilde{X}$ in distribution.

\begin{theorem}
\label{thm:least_favorable_scalar}
Let $P_{X^\star}$ maximize \eqref{eq:least_favorable_scalar_problem} and let $K:=|\supp(P_{X^\star})|$.
Then, the following statements hold: 
\begin{enumerate}
    \item  for every $R>0$,
    \begin{equation}
\hspace{-0.3cm}K\ge\max\Bigg\{
\sqrt{1+\frac{R^2}{\pi^2}},
\sqrt{
\frac{c_{\rm F}}{2}
\left(1+\frac{R^2}{\pi^2}\right)
\log\left(1+\frac{R^2}{\pi^2}\right)
}
\Bigg\}.
\label{eq:least_favorable_scalar_bound}
\end{equation}
In particular,
\begin{equation}
K
\ge
\frac{1}{\pi^2}
R\sqrt{\log R}\,(1+o(1)),
\qquad R\to\infty.
\label{eq:least_favorable_scalar_asymptotic}
\end{equation}
\item For every $R>0$ and $x \in \supp(P_{X^\star})$
\begin{equation}
    P_{X^\star}(x)  \le \frac{\pi^{3/2}}{\sqrt{2(R^2+\pi^2)}}.
\end{equation}
\end{enumerate}

\end{theorem}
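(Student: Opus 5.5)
The plan is to obtain an upper bound on $J(X^\star+Z)$ and then combine it with the Fisher lower bounds already established. By Brown's identity~\eqref{eq:brown_identity} with $\gamma=1$ and $n=1$, we have $J(X^\star+Z)=1-\mmse(X^\star|X^\star+Z)$. Because $P_{X^\star}$ maximizes the MMSE over priors supported on $[-R,R]$, we get
\begin{equation}
\mmse(X^\star|X^\star+Z)\ge \mmse(\widetilde X|\widetilde X+Z),
\end{equation}
where $\widetilde X$ is the cosine prior~\eqref{eq:cosine_trial_prior}. For the trial prior we use Fisher-information subadditivity under convolution (Stam/Blachman): $1/J(\widetilde X+Z)\ge 1/J(\widetilde X)+1/J(Z)=R^2/\pi^2+1$, by~\eqref{eq:trial_prior_fisher}. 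Brown's identity then gives $\mmse(\widetilde X|\widetilde X+Z)=1-J(\widetilde X+Z)\ge 1-\pi^2/(R^2+\pi^2)$. Chaining these yields the key inequality
\begin{equation}
J(X^\star+Z)\le \frac{\pi^2}{R^2+\pi^2}=\Big(1+\frac{R^2}{\pi^2}\Big)^{-1}.
\end{equation}

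For part 1, we compare this with~\eqref{eq:finite_support_loewner} at $\epsilon=1$, $n=1$. The Stam branch $J\ge K^{-2}$ (from~\eqref{eq:intro_K_stam}) gives $K^2\ge 1+R^2/\pi^2$; the branch with $2/\pi$ is even slightly better, but we keep the cleaner form. The logarithmic branch gives $c_{\rm F}\log K/K^2\le (1+R^2/\pi^2)^{-1}$. Setting $x:=1+R^2/\pi^2$, we need to invert $K^2/\log K\ge c_{\rm F}x$. Using the first bound $K\ge\sqrt x$, and noting that $K^2/\log K$ is increasing once $K\ge\sqrt{\rm e}$, we write $K^2\ge c_{\rm F}x\log K\ge \frac{c_{\rm F}}2 x\log x$. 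This is the second term in~\eqref{eq:least_favorable_scalar_bound}. The small-$K$ case needs a short check, since the monotonicity is only needed when $\log K>0$ and $K=1$ is excluded by the first bound whenever $R>0$.

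For the asymptotic statement we take $R\to\infty$. Then $x\sim R^2/\pi^2$ and $\log x\sim 2\log R$, so $K\ge\sqrt{c_{\rm F}}\,R\sqrt{\log R}/\pi\,(1+o(1))=R\sqrt{\log R}/\pi^2\,(1+o(1))$, since $\sqrt{c_{\rm F}}=1/\pi$.

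For part 2, we use~\eqref{eq:max_prob_fisher} with $\epsilon=1$: $\frac{2}{\pi}\max_j p_j^2\le J(X^\star+Z)\le \pi^2/(R^2+\pi^2)$. Taking square roots gives $p_j\le \pi^{3/2}/\sqrt{2(R^2+\pi^2)}$.

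The main obstacle is the upper bound on $J(X^\star+Z)$. It requires justifying that Blachman–Stam's $1/J(A+B)\ge 1/J(A)+1/J(B)$ applies to $\widetilde X$. Its density is $C^1$ and vanishes smoothly at $\pm R$, with finite Fisher information $\pi^2/R^2$, so this should be standard, but it is the one analytic point worth checking. Once it is in place, everything else is algebra on the earlier bounds.
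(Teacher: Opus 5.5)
Your proposal is correct and follows the paper's proof step for step. The one substitution, bounding $\mmse(\widetilde X|\widetilde X+Z)$ through the Blachman--Stam inequality and Brown's identity instead of the Bayesian Cram\'er--Rao inequality, is immaterial: after Brown's identity, the Gaussian-noise van Trees bound $\mmse(\widetilde X|\widetilde X+Z)\ge 1/(1+J(\widetilde X))$ is literally $1/J(\widetilde X+Z)\ge 1+1/J(\widetilde X)$. Two unused asides need correcting. The $\frac{2}{\pi}K^{-2}$ branch is weaker, not stronger, than Stam's $K^{-2}$, since $2/\pi<1$. Also, no monotonicity of $K^2/\log K$ is needed, because $\log K\ge\frac12\log\bigl(1+R^2/\pi^2\bigr)$ can be substituted directly into $K^2\ge c_{\rm F}(1+R^2/\pi^2)\log K$.
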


\begin{IEEEproof}
Applying the  Bayesian Cram\'er-Rao inequality to the MMSE with the input distribution in \eqref{eq:cosine_trial_prior}, we get 
\begin{equation}
\mmse(\widetilde X|\widetilde X+Z)
\ge
\frac{1}{1+J(\widetilde X)}
=
\frac{R^2}{R^2+\pi^2};
\label{eq:trial_prior_bcrb}
\end{equation}
see, e.g., \cite{vantrees2001detection}.  Since $P_{X^\star}$ is least
favorable,
\begin{equation}
\mmse(X^\star|X^\star+Z)
\ge
\frac{R^2}{R^2+\pi^2}.
\end{equation}
Brown's identity in \eqref{eq:brown_identity}, then gives
\begin{equation}
J(X^\star+Z)
=1-\mmse(X^\star|X^\star+Z)
\le
\frac{\pi^2}{R^2+\pi^2}.
\label{eq:least_fav_output_fisher_upper}
\end{equation}

The entropy consequence of Stam's inequality in
\eqref{eq:intro_stam_entropy} gives
\begin{equation}
J(X^\star+Z)
\ge \rme^{-2H(X^\star)}\ge K^{-2}.
\end{equation}
Combining with \eqref{eq:least_fav_output_fisher_upper},
\begin{equation}
K\ge\sqrt{1+\frac{R^2}{\pi^2}}.
\label{eq:least_fav_linear_prebound}
\end{equation}
On the other hand, Corollary~\ref{cor:renyi_family} with $p=0$ gives
\begin{equation}
J(X^\star+Z)
\ge
c_{\rm F}\frac{\log K}{K^2}.
\end{equation}
Together with \eqref{eq:least_fav_output_fisher_upper},
\begin{equation}
\frac{\pi^2}{R^2+\pi^2}
\ge
c_{\rm F}\frac{\log K}{K^2}.
\label{eq:least_fav_intermediate}
\end{equation}
Finally, \eqref{eq:least_fav_linear_prebound} implies
\begin{equation}
\log K
\ge
\frac12\log\left(1+\frac{R^2}{\pi^2}\right).
\end{equation}
Substituting in \eqref{eq:least_fav_intermediate} proves
\eqref{eq:least_favorable_scalar_bound}; the asymptotic statement follows
immediately.

To prove the second statement, observe that \eqref{eq:max_prob_fisher} implies
\begin{equation}
\max_x P_{X^\star}(x)
\le
\sqrt{\frac{\pi}{2}J(X^\star+Z)}.
\end{equation}
Combining this inequality with \eqref{eq:least_fav_output_fisher_upper} gives
\begin{equation}
\max_x P_{X^\star}(x)
\le
\sqrt{
\frac{\pi^3}{2(R^2+\pi^2)}
}
=
\frac{\pi^{3/2}}{\sqrt{2(R^2+\pi^2)}},
\end{equation}
which proves the second statement.
This concludes the proof.
\end{IEEEproof}

The $R\sqrt{\log R}$ branch of
\eqref{eq:least_favorable_scalar_bound} exceeds its linear branch
exactly when
\begin{equation}
R>\pi\sqrt{\exp(2\pi^2)-1}
\approx6.07\cdot10^4.
\end{equation}

The proof also explains why the logarithmic improvement in the scalar Fisher
bound matters.  The classical $K^{-2}$ estimate and the minimax-risk upper
bound both yield only lower bounds of linear order in $R$.  The extra $\log K$
forces the support to grow strictly faster than linearly. 

\begin{rem}
The $R\sqrt{\log R}$ support-size lower bound in Theorem~\ref{thm:least_favorable_scalar} parallels a recent result in \cite{Wang_AMplitude}, where the same lower bound was obtained for the support size of an input distribution maximizing mutual information under a peak-amplitude constraint. 
 \end{rem}

\subsection{Bounded Normal Mean: Vector Case}
\label{sec:shell_lower_bound}

We now consider the multivariate counterpart of the bounded normal mean
problem. Let the unknown mean belong to the Euclidean ball
\begin{equation}
B_R:=\{\bfx\in\bbR^n:\|\bfx\|\le R\},
\end{equation}
and let the loss be squared Euclidean error. Equivalently, a least
favorable prior maximizes the Bayes risk
\begin{equation}
\sup_{P_{\bfX}:\,\bfX\in B_R\ \mathrm{a.s.}}
\mmse(\bfX\mid\bfX+\bfZ),
\qquad
\bfZ\sim\cN(0,\bfI_n).
\label{eq:least_favorable_vector_problem}
\end{equation}
Let $P_{\bfX_R^\star}$ be a maximizing prior. It is known to be unique,
spherically symmetric, and supported on finitely many concentric spheres
\cite{berry1990minimax},\cite[Prop.~4]{dytso2018structure}. We may therefore write
\begin{equation}
\bfX_R^\star=S_R^\star\bfU,
\end{equation}
where $\bfU$ is uniformly distributed on $\mathbb S^{n-1}$ independently
of the discrete radial component
\begin{equation}
S_R^\star:=\|\bfX_R^\star\|.
\end{equation}
Define
\begin{equation}
K_n(R):=
\left|\supp(P_{S_R^\star})\right|
\label{eq:def_shell_count}
\end{equation}
to be the number of shells.

As in the scalar case, the proof proceeds by comparing an upper bound on the Fisher information of the least favorable output with a lower bound that depends on the number of occupied shells. We first introduce a continuous  prior on $B_R$, which is the multivariate analogue of the cosine-squared prior in \eqref{eq:cosine_trial_prior}.

Let $J_\nu$ denote the Bessel function of the first kind and let $j_{\nu,1}$ be its first positive zero. Bounds on the value of $j_{\nu,1}$ can be found in \cite{ismail1995bounds,qu1999best} where it was shown that
\begin{equation}
    j_{\nu,1} = \nu +O \left(\nu^{1/3} \right).
\end{equation}

Define
\begin{equation}
\nu=\frac{n}{2}-1
\end{equation}
and let $\widetilde{\bfX}$ have density
\begin{equation}
p_R(\bfx)
=
\frac{2}
{|\mathbb S^{n-1}|R^2J_{\nu+1}^2(j_{\nu,1})}
\frac{
J_{\nu}^2\left(
j_{\nu,1}\frac{\|\bfx\|}{R}
\right)
}
{\|\bfx\|^{n-2}}
\mathbf 1_{\{\|\bfx\|\le R\}},
\label{eq:vector_trial_prior}
\end{equation}
where the value at $\bfx=\mathbf 0$ is defined by continuity. 
In dimensions $n=1,2,3$, the density in
\eqref{eq:vector_trial_prior} takes the explicit forms
\begin{equation}
p_R(\bfx)
=
\begin{cases}
\displaystyle
\frac{1}{R}
\cos^2\left(\frac{\pi x}{2R}\right)
\mathbf 1_{\{|x|\le R\}},
& n=1,
\\[2ex]
\displaystyle
\frac{
J_0^2\left(j_{0,1}\frac{\|\bfx\|}{R}\right)
}{
\pi R^2J_1^2(j_{0,1})
}
\mathbf 1_{\{\|\bfx\|\le R\}},
& n=2,
\\[2ex]
\displaystyle
\frac{1}{2\pi R\|\bfx\|^2}
\sin^2\left(\frac{\pi\|\bfx\|}{R}\right)
\mathbf 1_{\{\|\bfx\|\le R\}},
& n=3,
\end{cases}
\label{eq:vector_trial_prior_low_dimensions}
\end{equation}

The distribution in  \eqref{eq:vector_trial_prior} minimizes 
Fisher information subject to the ball constraint \cite[Thm.~4]{dehesa2012generalized} and the resulting Fisher information is given by 
\begin{equation}
J(\widetilde{\bfX})
=
\frac{4j_{\nu,1}^2}{R^2}.
\label{eq:minimum_fisher_ball}
\end{equation}
Additionally, we have that   $p_R$ is rotationally invariant and
\begin{equation}
\bfJ(\widetilde{\bfX})
=
\frac{4j_{\nu,1}^2}{nR^2}\bfI_n.
\label{eq:trial_prior_matrix_fisher}
\end{equation}

The distribution in \eqref{eq:vector_trial_prior} is plotted in Fig.~\ref{fig:least_fave_n_2} for $n=2$. 
\begin{figure}
    \centering
    \includegraphics[width=7cm]{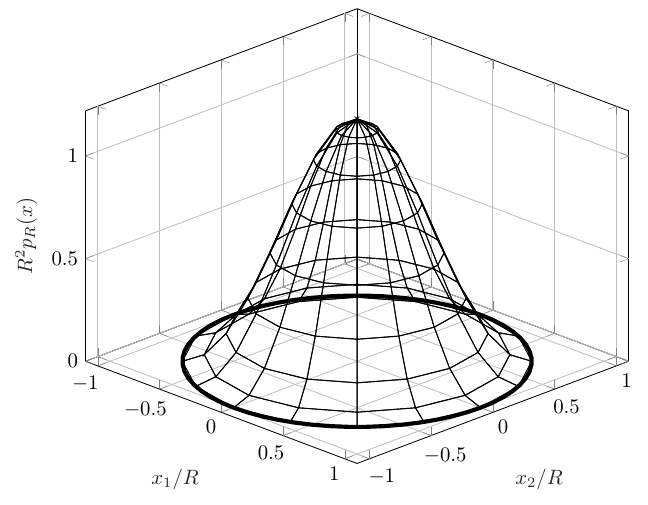}
    \caption{Density in \eqref{eq:vector_trial_prior} for $n=2$. The normalized coordinates $x_i/R$ and $R^2p_R(x)$ make the surface independent of $R$; the figure uses $R=1$.}
    \label{fig:least_fave_n_2}
\end{figure}

We now present the following lower bound on the number of shells. 

\begin{theorem}
\label{thm:bounded_shell_count}
For every \(n\ge2\) and \(R>0\), the number of occupied shells of the least
favorable prior satisfies
\begin{equation}
K_n(R)
\ge
\exp(-\xi_n)
\left(\frac{a_n}{U_n(R)}\right)^{n/2}
\left(1+\sqrt{r_n}R\right)^{-(n-1)}.
\label{eq:bounded_shell_explicit}
\end{equation}
where $a_n$ is defined in \eqref{eq:fisher_renyi_entropy_power_constant}, $r_n$ defined in \eqref{eq:rn_definition}, $\xi_n$ defined in \eqref{eq:def_xi_n}, respectively, and 
\begin{equation}
U_n(R)
:=
\frac{4j_{\nu,1}^2/R^2}
{1+4j_{\nu,1}^2/(nR^2)}.
\label{eq:def_Un_bounded}
\end{equation}
Consequently,  
\begin{equation}
K_n(R)
\ge
c_nR\left(1+o(1)\right),
\qquad R\to\infty,
\label{eq:bounded_shell_linear}
\end{equation}
where
\begin{equation}
c_n
:=
\frac{
a_n^{n/2}\exp(-\xi_n)
}{
\left(4j_{\nu,1}^2\right)^{n/2}
r_n^{(n-1)/2}
}.
\label{eq:def_cn_shell_growth}
\end{equation}
\end{theorem}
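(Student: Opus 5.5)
The plan mirrors the scalar argument in Theorem~\ref{thm:least_favorable_scalar}: sandwich the output Fisher information $J(\bfX_R^\star+\bfZ)$ between an upper bound coming from the minimax property and the lower bound of Corollary~\ref{cor:fisher_finite_shells}.

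\emph{Upper bound.} Apply the matrix Bayesian Cram\'er--Rao inequality to the trial prior $\widetilde\bfX$ in \eqref{eq:vector_trial_prior}. By \eqref{eq:trial_prior_matrix_fisher},
\begin{equation}
{\bf MMSE}(\widetilde\bfX|\widetilde\bfX+\bfZ)\succeq \left(\bfI_n+\bfJ(\widetilde\bfX)\right)^{-1}=\frac{1}{1+4j_{\nu,1}^2/(nR^2)}\bfI_n .
\end{equation}
Taking traces gives
\begin{equation}
\mmse(\widetilde\bfX|\widetilde\bfX+\bfZ)\ge \frac{n}{1+4j_{\nu,1}^2/(nR^2)} .
\end{equation}
Since $\bfX_R^\star$ is least favorable, its MMSE is at least this value. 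Taking the trace of Brown's identity \eqref{eq:brown_identity} with $\gamma=1$ then yields
\begin{equation}
J(\bfX_R^\star+\bfZ)=n-\mmse(\bfX_R^\star|\bfX_R^\star+\bfZ)\le n-\frac{n}{1+4j_{\nu,1}^2/(nR^2)}=\frac{4j_{\nu,1}^2/R^2}{1+4j_{\nu,1}^2/(nR^2)}=U_n(R).
\end{equation}
The last equality is a one-line algebraic check. The matrix version of Cram\'er--Rao (Van Trees) requires the trial density to vanish on $\partial B_R$ and to have finite Fisher information. This holds because $J_\nu(j_{\nu,1})=0$ and \eqref{eq:minimum_fisher_ball} holds, so I would simply cite it.

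\emph{Lower bound.} Apply Corollary~\ref{cor:fisher_finite_shells} with $\epsilon=1$, $A=R$, and $K=K_n(R)$. This is valid because $\bfX_R^\star=S_R^\star\bfU$ with $S_R^\star$ discrete on $[0,R]$. It gives
\begin{equation}
J(\bfX_R^\star+\bfZ)\ge a_n\rme^{-2\xi_n/n}\left(1+\sqrt{r_n}R\right)^{-2(n-1)/n}K_n(R)^{-2/n}.
\end{equation}
Combining this with $J(\bfX_R^\star+\bfZ)\le U_n(R)$ and solving for $K_n(R)$ by raising to the power $n/2$ gives exactly \eqref{eq:bounded_shell_explicit}.

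\emph{Asymptotics.} As $R\to\infty$, we have $U_n(R)=\frac{4j_{\nu,1}^2}{R^2}(1+o(1))$. Hence
\begin{equation}
\left(\frac{a_n}{U_n(R)}\right)^{n/2}=\frac{a_n^{n/2}R^n}{(4j_{\nu,1}^2)^{n/2}}(1+o(1)),
\end{equation}
while
\begin{equation}
\left(1+\sqrt{r_n}R\right)^{-(n-1)}=r_n^{-(n-1)/2}R^{-(n-1)}(1+o(1)).
\end{equation}
The product is $c_nR(1+o(1))$, which is \eqref{eq:bounded_shell_linear}.

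The only real obstacle is rigor in the upper-bound step. One must justify the Van Trees inequality for $p_R$, whose behaviour at the origin in high dimensions comes from the $\|\bfx\|^{2-n}$ factor, cancelled by $J_\nu^2\sim\|\bfx\|^{2\nu}=\|\bfx\|^{n-2}$, and which vanishes at the boundary. One must also confirm that the equality \eqref{eq:trial_prior_matrix_fisher} follows from rotational invariance, since the trace is split equally across directions. Everything else is substitution.
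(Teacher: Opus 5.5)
Your proposal is correct and follows essentially the same route as the paper. The matrix Bayesian Cram\'er--Rao bound for the Bessel trial prior, combined with Brown's identity, gives $J(\bfX_R^\star+\bfZ)\le U_n(R)$. This is then paired with Corollary~\ref{cor:fisher_finite_shells} at $\epsilon=1$, $A=R$, solved for $K_n(R)$, and followed by the same asymptotic expansion. Your remarks on justifying the Van Trees inequality for $p_R$ and on $\bfJ(\widetilde{\bfX})\propto\bfI_n$ via rotational invariance are rigor details that the paper leaves to its citations.
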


\begin{rem}
    The constant $c_n$ can be small and the bound on $K_n(R)$   can be inactive for large interval of $R$. However, to the best of our knowledge, this is the only available explicit bound in the literature, and we expect the constants to be refined in future works.  
\end{rem}

\begin{IEEEproof}
Consider the random vector \(\widetilde{\bfX}\) with density \(p_R\) in
\eqref{eq:vector_trial_prior}. By
\eqref{eq:trial_prior_matrix_fisher},
\begin{equation}
\bfJ(\widetilde{\bfX})
=
\frac{4j_{\nu,1}^2}{nR^2}\bfI_n.
\end{equation}
The matrix Bayesian Cram\'er-Rao inequality
\cite{vantrees2001detection} therefore gives
\begin{align}
{\bf MMSE}(\widetilde{\bfX}\mid\widetilde{\bfX}+\bfZ)
&\succeq
\left(\bfI_n+\bfJ(\widetilde{\bfX})\right)^{-1}\\
&=
\frac{1}
{1+4j_{\nu,1}^2/(nR^2)}
\bfI_n.
\label{eq:vector_trial_bcrb}
\end{align}
Since \(\bfX_R^\star\) is least favorable, its scalar MMSE is at least
that of \(\widetilde{\bfX}\). Hence,
\begin{equation}
\mmse(\bfX_R^\star\mid\bfX_R^\star+\bfZ)
\ge
\frac{n}
{1+4j_{\nu,1}^2/(nR^2)}.
\end{equation}
Brown's identity then yields
\begin{align}
J(\bfX_R^\star+\bfZ)
&=
n-\mmse(\bfX_R^\star\mid\bfX_R^\star+\bfZ)\\
&\le
n-
\frac{n}
{1+4j_{\nu,1}^2/(nR^2)}\\
&=
U_n(R).
\label{eq:bounded_vector_fisher_upper}
\end{align}

On the other hand, Corollary~\ref{cor:fisher_finite_shells}, applied
with \(\epsilon=1\), \(A=R\), and \(K=K_n(R)\), gives
\begin{equation}
J(\bfX_R^\star+\bfZ)
\ge
a_n
\rme^{-\frac{2\xi_n}{n}}
\left(1+\sqrt{r_n}R\right)^{-\frac{2(n-1)}{n}}
K_n(R)^{-2/n}.
\label{eq:bounded_vector_fisher_lower}
\end{equation}
Combining \eqref{eq:bounded_vector_fisher_upper} and
\eqref{eq:bounded_vector_fisher_lower}, and solving for $K_n(R)$, proves \eqref{eq:bounded_shell_explicit}.

Finally,
\begin{equation}
U_n(R)=4j_{\nu,1}^2R^{-2}\left(1+o(1)\right)
\end{equation}
and
\begin{equation}
\left(1+\sqrt{r_n}R\right)^{n-1}
=r_n^{(n-1)/2}R^{n-1}\left(1+o(1)\right).
\end{equation}
Substituting these asymptotic expressions into
\eqref{eq:bounded_shell_explicit} gives
\begin{equation}
K_n(R)\ge\frac{a_n^{n/2}\exp(-\xi_n)}{
\left(4j_{\nu,1}^2\right)^{n/2}r_n^{(n-1)/2}}
R\left(1+o(1)\right),
\end{equation}
which proves \eqref{eq:bounded_shell_linear} and
\eqref{eq:def_cn_shell_growth}.
\end{IEEEproof}

\subsection{Upper Bounds on the MMSE}
\label{sec:mmse_upper_bounds}

The usual linear minimum mean-square error (LMMSE) bound depends only on
covariance.  If
$\bfK_{\bfX}=\operatorname{Cov}(\bfX)$, then
\begin{equation}
{\bf MMSE}(\bfX|\bfY_\gamma)
\preceq
\bfK_{\bfX}(\bfI_n+\gamma\bfK_{\bfX})^{-1}.
\label{eq:lmmse_application}
\end{equation}

Brown's identity in \eqref{eq:brown_identity} allows us to translate the Fisher-information bounds derived
in the preceding section into upper bounds on the MMSE. We summarize the
resulting bounds below.

\begin{itemize}[leftmargin=*]
\item For a discrete random vector $\bfX$, Corollary~\ref{cor:renyi_family}
and Brown's identity give, for every $p\in[0,2]$,
\begin{equation}
{\bf MMSE}(\bfX\mid\bfY_\gamma)
\preceq
\frac{1-\beta_p(\bfX)}{\gamma}\bfI_n.
\label{eq:upper_MMSE_betap}
\end{equation}
Unlike the LMMSE bound, \eqref{eq:upper_MMSE_betap} depends on the
probability masses of $\bfX$ rather than on the locations of its support
points.

\item The dimension-aware bound in Theorem~\ref{thm:dimension_aware} gives
the following trace bound:
\begin{equation}
\mmse(\bfX\mid\bfY_\gamma)
\le \frac{1}{\gamma} \cdot
\begin{cases}
\displaystyle
1-a_1
\rme^{-2I_0(\bfX;\gamma)}
,
& n=1, \\
\displaystyle
n-a_n
\rme^{
-\frac{2}{n}
I_{\frac{1}{r_n}}(\bfX;r_n\gamma)
}
,
& n\ge2.
\end{cases}
\label{eq:alpha_mi_bound}
\end{equation}
If $\bfX$ is discrete, \eqref{eq:alpha_mi_bound} further
implies
\begin{equation}
\mmse(\bfX\mid\bfY_\gamma)
\le
\frac{1}{\gamma}
\left[
n-a_n
\exp\left(
-\frac{2}{n}H_{r_n}(\bfX)
\right)
\right]. \label{eq:MMSE_inequality_entropy}
\end{equation}

\item Finally, for \(n\ge4\), Theorem~\ref{thm:gaussian_kernel_fisher}
and Brown's identity give
\begin{equation}
\mmse(\bfX\mid\bfY_\gamma)
\le
\frac{1}{\gamma}
\left[
n-a_n
\exp\left(
-\frac{1}{2}
I_{\frac12}\left(
\bfX;\frac{8\gamma}{n}
\right)
\right)
\right].
\label{eq:mmse_order_half}
\end{equation}

\end{itemize}

Thus, the bounds above capture different properties of the input
distribution: \eqref{eq:lmmse_application} depends on covariance,
\eqref{eq:alpha_mi_bound} and \eqref{eq:mmse_order_half} depend on
the probability masses.

The bound can also be combined with the I-MMSE relationship to provide new bounds on the mutual information.  For example, for 
$n\ge2$ and $p\in[0,2]$ let 
\begin{equation}
\Delta_{n,p}(\bfX)
:=
\max\left\{
 n\beta_p(\bfX),
 a_n \rme^{-2H_{r_n}(\bfX)/n}
\right\}.
\label{eq:def_Delta_np}
\end{equation}
the by using the MMSE inequality in \eqref{eq:MMSE_inequality_entropy} together with I-MMSE relationship \eqref{eq:I_MMSE_prelim} one can show that 
\begin{align}
I(\bfX;\bfY_\gamma)
\le
\frac12\inf_{0<\gamma_0\le\gamma}\Bigg\{
&\log\det(\bfI_n+\gamma_0\bfK_{\bfX})
+\big(n-\Delta_{n,p}(\bfX)\big)
\log\frac{\gamma}{\gamma_0}
\Bigg\},
\label{eq:mutual_information_new}
\end{align}
where as before $\bfY_\gamma=\sqrt\gamma\bfX+\bfZ$ and $\bfK_{\bfX}$ is a covariance matrix of $\bfX$. The bound in \eqref{eq:mutual_information_new} is tighter than a classical $\frac12\log\det(\bfI_n+\gamma\bfK_{\bfX})$ for low and intermediate values of $\gamma$.

\section{Discussion and Conclusions}\label{sec:discussion}
This work develops several lower bounds on the Fisher information of Gaussian location mixtures. For discrete mixing distributions, the universal matrix inequality provides a lower bound in positive-semidefinite order in terms of the R\'enyi entropy of the mixing distribution. When the mixing distribution is supported on  $K$ points, the resulting bound scales as $\frac{\log K}{K^2}$ and therefore improves by a logarithmic factor on the support-size dependence obtained from Stam's inequality.

The dimension-aware inequalities provide complementary trace bounds that exploit the ambient dimension. These bounds are expressed in terms of Sibson's $\alpha$-mutual information and, for a $K$-point mixing distribution, scale as $K^{-2/n}$.

The inequalities are also specialized to rotationally invariant distributions supported on discrete spherical shells. An intrinsic-dimensional bound on Sibson mutual information separates the discrete radial complexity from the continuous angular component. When combined with the dimension-aware Fisher-information inequality, this gives a lower bound on the number of occupied shells. As an application, the number of shells in the least favorable prior for the vector bounded Gaussian mean problem is shown to grow at least linearly with the radius constraint. In the scalar case, the universal Fisher-information inequality yields an $\Omega(R\sqrt{\log R})$ lower bound on the number of mass points and an $O(1/R)$ upper bound on each probability mass. The Fisher-information inequalities also lead, through Brown's identity and the I-MMSE relationship, to new upper bounds on the MMSE and mutual information.

Several questions remain open. For example, the linear shell-count lower bound may not describe the exact growth rate of the number of shells in the vector bounded Gaussian mean problem. Determining this growth rate and obtaining sharper information about the locations and probabilities of the occupied shells are natural directions for future work.

\appendices

\section{Proof of \eqref{eq:sibson_half_gaussian_kernel} in Proposition~\ref{prop:sibson_properties}}
\label{app:proof_prop}

Setting $\alpha=\frac12$ in
\eqref{eq:sibson_renyi_entropy_representation} gives
\begin{equation}
I_{\frac12}(\bfX;\epsilon)
=
h_2\left(
\bfX+\sqrt{\frac{2}{\epsilon}}\bfZ
\right)
-
h_2\left(
\sqrt{\frac{2}{\epsilon}}\bfZ
\right).
\label{eq:sibson_half_entropy_difference}
\end{equation}
Let
\begin{equation}
\sigma^2=\frac{2}{\epsilon}
\end{equation}
and denote by $f_{\bfY}$ the density of
\begin{equation}
\bfY=\bfX+\sigma\bfZ.
\end{equation}
By the definition of R\'enyi entropy of order two,
\begin{equation}
h_2(\bfY)
=
-\log\int_{\bbR^n}f_{\bfY}(\bfy)^2\,\rmd\bfy.
\label{eq:h2_collision_density}
\end{equation}
Since
\begin{equation}
f_{\bfY}(\bfy)
=
\bbE\left[\phi_\sigma(\bfy-\bfX)\right],
\end{equation}
Fubini's theorem gives
\begin{align}
\int_{\bbR^n}f_{\bfY}(\bfy)^2\,\rmd\bfy
&=
\bbE\left[
\int_{\bbR^n}
\phi_\sigma(\bfy-\bfX)
\phi_\sigma(\bfy-\bfX')
\,\rmd\bfy
\right]
\\
&=
(4\pi\sigma^2)^{-n/2}
\bbE\left[
\exp\left(
-\frac{\|\bfX-\bfX'\|^2}{4\sigma^2}
\right)
\right],
\label{eq:gaussian_collision_identity}
\end{align}
where $\bfX'$ is an independent copy of $\bfX$. On the other hand,
\begin{equation}
\int_{\bbR^n}\phi_\sigma(\bfy)^2\,\rmd\bfy
=
(4\pi\sigma^2)^{-n/2}.
\label{eq:gaussian_collision_normalizer}
\end{equation}
Substituting \eqref{eq:gaussian_collision_identity} and
\eqref{eq:gaussian_collision_normalizer} into
\eqref{eq:sibson_half_entropy_difference}, and using
$\sigma^2=2/\epsilon$, yields
\begin{equation}
I_{\frac12}(\bfX;\epsilon)
=
-\log
\bbE\left[
\exp\left(
-\frac{\epsilon}{8}
\|\bfX-\bfX'\|^2
\right)
\right],
\end{equation}
which proves \eqref{eq:sibson_half_gaussian_kernel}.

\section{Projection Proof of Theorem~\ref{thm:universal_fisher_bound}}
\label{app:vector_projection}
It remains to justify the vector extension at $\epsilon=1$.  Define
\begin{equation}
b(H)= \rme^{-2H}\max\left\{\frac{2}{\pi},c_{\rm F}H\right\}.
\end{equation}
Write $\supp(P_{\bfX})=\{\bfx_j\}_{j\ge1}$.  Let $\bfu$ be a unit vector such
that
\begin{equation}
\bfu^T\bfx_i\ne\bfu^T\bfx_j,
\qquad i\ne j.
\end{equation}
Then $\bfu^T\bfX$ has exactly the same probability masses as $\bfX$, and
therefore
\begin{equation}
H_2(\bfu^T\bfX)=H_2(\bfX).
\end{equation}
Moreover, define $Z_{\bfu}:=\bfu^T\bfZ$.  Then
\begin{equation}
\bfu^T(\bfX+\bfZ)=\bfu^T\bfX+Z_{\bfu},
\qquad Z_{\bfu}\sim\cN(0,1).
\end{equation}
The scalar part of Theorem~\ref{thm:universal_fisher_bound} gives
\begin{equation}
J(\bfu^T(\bfX+\bfZ))\ge b(H_2(\bfX)).
\end{equation}
By the projection data-processing inequality in Lemma~\ref{lem:projection_fisher_DPI},
\begin{equation}
J(\bfu^T(\bfX+\bfZ))
\le
\bfu^T\bfJ(\bfX+\bfZ)\bfu.
\end{equation}
Hence
\begin{equation}
\bfu^T\bfJ(\bfX+\bfZ)\bfu
\ge b(H_2(\bfX))
\label{eq:generic_direction_app}
\end{equation}
for every injective direction.

For each pair $i\ne j$, define the set of exceptional directions
\begin{equation}
\mathcal E_{ij}
:=
\left\{
\bfu\in\mathbb S^{n-1}:
\bfu^T(\bfx_i-\bfx_j)=0
\right\}.
\end{equation}
Since $\bfx_i\ne\bfx_j$, the set $\mathcal E_{ij}$ is the intersection
of $\mathbb S^{n-1}$ with the hyperplane orthogonal to
$\bfx_i-\bfx_j$.  It is therefore a lower-dimensional great sphere and
has zero surface measure.  Because $\supp(P_{\bfX})$ is at most
countable, the full exceptional set
\begin{equation}
\mathcal E
:=
\bigcup_{i\ne j}\mathcal E_{ij}
\end{equation}
is a countable union of sets of zero surface measure and hence also has
zero surface measure.  Moreover, every nonempty open subset of
$\mathbb S^{n-1}$ has positive surface measure.  It follows that
$\mathbb S^{n-1}\setminus\mathcal E$ is dense in
$\mathbb S^{n-1}$.

Every $\bfu\in\mathbb S^{n-1}\setminus\mathcal E$ defines an injective
projection on $\supp(P_{\bfX})$, since
\begin{equation}
\bfu^T\bfx_i\ne\bfu^T\bfx_j,
\qquad i\ne j.
\end{equation}
Now fix an arbitrary unit vector $\bfu$.  By density, there exists a
sequence $\{\bfu_k\}_{k\ge1}\subseteq
\mathbb S^{n-1}\setminus\mathcal E$ such that $\bfu_k\to\bfu$.
Applying \eqref{eq:generic_direction_app} to each $\bfu_k$ gives
\begin{equation}
\bfu_k^T\bfJ(\bfX+\bfZ)\bfu_k
\ge b(H_2(\bfX)).
\end{equation}
Since the quadratic form
$\bfv\mapsto\bfv^T\bfJ(\bfX+\bfZ)\bfv$ is continuous, letting
$k\to\infty$ yields
\begin{equation}
\bfu^T\bfJ(\bfX+\bfZ)\bfu
\ge b(H_2(\bfX)).
\end{equation}
This holds for every unit vector $\bfu$, and is therefore equivalent to
the desired bound
\begin{equation}
\bfJ(\bfX+\bfZ)\succeq b(H_2(\bfX))\bfI_n.
\end{equation}

This concludes the proof of the extension to the vector case.

\section{Properties of the Constant $a_n$}
\label{app:properties_an}

Now suppose that $n\ge3$. Applying the sharp Sobolev inequality
\eqref{eq:sharp_sobolev_prelim} to
\begin{equation}
g=\sqrt{\phi_1},
\end{equation}
where $\phi_1$ is the density of $\cN(0,\bfI_n)$, and using
\begin{equation}
J(\bfZ)=4\|\nabla\sqrt{\phi_1}\|_2^2,
\end{equation}
gives
\begin{equation}
J(\bfZ)
\ge
n(n-2)|\mathbb S^n|^{2/n}
\left(
\int_{\bbR^n}\phi_1^{r_n}
\right)^{1/r_n}.
\label{eq:an_gaussian_sobolev}
\end{equation}
Since
\begin{equation}
J(\bfZ)=n
\end{equation}
and
\begin{equation}
\left(\int_{\bbR^n}\phi_1^{r_n}\right)^{1/r_n}
=
\frac{1}{2\pi}
\left(\frac{n-2}{n}\right)^{(n-2)/2},
\end{equation}
the right-hand side of \eqref{eq:an_gaussian_sobolev} is precisely
$a_n$.  Therefore,
\begin{equation}
a_n< n,
\end{equation}
where the strictness follows from the fact that a Gaussian is not an extremizer of the sharp Sobolev inequality for $n\ge 3$.

It remains to determine the large-dimensional behavior of $a_n$.
Using
\begin{equation}
|\mathbb S^n|
=
\frac{2\pi^{(n+1)/2}}
{\Gamma\left(\frac{n+1}{2}\right)},
\label{eq:sphere_area_an_appendix}
\end{equation}
we may write, for $n\ge3$,
\begin{equation}
\frac{a_n}{n}
=
\frac{n-2}{2\pi}
\left(
\frac{2\pi^{(n+1)/2}}
{\Gamma\left(\frac{n+1}{2}\right)}
\right)^{2/n}
\left(1-\frac{2}{n}\right)^{(n-2)/2}.
\label{eq:an_ratio_exact}
\end{equation}
Stirling's expansion
\begin{equation}
\log\Gamma(x)
=
\left(x-\frac12\right)\log x-x
+\frac12\log(2\pi)
+O\left(\frac1x\right)
\end{equation}
and the Taylor expansion
\begin{equation}
\log\left(1-\frac{2}{n}\right)
=
-\frac{2}{n}-\frac{2}{n^2}
+O\left(\frac{1}{n^3}\right)
\end{equation}
applied to \eqref{eq:an_ratio_exact} yield
\begin{equation}
\log\frac{a_n}{n}
=
-\frac{1-\log 2}{n}
+
O\left(\frac{\log n}{n^2}\right).
\label{eq:an_log_asymptotic}
\end{equation}
Exponentiating \eqref{eq:an_log_asymptotic} gives
\begin{equation}
\frac{a_n}{n}
=
1-\frac{1-\log 2}{n}
+
O\left(\frac{\log n}{n^2}\right).
\label{eq:an_ratio_asymptotic}
\end{equation}
Consequently,
\begin{equation}
\frac{a_n}{n} \to 1
\end{equation}
and
\begin{equation}
n-a_n \to  1-\log 2.
\end{equation}
Finally, \eqref{eq:an_log_asymptotic} also implies
\begin{equation}
\frac{n}{2}\log\frac{n}{a_n}
\to 
\frac{1-\log 2}{2},
\end{equation}
which completes the proof.

\section{Proof of Lemma~\ref{lem:sphere_partition}}
\label{app:sphere_partition}
Let $\bfu_1,\ldots,\bfu_N$ be a maximal $\delta/2$-separated subset of
$\mathbb S^{n-1}$. The spherical caps
\begin{equation}
\left\{
\bfu\in\mathbb S^{n-1}:
\|\bfu-\bfu_k\|<\frac{\delta}{4}
\right\},
\qquad k=1,\ldots,N,
\end{equation}
are pairwise disjoint. The geodesic radius of each cap is
\begin{equation}
\theta_\delta
=
2\arcsin\left(\frac{\delta}{8}\right)
\ge\frac{\delta}{4}.
\end{equation}
Since $\delta\le1$ and
$\sin\theta\ge2\theta/\pi$ for $\theta\in[0,\pi/2]$, the surface area
of each cap is bounded below by
\begin{align}
|\mathbb S^{n-2}|
\int_0^{\theta_\delta}\sin^{n-2}\theta\,\rmd\theta
&\ge
|\mathbb S^{n-2}|
\int_0^{\delta/4}
\left(\frac{2\theta}{\pi}\right)^{n-2}\rmd\theta\\
&=
\frac{|\mathbb S^{n-2}|}{n-1}
\left(\frac{2}{\pi}\right)^{n-2}
\left(\frac{\delta}{4}\right)^{n-1}.
\end{align}
The disjointness of the caps and
\eqref{eq:def_sphere_covering_constant} therefore imply
\begin{equation}
N\le M_n\delta^{-(n-1)}.
\end{equation}

By maximality, the Euclidean balls of radius $\delta/2$ centered at
$\bfu_1,\ldots,\bfu_N$ cover $\mathbb S^{n-1}$. Assign each point of
the sphere to a nearest center, breaking ties according to the smallest
index, and let $\mathcal C_k$ be the resulting Voronoi cells. These
cells form a measurable partition and satisfy
\begin{equation}
\|\bfu-\bfu_k\|\le\frac{\delta}{2},
\qquad \bfu\in\mathcal C_k.
\end{equation}
Consequently, for every $\bfu,\bfv\in\mathcal C_k$,
\begin{equation}
\|\bfu-\bfv\|
\le
\|\bfu-\bfu_k\|+\|\bfv-\bfu_k\|
\le\delta,
\end{equation}
which proves \eqref{eq:sphere_partition_diameter}.

\bibliographystyle{IEEEtran}
\bibliography{refs_reorganized}

@article{stam1959some,
  author  = {A. J. Stam},
  title   = {Some Inequalities Satisfied by the Quantities of Information of {F}isher and {S}hannon},
  journal = {Information and Control},
  volume  = {2},
  number  = {2},
  pages   = {101--112},
  year    = {1959},
  doi     = {10.1016/S0019-9958(59)90348-1}
}

@article{zhang2007inequalities,
  author  = {Zhengmin Zhang},
  title   = {Inequalities for Characteristic Functions Involving {F}isher Information},
  journal = {C. R. Math. Acad. Sci. Paris},
  volume  = {344},
  number  = {5},
  pages   = {327--330},
  year    = {2007},
  doi     = {10.1016/j.crma.2007.01.008}
}

@article{eskenazis2024entropy,
  author  = {Alexandros Eskenazis and Lampros Gavalakis},
  title   = {On the Entropy and Information of {G}aussian Mixtures},
  journal = {Mathematika},
  volume  = {70},
  number  = {2},
  pages   = {e12246},
  year    = {2024},
  doi     = {10.1112/mtk.12246}
}

@article{aubin1976problemes,
  author  = {Thierry Aubin},
  title   = {Probl\`emes Isop\'erim\'etriques et Espaces de {S}obolev},
  journal = {J. Differential Geom.},
  volume  = {11},
  number  = {4},
  pages   = {573--598},
  year    = {1976},
  doi     = {10.4310/JDG/1214433725}
}

@article{talenti1976best,
  author  = {Giorgio Talenti},
  title   = {Best Constant in {S}obolev Inequality},
  journal = {Ann. Mat. Pura Appl.},
  volume  = {110},
  number  = {1},
  pages   = {353--372},
  year    = {1976},
  doi     = {10.1007/BF02418013}
}

@article{yamano2012bounds,
  author  = {Takuya Yamano},
  title   = {Bounds for {F}isher Information and Its Production under Flow},
  journal = {J. Math. Phys.},
  volume  = {53},
  number  = {4},
  pages   = {043301},
  year    = {2012},
  doi     = {10.1063/1.3700757}
}

@article{beckner1975inequalities,
  author  = {William Beckner},
  title   = {Inequalities in {F}ourier Analysis},
  journal = {Ann. of Math.},
  volume  = {102},
  number  = {1},
  pages   = {159--182},
  year    = {1975}
}

@article{brascamp1976best,
  author  = {Herm Jan Brascamp and Elliott H. Lieb},
  title   = {Best Constants in {Y}oung's Inequality, Its Converse, and Its Generalization to More than Three Functions},
  journal = {Adv. Math.},
  volume  = {20},
  number  = {2},
  pages   = {151--173},
  year    = {1976},
  doi     = {10.1016/0001-8708(76)90184-5}
}

@article{dytso2023conditional,
  author  = {Alex Dytso and H. Vincent Poor and Shlomo Shamai (Shitz)},
  title   = {Conditional Mean Estimation in {G}aussian Noise: A Meta Derivative Identity with Applications},
  journal = {IEEE Trans. Inf. Theory},
  volume  = {69},
  number  = {3},
  pages   = {1883--1898},
  year    = {2023},
  doi     = {10.1109/TIT.2022.3216012}
}

@article{guo2005mutual,
  author  = {Dongning Guo and Shlomo Shamai (Shitz) and Sergio Verd\'u},
  title   = {Mutual Information and Minimum Mean-Square Error in {G}aussian Channels},
  journal = {IEEE Trans. Inf. Theory},
  volume  = {51},
  number  = {4},
  pages   = {1261--1282},
  year    = {2005},
  doi     = {10.1109/TIT.2005.844072}
}

@incollection{marchand2004review,
  author    = {\'Eric Marchand and William E. Strawderman},
  title     = {Estimation in Restricted Parameter Spaces: A Review},
  booktitle = {A Festschrift for Herman Rubin},
  series    = {IMS Lecture Notes--Monograph Series},
  volume    = {45},
  pages     = {21--44},
  publisher = {Institute of Mathematical Statistics},
  year      = {2004},
  doi       = {10.1214/lnms/1196285377}
}

@article{casella1981estimating,
  author  = {George Casella and William E. Strawderman},
  title   = {Estimating a Bounded Normal Mean},
  journal = {Ann. Statist.},
  volume  = {9},
  number  = {4},
  pages   = {870--878},
  year    = {1981},
  doi     = {10.1214/aos/1176345527}
}

@article{bickel1981minimax,
  author  = {Peter J. Bickel},
  title   = {Minimax Estimation of the Mean of a Normal Distribution when the Parameter Space Is Restricted},
  journal = {Ann. Statist.},
  volume  = {9},
  number  = {6},
  pages   = {1301--1309},
  year    = {1981},
  doi     = {10.1214/aos/1176345646}
}

@inproceedings{yagli2019estimation,
  author    = {Semih Yagli and Alex Dytso and H. Vincent Poor},
  title     = {Estimation of Bounded Normal Mean: An Alternative Proof for the Discreteness of the Least Favorable Prior},
  booktitle = {2019 IEEE Information Theory Workshop (ITW)},
  pages     = {1--5},
  year      = {2019},
  doi       = {10.1109/ITW44776.2019.8988927}
}

@inproceedings{dytso2018structure,
  author    = {Alex Dytso and Ronit Bustin and H. Vincent Poor and Shlomo Shamai (Shitz)},
  title     = {On the Structure of the Least Favorable Prior Distributions},
  booktitle = {2018 IEEE International Symposium on Information Theory (ISIT)},
  pages     = {1081--1085},
  year      = {2018},
  doi       = {10.1109/ISIT.2018.8437546}
}

@book{vantrees2001detection,
  author    = {Harry L. Van Trees},
  title     = {Detection, Estimation, and Modulation Theory, Part I},
  edition   = {2nd},
  publisher = {Wiley},
  address   = {New York},
  year      = {2001}
}

@article{perezcruz2010mimo,
  title={{MIMO} {G}aussian Channels With Arbitrary Inputs: Optimal Precoding and Power Allocation},
  author={P{\'e}rez-Cruz, Fernando and Rodrigues, Miguel R. D. and Verd{\'u}, Sergio},
  journal={IEEE Trans. Inf. Theory},
  volume={56},
  number={3},
  pages={1070--1084},
  year={2010},
  doi={10.1109/TIT.2009.2039045}
}

@article{alvarado2014high,
  title={High-{SNR} Asymptotics of Mutual Information for Discrete Constellations With Applications to {BICM}},
  author={Alvarado, Alex and Brannstrom, Fredrik and Agrell, Erik and Koch, Tobias},
  journal={IEEE Trans. Inf. Theory},
  volume={60},
  number={2},
  pages={1061--1076},
  year={2014},
  doi={10.1109/TIT.2013.2291865}
}

@article{alvarado2018asymptotic,
  title={Asymptotic Comparison of {ML} and {MAP} Detectors for Multidimensional Constellations},
  author={Alvarado, Alex and Agrell, Erik and Brannstrom, Fredrik},
  journal={IEEE Trans. Inf. Theory},
  volume={64},
  number={2},
  pages={1231--1240},
  year={2018},
  doi={10.1109/TIT.2017.2727521}
}

@article{berry1990minimax,
  title={Minimax estimation of a bounded normal mean vector},
  author={Berry, J Calvin},
  journal={Journal of Multivariate Analysis},
  volume={35},
  number={1},
  pages={130--139},
  year={1990},
  publisher={Elsevier}
}

@ARTICLE{vector_Gausss,
  author={Dytso, Alex and Al, Mert and Poor, H. Vincent and Shamai Shitz, Shlomo},
  journal={IEEE Transactions on Information Theory}, 
  title={On the Capacity of the Peak Power Constrained Vector {G}aussian Channel: {A}n Estimation Theoretic Perspective}, 
  year={2019},
  volume={65},
  number={6},
  pages={3907-3921},
  doi={10.1109/TIT.2018.2890208}}

@article{marchand2002minimax,
  title={On the minimax estimator of a bounded normal mean},
  author={Marchand, {\'E}ric and Perron, Fran{\c{c}}ois},
  journal={Statistics \& probability letters},
  volume={58},
  number={4},
  pages={327--333},
  year={2002},
  publisher={Elsevier}
}

@article{zamir1998proof,
  title={A proof of the {F}isher information inequality via a data processing argument},
  author={Zamir, Ram},
  journal={IEEE Trans. Inf. Theory},
  volume={44},
  number={3},
  pages={1246--1250},
  year={1998},
  publisher={IEEE}
}

@book{ladyzhenskaya1969mathematical,
	author    = {Ladyzhenskaya, Olga A.},
	title     = {The Mathematical Theory of Viscous Incompressible Flow},
	edition   = {2},
	publisher = {Gordon and Breach},
	address   = {New York},
	year      = {1969}
}

@article{bobkov2014fisher,
  title={Fisher information and the central limit theorem},
  author={Bobkov, Sergey G and Chistyakov, Gennadiy P and G{\"o}tze, Friedrich},
  journal={Probability theory and related fields},
  volume={159},
  number={1},
  pages={1--59},
  year={2014},
  publisher={Springer}
}

@incollection{bobkov2023entropic,
  title={Entropic isoperimetric inequalities},
  author={Bobkov, Sergey G and Roberto, Cyril},
  booktitle={High Dimensional Probability IX: The Ethereal Volume},
  pages={97--121},
  year={2023},
  publisher={Springer}
}

@inproceedings{verdu2015alpha,
  title={$\alpha$-mutual information},
  author={Verd{\'u}, Sergio},
  booktitle={2015 Information Theory and Applications Workshop (ITA)},
  pages={1--6},
  year={2015},
  organization={IEEE}
}

@article{esposito2025sibson,
  title={Sibson $\alpha$-mutual information and its variational representations},
  author={Esposito, Amedeo Roberto and Gastpar, Michael and Issa, Ibrahim},
  journal={IEEE Transactions on Information Theory},
  volume={72},
  number={7},
  pages={4434--4467},
  year={2025},
  publisher={IEEE}
}

@article{milanian2026alpha,
  title={$\alpha$-Mutual Information for the {G}aussian Noise Channel},
  author={Milanian, Mohammad and Dytso, Alex and Cardone, Martina},
  journal={arXiv preprint arXiv:2604.10922},
  year={2026}
}

@article{sibson1969information,
  title={Information radius},
  author={Sibson, Robin},
  journal={Zeitschrift f{\"u}r Wahrscheinlichkeitstheorie und verwandte Gebiete},
  volume={14},
  number={2},
  pages={149--160},
  year={1969},
  publisher={Springer}
}

@article{wu2011functional,
  title={Functional properties of minimum mean-square error and mutual information},
  author={Wu, Yihong and Verd{\'u}, Sergio},
  journal={IEEE Trans. Inf. Theory},
  volume={58},
  number={3},
  pages={1289--1301},
  year={2011},
  publisher={IEEE}
}

@article{bercher2009minimum,
  title={On minimum {F}isher information distributions with restricted support and fixed variance},
  author={Bercher, J-F and Vignat, Christophe},
  journal={Information Sciences},
  volume={179},
  number={22},
  pages={3832--3842},
  year={2009},
  publisher={Elsevier}
}

@ARTICLE{Wang_AMplitude,
  author={Wang, Haiyang and Barletta, Luca and Dytso, Alex},
  journal={IEEE Transactions on Information Theory}, 
  title={An Improved Lower Bound on Cardinality of Support of the Amplitude-Constrained AWGN Channel}, 
  year={2026},
  volume={72},
  number={8},
  pages={5392-5399},
  doi={10.1109/TIT.2026.3697668}}

@article{gigante2017diameter,
  author  = {Gigante, Giacomo and Leopardi, Paul},
  title   = {Diameter Bounded Equal Measure Partitions of {A}hlfors Regular Metric Measure Spaces},
  journal = {Discrete \& Computational Geometry},
  volume  = {57},
  number  = {2},
  pages   = {419--430},
  year    = {2017},
  doi     = {10.1007/s00454-016-9834-y}
}

@article{dehesa2012generalized,
  author  = {Dehesa, Jes{\'u}s S. and Plastino, Angel R. and
             S{\'a}nchez-Moreno, Pablo and Vignat, Christophe},
  title   = {Generalized {C}ram{\'e}r--{R}ao Relations for
             Non-Relativistic Quantum Systems},
  journal = {Applied Mathematics Letters},
  volume  = {25},
  number  = {11},
  pages   = {1689--1694},
  year    = {2012},
  doi     = {10.1016/j.aml.2012.01.038}
}

@inproceedings{dytso2024uniform,
  title={Uniform Distribution on ($ n-1$)-Sphere: {R}ate-Distortion Under Squared Error Distortion},
  author={Dytso, Alex and Cardone, Martina},
  booktitle={2024 IEEE International Symposium on Information Theory (ISIT)},
  pages={873--878},
  year={2024},
  organization={IEEE}
}

@article{marchand2004estimation,
  title={Estimation in restricted parameter spaces: {A} review},
  author={Marchand, Eric and Strawderman, William E},
  journal={Lecture notes-monograph series},
  pages={21--44},
  year={2004},
  publisher={JSTOR}
}

@article{ismail1995bounds,
  author  = {Ismail, Mourad E. H. and Muldoon, Martin E.},
  title   = {Bounds for the Small Real and Purely Imaginary Zeros of
             {B}essel and Related Functions},
  journal = {Methods and Applications of Analysis},
  volume  = {2},
  number  = {1},
  pages   = {1--21},
  year    = {1995},
  doi     = {10.4310/MAA.1995.v2.n1.a1}
}

@article{qu1999best,
  author  = {Qu, Changzheng and Wong, Roderick},
  title   = {{``Best Possible'' Upper and Lower Bounds for the Zeros
             of the {B}essel Function $J_\nu(x)$}},
  journal = {Transactions of the American Mathematical Society},
  volume  = {351},
  number  = {7},
  pages   = {2833--2859},
  year    = {1999},
  doi     = {10.1090/S0002-9947-99-02165-0}
}

@article{wu2025entropic,
  title={Entropic isoperimetric and {C}ram{\'e}r--{R}ao inequalities for {R}{\'e}nyi--{F}isher information},
  author={Wu, Hao and Yu, Lei},
  journal={IEEE Transactions on Information Theory},
  year={2025},
  publisher={IEEE}
}

@article{raginsky2013concentration,
  title={Concentration of measure inequalities in information theory, communications, and coding},
  author={Raginsky, Maxim and Sason, Igal},
  journal={Foundations and Trends in Communications and Information Theory},
  volume={10},
  number={1-2},
  pages={1--247},
  year={2013},
  publisher={Emerald Publishing Limited}
}

@article{dulek2017characteristic,
  title={Characteristic function-based lower bounds for {F}isher information under arbitrary parametrization},
  author={Dulek, Berkan},
  journal={IEEE Transactions on Aerospace and Electronic Systems},
  volume={53},
  number={1},
  pages={501--512},
  year={2017},
  publisher={IEEE}
}

@inproceedings{wu2010impact,
  title={The impact of constellation cardinality on {G}aussian channel capacity},
  author={Wu, Yihong and Verd{\'u}, Sergio},
  booktitle={2010 48th Annual Allerton Conference on Communication, Control, and Computing (Allerton)},
  pages={620--628},
  year={2010},
  organization={IEEE}
}

@article{dytso2016interference,
  title={Interference as noise: Friend or foe?},
  author={Dytso, Alex and Tuninetti, Daniela and Devroye, Natasha},
  journal={IEEE Transactions on Information Theory},
  volume={62},
  number={6},
  pages={3561--3596},
  year={2016},
  publisher={IEEE}
}

@ARTICLE{CodebookPaper,
  author={Dytso, Alex and Tuninetti, Daniela and Devroye, Natasha},
  journal={IEEE Transactions on Information Theory}, 
  title={On the Two-User Interference Channel With Lack of Knowledge of the Interference Codebook at One Receiver}, 
  year={2015},
  volume={61},
  number={3},
  pages={1257-1276},
  doi={10.1109/TIT.2015.2388481}}

@article{selim2015modeling,
  title={Modeling and analysis of wireless channels via the mixture of {G}aussian distribution},
  author={Selim, Bassant and Alhussein, Omar and Muhaidat, Sami and Karagiannidis, George K and Liang, Jie},
  journal={IEEE Transactions on Vehicular technology},
  volume={65},
  number={10},
  pages={8309--8321},
  year={2015},
  publisher={IEEE}
}

@ARTICLE{GaussianBoundsCard,
  author={Dytso, Alex and Yagli, Semih and Poor, H. Vincent and {Shamai (Shitz)}, Shlomo},
  journal=TransIT,  
  title={The Capacity Achieving Distribution for the Amplitude Constrained Additive {G}aussian Channel: {A}n Upper Bound on the Number of Mass Points}, 
  year={2020},
  volume={66},
  number={4},
  pages={2006-2022},
  doi={10.1109/TIT.2019.2948636}}

@article{melbourne2022differential,
  title={The differential entropy of mixtures: {N}ew bounds and applications},
  author={Melbourne, James and Talukdar, Saurav and Bhaban, Shreyas and Madiman, Mokshay and Salapaka, Murti V},
  journal={IEEE Transactions on Information Theory},
  volume={68},
  number={4},
  pages={2123--2146},
  year={2022},
  publisher={IEEE}
}

@ARTICLE{gMixturesKhandani,
  author={Moshksar, Kamyar and Khandani, Amir K.},
  journal={IEEE Transactions on Information Theory}, 
  title={Arbitrarily Tight Bounds on Differential Entropy of {G}aussian Mixtures}, 
  year={2016},
  volume={62},
  number={6},
  pages={3340-3354},
  doi={10.1109/TIT.2016.2553147}}

@ARTICLE{BestApporximationGaussianMixture,
  author={Ma, Yun and Wu, Yihong and Yang, Pengkun},
  journal=TransIT, 
  title={On the Best Approximation by Finite {G}aussian Mixtures}, 
  year={2025},
  volume={71},
  number={7},
  pages={5469-5492},
  doi={10.1109/TIT.2025.3558841}}

@article{tepedelenlioglu2010inequalities,
  title={On inequalities relating the characteristic function and {F}isher information},
  author={Tepedelenlioglu, Cihan and Banavar, Mahesh K and Spanias, Andreas},
  journal={arXiv preprint arXiv:1007.1483},
  year={2010}
}

@article{lutwak2012extensions,
  title={Extensions of {F}isher information and {S}tam's inequality},
  author={Lutwak, Erwin and Lv, Songjun and Yang, Deane and Zhang, Gaoyong},
  journal={IEEE Trans. Inf. Theory},
  volume={58},
  number={3},
  pages={1319--1327},
  year={2012},
  publisher={IEEE}
}

@article{olea2026approximate,
  title={Approximate Minimax Estimation of a Bounded Normal Mean via Stochastic Mirror Ascent},
  author={Olea, Jos{\'e} Luis Montiel and Zubova, Ekaterina},
  journal={arXiv preprint arXiv:2607.05350},
  year={2026}
}

\end{document}